\newtheorem{theorem}{Theorem}[section]
\newtheorem{algorithm}[theorem]{Algorithm}
\newtheorem{corollary}[theorem]{Corollary}
\newtheorem{examples}[theorem]{Examples}
\newtheorem{remark}[theorem]{Remark}
\newtheorem{lemma}[theorem]{Lemma}
\newtheorem{definition}[theorem]{Definition}
\numberwithin{equation}{section}
\begin{document}

\title{Sinkhorn-Knopp Theorem for rectangular positive maps}

\author[Cariello ]{D. Cariello}

\address{Faculdade de Matem\'atica, \newline\indent Universidade Federal de Uberl\^{a}ndia, \newline\indent 38.400-902 Ð Uberl\^{a}ndia, Brazil.}\email{dcariello@ufu.br}

\keywords{}

\subjclass[2010]{}

\maketitle

\begin{abstract} 
In this work, we adapt Sinkhorn-Knopp theorem for rectangular positive maps  $(T:M_k\rightarrow M_m)$. We extend their concepts of support and total support  to these maps. We show that a positive map $T:M_k\rightarrow M_m$ is equivalent to a doubly stochastic map if and only if $T:M_k\rightarrow M_m$ is equivalent to a positive map with total support. 
 Moreover, if $k$ and $m$ are coprime then $T:M_k\rightarrow M_m$ is equivalent to a doubly stochastic map if and only if $T:M_k\rightarrow M_m$ has support. 

This result provides a necessary and sufficient condition for the filter normal form, which is commonly used in Quantum Information Theory in order to simplify the task of detecting entanglement. 
Let $A=\sum_{i=1}^nA_i\otimes B_i\in M_k\otimes M_m$ be a state and $G_A: M_k\rightarrow M_m$ be the positive map $G_A(X)=\sum_{i=1}^nB_itr(A_iX)$. We show that $A$  can be put in the filter normal form if and only if $G_A: M_k\rightarrow M_m$ is equivalent to a positive map with total support.
We prove that any state $A\in M_k\otimes M_m\simeq M_{km}$ such that $\dim(\ker(A))<k-1$, if $k=m$, and $\dim(\ker(A))<\min\{k,m\}$, if $k\neq m$, can be put in the filter normal form. 

Recently, a connection between the capacity of a rectangular positive map $T:M_k\rightarrow M_m$ and the capacity of a certain square positive map $\widetilde{T}:M_{mk}\rightarrow M_{mk}$ was noticed. Here, we obtain a deeper connection between these maps. As a corollary of our main results, we prove that $T:M_k\rightarrow M_m$ is equivalent to a doubly stochastic map if and only if  $\widetilde{T}:M_{mk}\rightarrow M_{mk}$ is equivalent to a doubly stochastic map.



\end{abstract}

\section*{Introduction}

The Sinkhorn-Knopp theorem states that there are positive diagonal matrices $D_1,D_2$ such that $D_1MD_2$ is doubly stochastic if and only if the square matrix $M$ with non-negative entries has total support. In \cite{Sinkhorn}, the authors provide an iterative algorithm in order to obtain the doubly stochastic matrix from the original matrix. The convergence of this algorithm  was proved, whenever the original matrix has support.

There are generalisations of Sinkhorn-Knopp theorem \cite{Bapat, Brualdi, Git, Georgiou, gurvits2003, gurvits2004, filternormalform, leinaas, Sinkhorn2}.
 One of them is particularly important for Quantum Information Theory, it is the so-called filter normal form (see \cite[section IV.D]{Git} and \cite{filternormalform,leinaas}). The aim of this normal form is to put a mixed state in a certain format in order to simplify the task of detecting entanglement. Let $M_k$ and $P_k$ denote the set of complex matrices of order $k$ and the set of positive semidefinite Hermitian matrices of order $k$, respectively. In \cite{leinaas}, the authors used the filter normal form to prove the separability of every positive definite PPT state in $M_2\otimes M_2$. By a continuity argument, it implies that every PPT state in $M_2\otimes M_2$ is separable.  In \cite{Git}, it was used to unify several separability conditions. Furthermore, there are some inequalities that grant separability that are sharp or can be sharpen using the filter normal form \cite[Remark 64]{cariello1} .

  Originally published in \cite{filternormalform}, the filter normal form was only obtained  for positive definite mixed states. Here, we obtain a necessary and sufficient condition for an arbitrary mixed state in $M_k\otimes M_m$ to be put in this normal form.
 
Let us identify  the tensor product space $M_{k}\otimes M_{m}$ with $M_{km}$, via Kronecker product. Let a linear map $T:M_k\rightarrow M_m$ be  positive, if it sends positive semidefinite Hermitian matrices to positive semidefinite Hermitian matrices $($i.e., $T(P_k)\subset P_m)$.  A positive map $T:M_k\rightarrow M_m$ is called doubly stochastic if $T(\frac{Id}{\sqrt{k}})=\frac{Id}{\sqrt{m}}$ and $T^*(\frac{Id}{\sqrt{m}})=\frac{Id}{\sqrt{k}}$, where $T^*$ is the adjoint of $T$ with respect to the trace inner product. Let us say that  $T:M_k\rightarrow M_m$ and $T_1:M_k\rightarrow M_m$ are equivalent, if there are invertible matrices $X'\in M_k$, $Y'\in M_m$ such that $T_1(X)=Y'T(X'XX'^*)Y'^*$. Let  $tr(C)$ denote the trace of $C\in M_k$. 
 
Let $A\in M_k\otimes M_m\simeq M_{km}$ be a positive semidefinite Hermitian matrix. We shall say that $A$ can be put in the filter normal form if there are invertible matrices $X'\in M_k$, $Y'\in M_m$ such that $(X'\otimes Y')A(X'\otimes Y')^*=\sum_{i=1}^n C_i\otimes D_i$, $C_1=\frac{Id}{\sqrt{k}}$, $D_1=\frac{Id}{\sqrt{m}}$ and $tr(C_iC_j)=tr(D_iD_j)=0$, for every $i\neq j$.  

Now, the key observation is the following: $A=\sum_{i=1}^n A_i\otimes B_i$ can be put in the filter normal form if and only if the positive map $G_A: M_k\rightarrow M_m$, defined by $G_A(X)=\sum_{i=1}^nB_itr(A_iX)$, is equivalent to a doubly stochastic map (See  \ref{thmequivfilter} and \ref{maintheorem} for details). 

The problem of determining whether a  positive map $T:M_k\rightarrow M_k$  is equivalent to a doubly stochastic one was firstly addressed in \cite{gurvits2003, gurvits2004}. The author defined the notion of capacity and proved that a positive map is equivalent to a doubly stochastic one if and only if its capacity is positive and achievable \cite{gurvits2004}.  

It was also shown that a  positive map $T:M_k\rightarrow M_k$ is equivalent to a doubly stochastic map if and only if there are orthogonal projections $V_i,W_i\in M_k$, $1\leq i\leq s$, such that
 \begin{enumerate}
 \item $\mathbb{C}^k=\bigoplus_{i=1}^s\Im(V_i)=\bigoplus_{i=1}^s\Im(W_i)$ and, for every $i$,
 \item $T(V_iM_kV_i)\subset W_iM_kW_i$, where $V_iM_kV_i=\{V_iXV_i, X\in M_k\}$,
 \item $\text{rank}(W_i)=\text{rank}(V_i)$, 
 \item $\text{rank}(X)<\text{rank}(T(X))$,\\ $\forall  X\in V_iM_kV_i\cap P_k$ such that $0<rank(X)<\text{rank}(V_i)$.
 \end{enumerate}

Here, we obtain the following generalisation.

A positive map $T: M_k\rightarrow M_m$ is equivalent to a doubly stochastic map  if and only if 
there are orthogonal projections $V_i\in M_k$, $W_i\in M_m$, $1\leq i\leq s$, such that
 \begin{enumerate}
 \item $\mathbb{C}^k=\bigoplus_{i=1}^s\Im(V_i)$, $\mathbb{C}^m=\bigoplus_{i=1}^s\Im(W_i)$ and, for every $i$,
 \item $T(V_iM_kV_i)\subset W_iM_mW_i$, 
 \item $\frac{\text{rank}(W_i)}{\text{rank}(V_i)}=\frac{m}{k}$,
 \item $\text{rank}(X)\text{rank}(W_i)<\text{rank}(T(X))\text{rank}(V_i)$,\\ $\forall  X\in V_iM_kV_i\cap P_k$ such that $0<rank(X)<\text{rank}(V_i)$.
 \end{enumerate}
  
As a corollary, we prove that
a positive map $T: M_k\rightarrow M_m$ is equivalent to a doubly stochastic map  if and only if the positive map $\widetilde{T}: M_m\otimes M_k\rightarrow M_m\otimes M_k$  defined by  $$\widetilde{T}(\sum_{i,j=1}^m e_ie_j^t\otimes B_{ij})=T(\sum_{i=1}^m B_{ii})\otimes Id_{k\times k},$$
is also equivalent (\ref{corollarysnonquare}). Recall that $M_m\otimes M_k\simeq M_{mk}$. 

The connection between the capacity  of $T: M_k\rightarrow M_m$ and $\widetilde{T}: M_m\otimes M_k\rightarrow M_m\otimes M_k$ was noticed in \cite{garg2}. This corollary provides a  deeper connection between these two maps.

In order to obtain the aforementioned generalisation, we adapt Sinkhorn-Knopp original argument.   We extend the concepts of support and total support to rectangular matrices and to positive maps.  We also use their key lemma (See \ref{lemmasinkhorn}) and for each item of our lemma \ref{propertiesalgorithm}, there is a corresponding item in their classical proof.

It turns out that support still is a necessary condition for the equivalence of  a positive map $T:M_k\rightarrow M_m$ with a doubly stochastic one, but total support is only sufficient (differently from the original Sinkhorn-Knopp theorem. See \ref{remark3}). 
The exact condition is the following:  A positive map $T:M_k\rightarrow M_m$ is equivalent to a doubly stochastic map if and only if $T:M_k\rightarrow M_m$ is equivalent to a positive map with total support. Thus, a positive semidefinite Hermitian matrix $A\in M_k\otimes M_m$ can be put in the filter normal form if and only if the positive map $G_A: M_k\rightarrow M_m$ is equivalent to a positive map with total support. 

In general, this condition cannot be easily checked. However, if $k$ and $m$ are coprime then support and  total support are equivalent notions  (See \ref{examples1}). Thus, if $k$ and $m$ are coprime then $A\in M_k\otimes M_m$ can be put in the filter normal form if and only if  $G_A: M_k\rightarrow M_m$ has support (See \ref{corollary0}). We can determine whether an arbitrary positive map has support if a certain sequence of matrices converges to the identity  (See \ref{corollaryequivalencesupport}). There is also a polynomial-time algorithm in \cite{garg2} that can be adapted to check whether $G_A:M_k\rightarrow M_m$ has support (See \ref{remarkpolynomialtime}).

 Moreover, we also provide some easy sufficient conditions to guarantee that a state $A\in M_k\otimes M_m$ can be put in the filter normal form: If $k=m$ and $\dim(\ker(A))<k-1$, or if $k\neq m$ and $\dim(\ker(A))<\min\{k,m\}$, or  if $G_A(Id)$ and $G_A^*(Id)$ are invertible matrices and $\dim(\ker(A))<\frac{\max\{k,m\}}{\min\{k,m\}}$ (See \ref{theoremlowkernel} and \ref{theoremlowkernel2}).

This paper is organised as follows. In Section 1, we extend the concepts of support and total support to rectangular matrices (definition \ref{definitionsupportnonsquare}). In Section 2, we define positive maps with support and total support (definition \ref{definitionsupportformaps}). We describe an adaptation of Sinkhorn and Knopp algorithm  for positive maps $T:M_k\rightarrow M_m$ (algorithm \ref{algorithm}). In Section 3, we obtain a necessary and sufficient condition for the equivalence of a positive map with a  doubly stochastic one (theorems \ref{maintheorem1} and \ref{maintheorem}). In Section 4, we address the filter normal form problem for a positive semidefinite Hermitian matrix $A\in M_k\otimes M_m$. As a consequence of our main theorems, we prove that the existence of this form is equivalent to the existence of invertible matrices $X'\in M_k$, $Y'\in M_m$ such that $Y'G_A(X'(\cdot)X'^*)Y'^*$ has total support (theorem \ref{thmequivfilter}). We show that this last condition can be granted, if $k$ and $m$ are coprime and $G_A$ has support (corollary \ref{corollaryequivalencesupport}), or if $k=m$ and $\dim(\ker(A))<k-1$, or if $k\neq m$ and $\dim(\ker(A))<\min\{k,m\}$ or  if $G_A(Id)$ and $G_A^*(Id)$ are invertible matrices and $\dim(\ker(A))<\frac{\max\{k,m\}}{\min\{k,m\}}$ (theorems \ref{theoremlowkernel} and \ref{theoremlowkernel2}). 

We shall adopt the following notation.

\textbf{Notation:} Let $M_{k\times m}$ denote the set of complex matrices with $k$ rows and $m$ columns and $M_k=M_{k\times k}$. Denote by $\|A\|_2$ the spectral norm of the square matrix $A\in M_k$, $\Im(A)$ its image (range) and $\ker(A)$ its kernel. 
Denote by $P_k$ the set of positive semidefinite Hermitian matrices of order $k$ and $VM_kV$ the set $\{VXV, X\in M_k\}$, where $V\in M_k$ is an orthogonal projection. Let $A^{\perp}$ be the orthogonal projection onto $\ker(A)$, where $A\in M_k$.  Let $(x_i)_{i=1}^k$ denote a column vector. If $x_i>0$, for every $i$, then we shall say that $(x_i)_{i=1}^k$ is  a positive vector.  Let   $A\odot B$  denote the Hadamard product (the coordinatewise product) and $A\otimes B$ the Kronecker product of the matrices $A,B$. We shall denote by $1_{m\times k}$ the matrix in $M_{m\times k}$ with all entries equal to $1$.
 Define  $\sigma(A)=\prod_{i=1}^ka_{i\sigma(i)}$, where $A=(a_{ij})$ is a matrix of order $k$ and $\sigma$ a permutation of $S_k$.  If $\alpha\subset \{1,\ldots,k\}$ and $\beta\subset \{1,\ldots,m\}$  are non empty subsets then $A[\alpha|\beta]$  denotes the submatrix of $A\in M_{k\times m}$ using rows $\alpha$ and columns $\beta$, $A(\alpha|\beta)$ denotes the submatrix of $A$ using rows and columns complementaries to $\alpha,\beta$, respectively, and $|\alpha|$ shall denote the cardinality of $\alpha$.  Let $A=\bigoplus_{i=1}^s A[\alpha_i|\beta_i]\in M_{k\times m}$  be such that $a_{ij}=0$, if $(i,j)\notin \cup_{i=1}^s\alpha_i\times \beta_i$ and $\{1,\ldots,k\}=\bigcup_{i=1}^s\alpha_i$, $\{1,\ldots,m\}=\bigcup_{i=1}^s\beta_i$, where the sets $\alpha_i$, $1\leq i\leq s$, are disjoint and non empty, and the sets $\beta_i$, $1\leq i\leq s$, are disjoint and non empty. This matrix shall be called the direct sum of $A[\alpha_i|\beta_i]$, $1\leq i\leq s$. Let $\langle A,B \rangle=tr(AB^*)$ for $A,B\in M_k$.

\section{A Slight Modification of Sinkhorn-Knopp ideas}

The concepts of support and total support for square matrices play a very important role in Sinkhorn-Knopp theorem. 
In this section we extend these notions to rectangular matrices and we adapt one key lemma (lemma \ref{lemmasinkhorn}), used by them in order to obtain their result, for rectangular matrices. In the next section, we define positive maps with support and total support.

\begin{definition}\label{definitionsupportsquare}  We say that $A=(a_{ij})\in M_k$ has support, if there is a permutation $\sigma\in S_k$ such that $\sigma (A)\neq 0$. We say that $A$ has total support, if for every $a_{i_0j_0}\neq 0$, there is a permutation $\sigma\in S_k$ such that $\sigma(i_0)=j_0$ and $\sigma(A)=\prod_{i=1}^ka_{i\sigma(i)}\neq 0$, or equivalently, the matrix $A(\{i_0\}|\{j_0\})$ has support. 
\end{definition} 


One way to extend the ideas of support and total support to rectangular matrices is the following definition. 

\begin{definition}\label{definitionsupportnonsquare}  We say that $A=(a_{ij})\in M_{k\times m}$ has support $($total support$)$, if $A\otimes 1_{m\times k}\in M_{km}$ has support $($total support$)$.
\end{definition}

This extension  is quite  natural, since $A\in M_{k}$ has support (total support) if and only if $A\otimes 1_{k\times k}\in M_{k^2}$ has support (total support) by item $(3)$ of  lemma \ref{lemmasupport}. In order to prove this  lemma, we shall need the following  result and a very simple corollary. The reader can find its proof in \cite[pg 97]{Marcus}.  

\begin{theorem}\label{thmFrobenius}$($Frobenius-K\"onig Theorem$)$ The matrix $A\in M_k$ does not have support if and only if there is an identically zero submatrix $A[\alpha|\beta]$  such that $|\alpha|+|\beta|> k$.
\end{theorem}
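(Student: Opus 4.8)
The plan is to translate the support condition into a bipartite matching problem and then invoke a classical min-max theorem. First I would observe that $\sigma(A)=\prod_{i=1}^k a_{i\sigma(i)}\neq 0$ for some $\sigma\in S_k$ precisely when the bipartite graph $G$ on row-vertices $\{1,\ldots,k\}$ and column-vertices $\{1,\ldots,k\}$, with an edge $i\sim j$ whenever $a_{ij}\neq 0$, admits a perfect matching. Thus ``$A$ has support'' is equivalent to ``$G$ has a perfect matching,'' and the theorem becomes a statement about when such a matching fails to exist.

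For the easy direction, I would suppose $A[\alpha|\beta]$ is identically zero with $|\alpha|+|\beta|>k$, and argue by contradiction that no $\sigma$ gives $\sigma(A)\neq 0$. Indeed, if $a_{i\sigma(i)}\neq 0$ for every $i$, then for each $i\in\alpha$ we must have $\sigma(i)\notin\beta$ (otherwise $a_{i\sigma(i)}$ would lie in the zero block). Hence $\sigma$ injects $\alpha$ into the complement of $\beta$, forcing $|\alpha|\leq k-|\beta|$, i.e. $|\alpha|+|\beta|\leq k$, a contradiction. So a large zero block cannot coexist with support.

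The hard direction, from ``no support'' to the existence of a large zero block, is where the real content lies, and I would obtain it from Hall's marriage theorem (equivalently K\"onig's theorem). If $G$ has no perfect matching, Hall's condition must fail, so there is a set $\alpha$ of rows whose neighborhood $N(\alpha)=\{j : a_{ij}\neq 0 \text{ for some } i\in\alpha\}$ satisfies $|N(\alpha)|<|\alpha|$. Taking $\beta$ to be the complement of $N(\alpha)$, every entry $a_{ij}$ with $i\in\alpha$, $j\in\beta$ vanishes by definition of the neighborhood, so $A[\alpha|\beta]=0$; and $|\alpha|+|\beta|=|\alpha|+(k-|N(\alpha)|)>k$ because $|N(\alpha)|<|\alpha|$. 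This produces exactly the required zero submatrix.

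The main obstacle is this hard direction, which cannot be handled by elementary counting alone: it rests on the max-matching/min-cover duality for bipartite graphs. I would therefore treat Hall's (or K\"onig's) theorem as the crux, supplying either a citation or the standard augmenting-path argument, after which everything reduces to bookkeeping about which rows and columns land in the zero block.
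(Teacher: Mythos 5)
Your proof is correct; note, though, that the paper itself offers no proof of this statement to compare against --- it quotes the classical Frobenius--K\"onig theorem and defers to \cite[pg 97]{Marcus}. Your argument is the standard matching-theoretic proof and it is sound: the identification of ``$A$ has support'' with the existence of a perfect matching in the bipartite graph of nonzero entries is exactly right; the easy direction is the correct counting argument (a nonvanishing diagonal forces $\sigma$ to inject $\alpha$ into the complement of $\beta$, so $|\alpha|+|\beta|\leq k$); and in the hard direction, taking a Hall violator $\alpha$ with $|N(\alpha)|<|\alpha|$ and setting $\beta=\{1,\ldots,k\}\setminus N(\alpha)$ produces the required zero block with $|\alpha|+|\beta|>k$ (and $\beta\neq\emptyset$, since $|N(\alpha)|<|\alpha|\leq k$, so your construction is compatible with the paper's convention that $\alpha,\beta$ are nonempty). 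The one ingredient you leave external is Hall's (equivalently K\"onig's) theorem, which you correctly flag as the crux to be cited or proved by augmenting paths; the source the paper cites instead establishes Frobenius--K\"onig by an elementary induction on $k$, which packages the same combinatorial content without invoking matching duality by name. Either route is legitimate, and yours has the advantage of making the min--max structure behind the theorem explicit.
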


A very simple corollary of this theorem is the following.

\begin{corollary}\label{corollaryeasy}  $A\in M_k$ does not have total support if and only if there is an identically zero submatrix $A[\alpha|\beta]$  such that  or $|\alpha|+|\beta|> k$ or $|\alpha|+|\beta|=k$ and $A(\alpha|\beta)$ is not identically zero.
\end{corollary}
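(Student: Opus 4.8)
The plan is to reduce both implications to the Frobenius-König Theorem (\thmref{thmFrobenius}) applied not to $A$ itself but to the order-$(k-1)$ submatrix $A(\{i_0\}|\{j_0\})$, exploiting the reformulation of total support recorded in \ref{definitionsupportsquare}: $A$ has total support precisely when $A(\{i_0\}|\{j_0\})$ has support for every entry $a_{i_0j_0}\neq 0$.

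For the forward direction I would assume $A$ does not have total support, so by definition there is a nonzero entry $a_{i_0j_0}$ for which $A(\{i_0\}|\{j_0\})$ fails to have support. Applying \thmref{thmFrobenius} to this matrix of order $k-1$ yields an identically zero submatrix indexed by rows $\alpha\subseteq\{1,\ldots,k\}\setminus\{i_0\}$ and columns $\beta\subseteq\{1,\ldots,k\}\setminus\{j_0\}$ with $|\alpha|+|\beta|>k-1$, that is $|\alpha|+|\beta|\geq k$. Viewed inside the full matrix, $A[\alpha|\beta]$ is still identically zero. If $|\alpha|+|\beta|>k$ we land in the first case. If $|\alpha|+|\beta|=k$, the key observation is that $i_0\notin\alpha$ and $j_0\notin\beta$ force the entry $a_{i_0j_0}$ to lie in the complementary block $A(\alpha|\beta)$; since $a_{i_0j_0}\neq 0$, this block is not identically zero, giving the second case.

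For the converse I would start from an identically zero submatrix $A[\alpha|\beta]$ of one of the two prescribed types (assuming $A\neq 0$, the zero matrix being a degenerate case in which the stated equivalence must be read with care). If $|\alpha|+|\beta|>k$, then \thmref{thmFrobenius} shows $A$ has no support at all; choosing any nonzero entry $a_{i_0j_0}$, the matrix $A(\{i_0\}|\{j_0\})$ cannot have support either, since a support permutation of it would extend, via $\sigma(i_0)=j_0$, to a support permutation of $A$. Hence $A$ has no total support. If instead $|\alpha|+|\beta|=k$ and $A(\alpha|\beta)$ is not identically zero, I would pick a nonzero entry $a_{i_0j_0}$ inside $A(\alpha|\beta)$, so that $i_0\notin\alpha$ and $j_0\notin\beta$. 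Then $A[\alpha|\beta]$ sits inside $A(\{i_0\}|\{j_0\})$ as an identically zero block with $|\alpha|+|\beta|=k>k-1$, so \thmref{thmFrobenius} again forces $A(\{i_0\}|\{j_0\})$ to have no support, whence $A$ has no total support.

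The only real subtlety, and the step I would watch most closely, is the bookkeeping when passing between submatrices of $A$ and submatrices of $A(\{i_0\}|\{j_0\})$: one must verify that the conditions $i_0\notin\alpha$ and $j_0\notin\beta$ produced by Frobenius-König in one direction are exactly what is needed to place $a_{i_0j_0}$ in the complementary block in the other, and that the threshold $|\alpha|+|\beta|>k-1$ for the order-$(k-1)$ matrix corresponds correctly to $|\alpha|+|\beta|\geq k$ for $A$. No estimate or computation is involved; the entire content lies in selecting the correct nonzero entry and invoking \thmref{thmFrobenius} on the correct matrix.
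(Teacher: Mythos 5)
Your proof is correct and is essentially the paper's own argument: the paper presents this corollary with no written proof, treating it as immediate from Theorem~\ref{thmFrobenius}, and your reduction---applying Frobenius--K\"onig to the order-$(k-1)$ matrix $A(\{i_0\}|\{j_0\})$ via the equivalent formulation of total support in Definition~\ref{definitionsupportsquare}, with the threshold $|\alpha|+|\beta|>k-1$ translating to $|\alpha|+|\beta|\geq k$ and the nonzero entry $a_{i_0j_0}$ landing in the complementary block $A(\alpha|\beta)$ exactly when $|\alpha|+|\beta|=k$---is precisely the intended derivation. Your caveat about $A=0$ is also apt: under the paper's definition the zero matrix vacuously has total support while still admitting an identically zero submatrix with $|\alpha|+|\beta|>k$, so the equivalence must implicitly assume $A\neq 0$ (the standard convention), a point the paper passes over silently.
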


The next lemma extend these two results to rectangular matrices.

\begin{lemma}\label{lemmasupport} Let $A\in M_{k\times m}$. Then
\begin{enumerate}
\item $A$ does not have support if and only if there is an identically zero submatrix $A[\alpha|\beta]$ such that $|\alpha|m+|\beta|k>km$.
\item $A$ does not have total support if and only if there is an identically zero submatrix $A[\alpha|\beta]$ such that or  $|\alpha|m+|\beta|k>km$ or $|\alpha|m+|\beta|k=km$ and $A(\alpha|\beta)$ is not identically zero.
\item If $k= m$ then $A$ has support $($total support$)$ if and only if $A\otimes 1_{k\times k}$ has support $($total support$)$.
\item If $k$ and $m$ are coprime then $A$ has support if and only if $A$ has total support.
\item If $k\neq m$ and the cardinality of $\{(i,j)|\ a_{ij} =0 \}<\min\{k,m\}$ then $A$ has total support.
\item If $k= m$ and the cardinality of $\{(i,j)|\ a_{ij} = 0 \}<k-1$ then $A$ has total support.
\item If $A$ has no column or row identically zero and the cardinality of $\{(i,j)|\ a_{ij} = 0 \}<\frac{\max\{k,m\}}{\min\{k,m\}}$ then $A$ has total support.
\item If $A=\bigoplus_{i=1}^s A[\alpha_i|\beta_i]$, $\frac{|\beta_i|}{|\alpha_i|}=\frac{m}{k}$  and $A[\alpha_i|\beta_i]$ has support $($total support$)$, $1\leq i\leq s$, then $A$ has support $($total support$)$.
\end{enumerate}
\end{lemma}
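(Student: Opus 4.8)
The unifying strategy is to push every assertion through the defining identity of Definition \ref{definitionsupportnonsquare}, turning each claim into a statement about the \emph{square} matrix $A\otimes 1_{m\times k}\in M_{km}$, where the Frobenius--K\"onig Theorem \ref{thmFrobenius} and its Corollary \ref{corollaryeasy} apply verbatim. To set up (1) and (2), I would index the rows of $A\otimes 1_{m\times k}$ by pairs $(i,p)$ with $1\le i\le k,\ 1\le p\le m$ and its columns by pairs $(j,q)$ with $1\le j\le m,\ 1\le q\le k$, so that the $\big((i,p),(j,q)\big)$-entry equals $a_{ij}$. Given an identically zero submatrix of $A\otimes 1_{m\times k}$ with rows $\alpha'$ and columns $\beta'$, set $\alpha=\{i:(i,p)\in\alpha' \text{ for some } p\}$ and $\beta=\{j:(j,q)\in\beta' \text{ for some } q\}$; then $A[\alpha|\beta]$ is identically zero and $|\alpha'|\le|\alpha|m$, $|\beta'|\le|\beta|k$. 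Conversely, an identically zero $A[\alpha|\beta]$ produces the identically zero submatrix with rows $\alpha\times\{1,\dots,m\}$ and columns $\beta\times\{1,\dots,k\}$, of size exactly $|\alpha|m+|\beta|k$, whose complement is $A(\alpha|\beta)\otimes 1_{m\times k}$. Hence the largest identically zero submatrix of $A\otimes 1_{m\times k}$ has size $\max\{|\alpha|m+|\beta|k: A[\alpha|\beta]=0\}$, and in the borderline case $|\alpha'|+|\beta'|=km$ the complement is nonzero exactly when $A(\alpha|\beta)$ is. Substituting these correspondences into \ref{thmFrobenius} and \ref{corollaryeasy} yields (1) and (2).

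Items (3) and (4) are then immediate consequences of (1) and (2). For (3) I would specialize to $k=m$: the inequality $|\alpha|m+|\beta|k>km$ collapses to $|\alpha|+|\beta|>k$, which is precisely the Frobenius--K\"onig criterion for the square matrix $A$ of Definition \ref{definitionsupportsquare}, and the total-support case matches Corollary \ref{corollaryeasy} in the same way. For (4), coprimality of $k$ and $m$ forces any solution of $|\alpha|m+|\beta|k=km$ to satisfy $k\mid|\alpha|$, hence $(|\alpha|,|\beta|)\in\{(0,m),(k,0)\}$; in both cases $A(\alpha|\beta)$ is an empty, hence identically zero, submatrix, so the extra clause of (2) never fires and the criteria for ``no support'' and ``no total support'' coincide.

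For (5)--(7) I would bound the number of zero entries of $A$ below by $|\alpha||\beta|$, the area of an offending submatrix, and minimize $|\alpha||\beta|$ over the bad configurations isolated by (1)--(2). In the square case (6), a bad configuration has $|\alpha|+|\beta|\ge k$ with $1\le|\alpha|\le k-1$ in the equality case, so $|\alpha||\beta|\ge k-1$. For (7), having no zero row or column forces $|\alpha|\le k-1$ in any bad configuration, whence $|\beta|k\ge km-|\alpha|m\ge km-(k-1)m=m$ (taking $k\le m$), so $|\alpha||\beta|\ge m/k=\max\{k,m\}/\min\{k,m\}$. Part (5) is the delicate one: taking $k<m$, I must show every bad configuration has $|\alpha||\beta|\ge k$. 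When $|\alpha|m+|\beta|k>km$ this follows from integrality --- if $|\alpha||\beta|\le k-1$ then $|\alpha|\le(k-1)/|\beta|$ together with the elementary bound $(k-1)m\le k|\beta|(m-|\beta|)$ (valid for $1\le|\beta|\le m-1$) forces $|\alpha|m+|\beta|k\le km$, a contradiction --- while the equality case $|\alpha|m+|\beta|k=km$ with nonzero complement has no admissible integer solution when $m=k+1$ and again yields $|\alpha||\beta|\ge k$ once $m\ge k+2$.

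Finally, for (8) the block decomposition $A=\bigoplus_{t=1}^s A[\alpha_t|\beta_t]$ makes $A\otimes 1_{m\times k}$ block diagonal with blocks $A[\alpha_t|\beta_t]\otimes 1_{m\times k}$ of size $|\alpha_t|m\times|\beta_t|k$, and these blocks are square precisely because $|\beta_t|/|\alpha_t|=m/k$. A block diagonal matrix with square diagonal blocks has support (total support) if and only if every block does, and since $m=k|\beta_t|/|\alpha_t|$ the Frobenius--K\"onig criterion for $A[\alpha_t|\beta_t]\otimes 1_{m\times k}$ agrees with the one for $A[\alpha_t|\beta_t]\otimes 1_{|\beta_t|\times|\alpha_t|}$, i.e. with $A[\alpha_t|\beta_t]$ having rectangular support (total support) in the sense of Definition \ref{definitionsupportnonsquare}; hence support (total support) of all blocks lifts to $A$. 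The main obstacle throughout is concentrated in (5): unlike (6) and (7), the real relaxation of $\min|\alpha||\beta|$ subject to $|\alpha|m+|\beta|k\ge km$ degenerates to $0$ at the trivial corners, so the sharp threshold $\min\{k,m\}$ survives only through integrality and the near-diagonal regime $m=k+1$ must be excluded by hand; a secondary subtlety is the careful tracking of the complement $A(\alpha|\beta)$ required to make the borderline clause of (2) rigorous.
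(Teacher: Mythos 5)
Your treatment of items (1)--(4) and (8) is essentially the paper's: the same reduction identifying the maximal identically zero submatrices of $A\otimes 1_{m\times k}$ with the blow-ups $A[\alpha|\beta]\otimes 1_{m\times k}$ of size $|\alpha|m+|\beta|k$, followed by Frobenius--K\"onig and Corollary \ref{corollaryeasy}; in (8) you even supply a detail the paper leaves implicit, namely that each square block $A[\alpha_t|\beta_t]\otimes 1_{m\times k}$ has (total) support exactly because the ratio condition $\frac{|\beta_t|}{|\alpha_t|}=\frac{m}{k}$ makes its Frobenius--K\"onig criterion coincide with the rectangular one for $A[\alpha_t|\beta_t]$. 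Where you genuinely diverge is (5)--(7): you bound the number of zeros below by $|\alpha||\beta|$ and minimize over bad configurations, whereas the paper runs a single estimate through item (2), writing $|\alpha|m+|\beta|k=\frac{|\alpha|m+|\beta|k}{|\alpha|+|\beta|}(|\alpha|+|\beta|)\le \frac{|\alpha|m+|\beta|k}{|\alpha|+|\beta|}(|\alpha||\beta|+1)$ and using that the weighted mean of $m$ and $k$ is strictly below $\max\{k,m\}$ when $k\neq m$ (both $|\alpha|,|\beta|\ge 1$) and equals $k$ when $k=m$; that one computation disposes of (5) and (6) at once, strict and borderline clauses alike. Your corner analyses for (6) and (7) are correct as written.

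The one soft spot is the equality case of (5), which you assert rather than prove: the claim ``no admissible integer solution when $m=k+1$, and $|\alpha||\beta|\ge k$ once $m\ge k+2$'' is true, but no argument is given for the second half, and the dichotomy $m=k+1$ versus $m\ge k+2$ is not the operative one --- $\gcd(k,m)$ is. For coprime $k,m$ the equation $|\alpha|m+|\beta|k=km$ has no solutions with $|\alpha|,|\beta|\ge 1$ at all, while for $d=\gcd(k,m)\ge 2$, writing $k=dk'$, $m=dm'$, the solutions are $|\alpha|=k't$, $|\beta|=m'(d-t)$ with $1\le t\le d-1$, giving $|\alpha||\beta|=k'm't(d-t)\ge 2k'(d-1)\ge dk'=k$ since $k<m$ forces $m'\ge 2$. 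Better yet, you do not need any case split: your own elementary bound is strict when $k<m$, because $(k-1)m<k(m-1)\le k|\beta|(m-|\beta|)$ for $1\le|\beta|\le m-1$, so your chain sharpens to $|\alpha|m+|\beta|k\le \frac{(k-1)m}{|\beta|}+|\beta|k<k(m-|\beta|)+|\beta|k=km$ whenever $|\alpha||\beta|\le k-1$; this rules out both clauses of item (2) simultaneously, exactly as the paper's weighted-mean estimate does. With that one observation inserted, your proof is complete and correct.
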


\begin{proof}
  Let $C=A\otimes 1_{m\times k}$. Notice that any identically zero $C[\alpha'|\beta']$ is a submatrix of some identically zero $A[\alpha|\beta]\otimes 1_{m\times k}$. Since any  $A[\alpha|\beta]\otimes 1_{m\times k}$ is also a submatrix of $C$ then the identically zero matrices $C[\alpha'|\beta']$ with maximum $|\alpha'|+|\beta'|$ are the identically zero matrices $A[\alpha|\beta]\otimes 1_{m\times k}$ with maximum $|\alpha|m+|\beta|k$.\\\\
$(1)$ $C=A\otimes 1_{m\times k}$ has no support if and only if there is an identically zero submatrix $C[\alpha'|\beta']=A[\alpha|\beta]\otimes 1_{m\times k}$ such that $|\alpha'|+|\beta'|=|\alpha|m+|\beta|k>km$, by theorem \ref{thmFrobenius}.\\\\
$(2)$ $C=A\otimes 1_{m\times k}$ does not have total support if and only if there is an identically zero submatrix $C[\alpha'|\beta']=A[\alpha|\beta]\otimes 1_{m\times k}$ such that or $|\alpha'|+|\beta'|=|\alpha|m+|\beta|k>km$ or $|\alpha'|+|\beta'|=|\alpha|m+|\beta|k=km$ and $C(\alpha'|\beta')=A(\alpha|\beta)\otimes 1_{m\times k}$ is not identically zero, by corollary \ref{corollaryeasy}. \\\\
$(3)$ Let $k=m$ in the proofs of items $(1)$, $(2)$ and then use theorem \ref{thmFrobenius} and corollary \ref{corollaryeasy}.\\\\
 $(4)$ If $k$ and $m$ are coprime then there are no positive integers $x,y$ such that $xm+yk=mk$. Now, use $(1)$ and $(2)$. \\\\
$(5),(6)$ Let $C[\alpha'|\beta']$ be  identically zero with maximum $|\alpha'|+|\beta'|$ then $C[\alpha'|\beta']=A[\alpha|\beta]\otimes 1_{m\times k}$. Thus, $|\alpha'|+|\beta'|=|\alpha|m+|\beta|k=\frac{|\alpha|m+|\beta|k}{|\alpha|+|\beta|}(|\alpha|+|\beta|)\leq \frac{|\alpha|m+|\beta|k}{|\alpha|+|\beta|}(|\alpha||\beta|+1).$

Since $A[\alpha|\beta]$ is identically zero then $|\alpha||\beta|$ is smaller than $\min\{k,m
\}$, for $k\neq m$, or smaller than $k-1$, if $k= m$, by hypothesis. 
Therefore, if $k\neq m$ then $\frac{|\alpha|m+|\beta|k}{|\alpha|+|\beta|}(|\alpha||\beta|+1)<\max\{m,k\}\times \min\{k,m\}=km$. If $k=m$ then $\frac{|\alpha|m+|\beta|k}{|\alpha|+|\beta|}(|\alpha||\beta|+1)\leq k(k-1)<k^2$. The results follow by $(2)$.\\\\
$(7)$ Let $A[\alpha|\beta]$ be identically zero.  Since there are no rows or columns identically zero then $|\alpha|\leq k-1$  and $|\beta|\leq m-1$.  

Next, if $|\alpha|m+|\beta|k\geq km$ then  $(k-1)m+|\beta|k\geq km$ and $|\alpha|m+(m-1)k\geq km$. Thus, $|\beta|\geq \frac{m}{k}$ and $|\alpha|\geq\frac{k}{m}$, which is impossible, since the cardinality of $\{(i,j)|\ a_{ij} = 0 \}<\frac{\max\{k,m\}}{\min\{k,m\}}$. Therefore, $|\alpha|m+|\beta|k< km$ and $A$ has total support by item (2).\\\\
$(8)$ Notice that $A\otimes 1_{m\times k}=\bigoplus_{i=1}^s (A[\alpha_i|\beta_i]\otimes 1_{m\times k})$ and $|\alpha_i|m=|\beta_i|k$ for every $i$. So $A\otimes 1_{m\times k}$ is a direct sum of square matrices with support (total support) then $A$ has support (total support). 
\end{proof}

The next lemma is a slight modification of  \cite[Lemma 2]{Sinkhorn}.

\begin{lemma}\label{lemmasinkhorn}  Let $(v_n)_{n\in\mathbb{N}}$ be a  sequence of positive vectors of $\mathbb{R}^k$ and  $(w_n)_{n\in\mathbb{N}}$  a sequence of positive vectors of $\mathbb{R}^m$. 
Let $A=(a_{ij})\in M_{k\times m}$ be a matrix with total support. If for every $a_{ij}\neq 0$ the corresponding entry of $v_nw_n^t($i.e., $(v_nw_n^t)_{i,j}=v_{i,n}w_{j,n})$ converges to a positive limit then  there are two sequences of positive vectors $(v_n')_{n\in\mathbb{N}}$ of $\mathbb{R}^k$ and  $(w_n')_{n\in\mathbb{N}}$ of $\mathbb{R}^m$  converging to positive vectors $v,w$, respectively, such that
$v_nw_n^t=v_n'w_n'^t$ for every $n$ $($i.e., $v_{i,n}w_{j,n}=v_{i,n}'w_{j,n}'$, for every $i,j,n)$.

 
\end{lemma}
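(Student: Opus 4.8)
The plan is to adapt Sinkhorn and Knopp's proof of \cite[Lemma 2]{Sinkhorn}, matching their argument step by step in the rectangular setting. For each $(i,j)$ with $a_{ij}\neq 0$ set $L_{ij}=\lim_n v_{i,n}w_{j,n}$, a positive number by hypothesis. The statement is invariant under permuting the rows of $A$ together with the entries of each $v_n$, and the columns of $A$ together with the entries of each $w_n$; so I may assume $A$ is written as the direct sum $A=\bigoplus_{t=1}^s A[\alpha_t|\beta_t]$ of \lemref{lemmasupport}$(8)$, where each block is fully indecomposable. Because $A$ has \emph{total} support, the square lift $A\otimes 1_{m\times k}$ has total support, and its fully indecomposable blocks are square; reading this back through the tensor structure forces each block of $A$ to be balanced, $|\beta_t|\,k=|\alpha_t|\,m$. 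I will normalise one block at a time and then reassemble.

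Fix a block, with bipartite support graph $G$ on the rows $\alpha_t$ and columns $\beta_t$, an edge $i\sim j$ meaning $a_{ij}\neq 0$; full indecomposability is exactly connectedness of $G$. Ratios along edges converge: if $i\sim j$ and $i'\sim j$ then
\[
\frac{v_{i,n}}{v_{i',n}}=\frac{v_{i,n}w_{j,n}}{v_{i',n}w_{j,n}}\longrightarrow\frac{L_{ij}}{L_{i'j}}>0 ,
\]
and symmetrically for two columns incident to a common row. Choosing a reference row $i_0\in\alpha_t$ and propagating these ratios along paths in the connected graph $G$, each quotient $v_{i,n}/v_{i_0,n}$ tends to a positive limit $r_i$, and each product $v_{i_0,n}w_{j,n}=(v_{i,n}w_{j,n})\,(v_{i_0,n}/v_{i,n})$ (for any $i$ with $i\sim j$) tends to a positive limit $c_j$. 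Define on this block the rescaled coordinates $v_{i,n}'=v_{i,n}/v_{i_0,n}$ and $w_{j,n}'=v_{i_0,n}w_{j,n}$. The single scalar $v_{i_0,n}$ cancels, so $v_{i,n}'w_{j,n}'=v_{i,n}w_{j,n}$ identically on the block, while $v_{i,n}'\to r_i>0$ and $w_{j,n}'\to c_j>0$.

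Carrying this out on every block and collecting the coordinates yields the sequences $(v_n')$ and $(w_n')$ of the statement. By construction they converge to the positive vectors $v=(r_i)_i$ and $w=(c_j)_j$, and since on each block the normalising scalar cancels, the reassembled sequences satisfy $v_nw_n^{t}=v_n'w_n'^{t}$, i.e.\ $v_{i,n}w_{j,n}=v_{i,n}'w_{j,n}'$ for every $i,j,n$, which is the required conclusion. This is the point where total support, and not merely support, is essential: by \thmref{thmFrobenius} and \lemref{lemmasupport}, total support is exactly the hypothesis that every nonzero entry lies on a positive permutation and that each component is balanced ($|\beta_t|k=|\alpha_t|m$); this balance is what forbids a coordinate from being driven to $0$ or $\infty$ and guarantees that the limits $r_i,c_j$ are simultaneously finite and strictly positive.

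The main obstacle I anticipate is precisely this last point — proving that the rescaled coordinates converge to finite, strictly positive limits. Convergence of the edge ratios is immediate from the hypothesis; the work lies in ruling out the escape of some $v_{i,n}'$ or $w_{j,n}'$ to $0$ or to $\infty$, which is the rectangular analogue of the corresponding estimate in \cite[Lemma 2]{Sinkhorn} and is where the Frobenius--K\"onig obstruction of \thmref{thmFrobenius}, together with the balance $|\beta_t|k=|\alpha_t|m$ forced by total support, enters. Once positivity and finiteness of the limits are secured, the equality $v_nw_n^{t}=v_n'w_n'^{t}$ and the positivity of $v$ and $w$ follow as above, completing the proof.
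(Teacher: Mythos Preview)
Your route differs from the paper's. The paper does not adapt Sinkhorn--Knopp's argument to rectangles at all: it observes that $A\otimes 1_{m\times k}$ is a \emph{square} matrix with total support, applies \cite[Lemma~2]{Sinkhorn} as a black box to the sequences $v_n\otimes 1_{m\times 1}$ and $w_n\otimes 1_{k\times 1}$ in $\mathbb{R}^{km}$, and then reads off suitable subvectors. You instead rerun the internals of that lemma (block decomposition, ratio propagation along the bipartite support graph) directly in the rectangular setting. Two minor points on the execution: \lemref{lemmasupport}(8) does not supply the decomposition you invoke---it goes the other direction; what you really want is the partition of rows and columns into connected components of the bipartite support graph, after which \lemref{lemmasupport}(2) does force the balance $|\alpha_t|m=|\beta_t|k$ you assert. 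And within a single block your convergence $v'_{i,n}\to r_i>0$, $w'_{j,n}\to c_j>0$ is immediate from the ratio propagation; the balance condition you flag as ``the main obstacle'' is never actually used in that step.

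The genuine gap is the reassembly across blocks. You normalise block $t$ by its own scalar $v_{i_0^{(t)},n}$, so for $i\in\alpha_t$ and $j\in\beta_{t'}$ with $t\ne t'$ your construction gives
\[
v'_{i,n}w'_{j,n}=\frac{v_{i,n}}{v_{i_0^{(t)},n}}\cdot v_{i_0^{(t')},n}\,w_{j,n}
=v_{i,n}w_{j,n}\cdot\frac{v_{i_0^{(t')},n}}{v_{i_0^{(t)},n}},
\]
which need not equal $v_{i,n}w_{j,n}$. So you have only established the equality on pairs $(i,j)$ lying in a common block---hence on the support of $A$---not for all $(i,j)$ as the lemma asserts. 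In fact a single global scalar cannot always be found once $A$ has more than one component: with $A=I_2$, $v_n=(n,1/n)$, $w_n=(1/n,n)$ the hypotheses hold but no $c_n>0$ makes both $c_nv_n$ and $c_n^{-1}w_n$ converge to positive vectors. This on-support equality is precisely the conclusion $P'_nAQ'_n=P_nAQ_n$ of \cite[Lemma~2]{Sinkhorn}, and it is all that the application in \lemref{keylemma} requires; but it falls short of the identity $v_nw_n^t=v'_n{w'_n}^t$ you claim.
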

\begin{proof} 
Let us assume $k\neq m$, since the square case is  \cite[Lemma 2]{Sinkhorn}. 

By definition \ref{definitionsupportnonsquare}, the square matrix $A\otimes 1_{m\times k}$ has total support.

Notice that whenever an entry of the matrix $A\otimes 1_{m\times k}$ is not zero,  the corresponding entry of the rank 1 matrix $v_nw_n^t\otimes 1_{m\times k}=(v_n\otimes 1_{m\times 1})(w_n^t\otimes 1_{1\times k})$ converges to a positive number.  

Since $A\otimes 1_{m\times k} \in M_{km\times km}$ is square with total support and $v_n\otimes 1_{m\times 1},w_n\otimes 1_{k\times1}\in \mathbb{R}^{km}$ are positive vectors then, by the square case of this lemma, there are two  sequences $(v_n')_{n\in\mathbb{N}}$, $(w_n')_{n\in\mathbb{N}}$ of positive vectors of $\mathbb{R}^{km}$ converging to positive vectors $v,w$, respectively, such that $(v_n\otimes 1_{m\times 1})(w_n^t\otimes 1_{1\times k})=v_n'w_n'^t$ for every $n$. Thus, there are subvectors $v_n''\in\mathbb{R}^k$  of $v_n'$ and  $w_n''\in\mathbb{R}^m$ of $w_n'$ converging to positive vectors $v''\in\mathbb{R}^k$, $w''\in\mathbb{R}^m$, respectively, such that $v_nw_n^t=v_n''w_n''^t$, for every $n$.
\end{proof}

The next lemma shall be used later. Its proof is left to the reader.

\begin{lemma}\label{propsigma} 
 \begin{enumerate}
\item If $A,B$ are matrices of order $k$ then $\sigma(A\odot B)=\sigma(A)\sigma(B)$,  for any $\sigma \in S_k$.
\item If $v=(v_i)_{i=1}^k$,  $w=(w_i)_{i=1}^k$ then $\sigma(vw^t)=\prod_{i=1}^kv_iw_i$,  for any $\sigma\in S_k$. 
\item If $v=(v_i)_{i=1}^k$,  $r=(r_i)_{i=1}^m$ then $\sigma(vr^t\otimes 1_{m\times k})=(\prod_{i=1}^kv_i)^m(\prod_{i=1}^mw_i)^k$,  for any $\sigma\in S_{mk}$.
\end{enumerate}
\end{lemma}

\section{Sinkhorn-Knopp algorithm for positive maps}

In this section, we discuss an adaptation  of Sinkhorn-Knopp algorithm for positive maps $T:M_k\rightarrow M_m$. Similar adaptations were used  in \cite{gurvits2004, filternormalform}. This process is commonly known as scaling.

Here, we define positive maps with support and total support (definition \ref{definitionsupportformaps}) and we show that if $T:M_k\rightarrow M_m$ has support then the limit points of the sequence produced by scaling  are doubly stochastic.

\begin{definition} \label{definitiondoublystochastic}
  Let $T: M_k\rightarrow M_m$ be a positive map. We say that $T$ is doubly stochastic if $T(\frac{Id}{\sqrt{k}})=\frac{Id}{\sqrt{m}}$ and $T^*(\frac{Id}{\sqrt{m}})=\frac{Id}{\sqrt{k}}$, where $T^*$ is the adjoint of $T$ with respect to the trace inner product.
  \end{definition}

 The interested reader can find more information concerning  doubly stochastic  maps  in \cite{Choi, gurvits2003, gurvits2004, Landau}.

\begin{definition} \label{definitionsupportformaps}
We say that $T: M_k\rightarrow M_m$ has support $($total support$)$, if for every  orthonormal bases $\{v_1,\ldots,v_k\}$ of $\mathbb{C}^{k}$ and $\{w_1,\ldots,w_m\}$ of $\mathbb{C}^{m}$,  the matrix $(tr(T(v_i\overline{v}_i^t)w_j\overline{w}_j^t))\in M_{k\times m}$  has support $($total support$)$. More generally, we  say that  $T: VM_kV\rightarrow WM_mW$ has support $($total support$)$, where $V\in M_k$ and $W\in M_m$ are orthogonal projections, if for every orthonormal bases $\{v_1,\ldots,v_{k'}\}$  of $\Im(V)$ and $\{w_1,\ldots,w_{m'}\}$  of $\Im(W)$, the matrix $(tr(T(v_i\overline{v}_i^t)w_j\overline{w}_j^t))\in M_{k'\times m'}$  has support $($total support$)$.
\end{definition}

The next lemma provides another description of positive maps $T: M_k\rightarrow M_m$ with support and total support. A similar description is valid for positive maps $T: VM_kV\rightarrow WM_mW$ (See remark \ref{remark2}).

\begin{lemma}\label{lemmaequivalentsupport} Let $T: M_k\rightarrow M_m$  be a positive map. Then
\begin{enumerate}
\item $T: M_k\rightarrow M_m$ has support if and only if for every $A\in P_k$, $\text{rank}(A)m\leq \text{rank}(T(A))k$.
\item  $T: M_k\rightarrow M_m$ has total support if and only if for every $A\in P_k$, $\text{rank}(A)m<\text{rank}(T(A))k$ or $\text{rank}(A)m= \text{rank}(T(A))k$ and  $\Im(T(A))=\Im(T(A^{\perp})^{\perp})$. 
\end{enumerate}
\end{lemma}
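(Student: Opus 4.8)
The plan is to translate the combinatorial conditions of Lemma \ref{lemmasupport}(1),(2) on the matrices $M=(tr(T(v_i\overline{v}_i^t)w_j\overline{w}_j^t))$ into rank and image statements about $T$. The basic dictionary is the following. Since $T(v_i\overline{v}_i^t)\in P_m$, the $(i,j)$ entry equals $\overline{w}_j^tT(v_i\overline{v}_i^t)w_j\ge 0$ and vanishes precisely when $w_j\perp\Im(T(v_i\overline{v}_i^t))$. For index sets $\alpha\subseteq\{1,\dots,k\}$, $\beta\subseteq\{1,\dots,m\}$ put $P_\alpha=\sum_{i\in\alpha}v_i\overline{v}_i^t$ and $Q_\beta=\sum_{j\in\beta}w_j\overline{w}_j^t$, the orthogonal projections onto the corresponding coordinate subspaces. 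Because the image of a sum of positive semidefinite matrices is the sum of their images, $\Im(T(P_\alpha))=\sum_{i\in\alpha}\Im(T(v_i\overline{v}_i^t))$, so $M[\alpha|\beta]$ is identically zero if and only if $\Im(Q_\beta)\subseteq\ker(T(P_\alpha))$, which forces $|\beta|\le m-\text{rank}(T(P_\alpha))$. I will also use that if $P$ is the orthogonal projection onto $\Im(A)$ for $A\in P_k$, then $\Im(T(A))=\Im(T(P))$ and hence $\text{rank}(T(A))=\text{rank}(T(P))$: from $\lambda_{\min}P\le A\le\lambda_{\max}P$ (the extreme positive eigenvalues of $A$) and positivity—hence monotonicity—of $T$ one gets $\lambda_{\min}T(P)\le T(A)\le\lambda_{\max}T(P)$, so the two images coincide.

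For item (1), first assume $\text{rank}(A)m\le\text{rank}(T(A))k$ for all $A\in P_k$. If some choice of bases gave an $M$ without support, Lemma \ref{lemmasupport}(1) would yield an identically zero $M[\alpha|\beta]$ with $|\alpha|m+|\beta|k>km$; but the dictionary gives $|\beta|\le m-\text{rank}(T(P_\alpha))$, and the hypothesis applied to $A=P_\alpha$ gives $\text{rank}(T(P_\alpha))\ge|\alpha|m/k$, which together force $|\alpha|m+|\beta|k\le km$, a contradiction. Conversely, if $\text{rank}(A)m>\text{rank}(T(A))k$ for some $A$, set $r=\text{rank}(A)$, $s=\text{rank}(T(A))$, choose an orthonormal basis of $\mathbb{C}^k$ whose first $r$ vectors span $\Im(A)$ and one of $\mathbb{C}^m$ whose first $s$ vectors span $\Im(T(A))=\Im(T(P_\alpha))$, and take $\alpha=\{1,\dots,r\}$, $\beta=\{s+1,\dots,m\}$. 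Then $M[\alpha|\beta]$ is identically zero with $|\alpha|m+|\beta|k=rm+(m-s)k>km$ (as $rm>sk$), so $M$, and therefore $T$, lacks support.

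Item (2) concerns the equality boundary. Since total support implies support, item (1) already yields $\text{rank}(A)m\le\text{rank}(T(A))k$ everywhere, so only $A$ with $\text{rank}(A)m=\text{rank}(T(A))k$ matter. Repeating the computation of item (1) at the boundary $|\alpha|m+|\beta|k=km$ forces all the inequalities there to be equalities: $\text{rank}(P_\alpha)m=\text{rank}(T(P_\alpha))k$ and $\Im(Q_\beta)=\ker(T(P_\alpha))$, hence $\Im(Q_{\beta^c})=\Im(T(P_\alpha))$. By Lemma \ref{lemmasupport}(2), $M$ has total support exactly when, in this situation, $M[\alpha^c|\beta^c]$ is also identically zero; by the dictionary this says $\Im(T(P_\alpha))\subseteq\ker(T(P_{\alpha^c}))$. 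Writing $A$ for any positive semidefinite matrix with $\Im(A)=\Im(P_\alpha)$ (so that $A^\perp$ and $P_{\alpha^c}=I-P_\alpha$ have the same image, whence $\Im(T(A))=\Im(T(P_\alpha))$ and $\Im(T(A^\perp))=\Im(T(P_{\alpha^c}))$), this reads $\Im(T(A))\subseteq\ker(T(A^\perp))=\Im(T(A^\perp)^\perp)$. The main obstacle is to upgrade this inclusion to the equality demanded by the statement, and for this I will count dimensions: applying the weak inequality of item (1) to $A^\perp$ (of rank $k-r$) gives $\text{rank}(T(A^\perp))\ge (k-r)m/k=m-s$ (using $rm=sk$), while the inclusion $\Im(T(A))\subseteq\ker(T(A^\perp))$ gives $\Im(T(A^\perp))\subseteq\Im(T(A))^\perp$, hence $\text{rank}(T(A^\perp))\le m-s$; so $\dim\ker(T(A^\perp))=s=\dim\Im(T(A))$ and the inclusion is forced to be an equality. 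Both implications now follow: assuming the rank condition, the boundary argument makes every relevant $M[\alpha^c|\beta^c]$ vanish, giving total support for all bases; assuming total support, for each $A$ in the equality case the adapted bases make $M[\alpha|\beta]$ a boundary zero block, total support forces $M[\alpha^c|\beta^c]=0$, and the dimension count promotes $\Im(T(A))\subseteq\ker(T(A^\perp))$ to the required equality.
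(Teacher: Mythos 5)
Your proposal is correct and follows essentially the same route as the paper's proof: the same reduction to orthogonal projections via $\Im(T(A))=\Im(T(U))$, the same dictionary between identically zero blocks $M[\alpha|\beta]$ and conditions $tr(T(P_\alpha)Q_\beta)=0$, the same appeal to items (1) and (2) of Lemma~\ref{lemmasupport}, and the same dimension count promoting $\Im(T(A))\subseteq\ker(T(A^{\perp}))$ to equality. The only cosmetic omission is the degenerate cases $\text{rank}(A)\in\{0,k\}$ in item (2), where the adapted-bases block argument is vacuous; these follow at once since support forces $T(Id)$ to have full rank, exactly as the paper notes in passing.
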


\begin{proof}$(1)$ Notice that if $A\in P_k$ and $U\in P_k$ is the orthogonal projection onto $\Im(A)$ then  $\Im(T(A))=\Im(T(U))$, since $T: M_k\rightarrow M_m$  is a positive map.  Thus, we just need to prove the inequality for orthogonal projections.

Now, $T$ has no support if and only if there are orthonormal bases $\{v_1,\ldots,v_k\}\subset \mathbb{C}^k$ and $\{w_1,\ldots,w_m\}\subset\mathbb{C}^m$ such that the matrix $A=(tr(T(v_i\overline{v_i}^t)w_j\overline{w}_j^t))\in M_{k\times m}$ has no support. By lemma \ref{lemmasupport}, this is equivalent to the existence of an identically zero submatrix $A[\alpha|\beta]$ such that $|\alpha|m+|\beta|k>km$. Define $V=\sum_{i\in\alpha}v_i\overline{v_i}^t$ and $W=\sum_{j\in\beta}w_j\overline{w}_j^t$. Notice that $A[\alpha|\beta]$ is identically zero if and only if $tr(T(V)W)=0$.

Next, the existence of orthogonal projections $V\in M_k, W\in M_m$ such that $tr(T(V)W)=0$ and $\text{rank}(V)m+\text{rank}(W)k>km$ is equivalent to $\text{rank}(V)m+\dim(\ker(T(V)))k>km$, since $\Im(W)\subset\ker(T(V))$. Finally, since $\text{rank}(V)m>(m-\dim(\ker(T(V))))k=\text{rank}(T(V))k$ then $T$ has no support if and only if there is an orthogonal projection $V\in M_k$ such that $\text{rank}(V)m>\text{rank}(T(V))k$.\\\\
$(2)$ $(\Rightarrow)$ Let us assume that $T$ has total support. 
Therefore $T$ has support and,  by item $(1)$,
\begin{center}
$m(\text{rank}(A))\leq k(\text{rank}(T(A))),$  for every $A\in P_k$. 
\end{center}

Now, let $ A\in P_k$ be such that  $m(\text{rank}(A))= k(\text{rank}(T(A)))$.

Notice that  if $\text{rank}(A)\in\{0,k\}$ then  $\Im(T(A))=\Im(T(A^{\perp})^{\perp})$ . 

Next, let us assume that  $0<\text{rank}(A)<k$.  Since $m(\text{rank}(A))= k(\text{rank}(T(A)))$ then $0<\text{rank}(T(A))<m$.

Let $\{v_1,\ldots,v_k\}\subset\mathbb{C}^k$ be an orthonormal basis of eigenvectors of $A$ and $\{w_1,\ldots,w_m\}\subset\mathbb{C}^m$ be an orthonormal basis of eigenvectors of $T(A)$. Thus, $A=\sum_{i=1}^ka_iv_i\overline{v}_i^t$ and $T(A)=\sum_{j=1}^mb_jw_j\overline{w}_j^t$, where $a_i,b_j$ are the non-negative eigenvalues. 

Let $\alpha=\{i\ |\ a_i>0\}$ and $\beta=\{j\ |\ b_j=0\}$. Notice that $\alpha$ and $\beta$ are non empty, since  $0<\text{rank}(A)<k$ and $0<\text{rank}(T(A))<m$.

Consider the matrix $C=(tr(T(v_i\overline{v}_i^t)w_j\overline{w}_j^t))\in M_{k\times m}$. Notice that $C[\alpha|\beta]$ is identically zero.  Since $C$ has total support, by hypothesis, and  $$|\alpha|m+|\beta|k=\text{rank}(A)m+(m-\text{rank}(T(A)))k=mk$$ 
 then $C(\alpha|\beta)$ must be identically zero, by item $(2)$ of lemma \ref{lemmasupport}. Hence, $tr(T(A^{\perp})T(A))=0$. So $\Im(T(A))\subset\ker(T(A^{\perp}))=\Im(T(A^{\perp})^{\perp})$.

Now, since  $ \text{rank}(T(A^{\perp}))k\geq \text{rank}(A^{\perp})m$ then
$$\text{rank}(T(A^{\perp})^{\perp})k=(m-\text{rank}(T(A^{\perp})))k\leq mk-\text{rank}(A^{\perp})m=\text{rank}(A)m\leq \text{rank}(T(A))k.$$  

Therefore, $\Im(T(A))=\Im(T(A^{\perp})^{\perp})$.\\\\
$(\Leftarrow)$  Let $\{r_1,\ldots,r_k\}\subset\mathbb{C}^k$  and $\{s_1,\ldots,s_m\}\subset\mathbb{C}^m$ be any orthonormal bases. Consider the matrix $B=(tr(T(r_i\overline{r}_i^t)s_j\overline{s}_j^t))\in M_{k\times m}$.

If $B[\alpha'|\beta']$ is identically zero then $tr(T(R)S)=0$, where $R=\sum_{i\in\alpha'}r_i\overline{r}_i^t$ and $S=\sum_{j\in\beta'}s_j\overline{s}_j^t$. 

Thus, $\Im(T(R))\subset\ker(S)$ which implies $$\text{rank}(T(R))k\leq (m-\text{rank}(S))k=mk-|\beta'|k.$$

 By assumption, $\text{rank}(R)m\leq \text{rank}(T(R))k$ then  $$|\alpha'|m=\text{rank}(R)m\leq \text{rank}(T(R))k \leq mk-|\beta'|k.$$  
 
 Hence, $|\alpha'|m+|\beta'|k\leq km$.

Now, assume $|\alpha'|m= km-|\beta'|k$. Since $$|\alpha'|m=\text{rank}(R)m\leq \text{rank}(T(R))k\leq (m-\text{rank}(S))k=mk-|\beta'|k$$  then $\text{rank}(R)m=\text{rank}(T(R))k=(m-\text{rank}(S))k$.
 Therefore, by hypothesis, $\Im
(T(R))=\Im(T(R^{\perp})^{\perp})$.

Next, since $\Im(T(R))\subset \ker(S)=\Im(S^{\perp})$ and $$\text{rank}(T(R))=m-\text{rank}(S)=\text{rank}(S^{\perp})$$ then  $\Im(S^{\perp})=\Im(T(R))$. Thus, $\Im(S^{\perp})=\Im(T(R))=\Im(T(R^{\perp})^{\perp})$. Therefore, $tr(T(R^{\perp})S^{\perp})=0$ which is equivalent to $B(\alpha'|\beta')$ being identically zero. So, by item $(2)$ of lemma \ref{lemmasupport}, $B$ has total support. Thus, $T$ has total support.
\end{proof}

\begin{remark}\label{remark2} Notice that if $T:VM_kV\rightarrow WM_mW$ is a positive map such that $\Im(T(V))=\Im(W)$ and $\text{rank}(A)\text{rank}(W)\leq \text{rank}(T(A))\text{rank}(V)$, for every $A\in VM_kV\cap P_k$, then $T$ has support. Moreover, if $\text{rank}(A)\text{rank}(W)<\text{rank}(T(A))\text{rank}(V)$, for every $A\in VM_kV\cap P_k$ with $0<\text{rank}(A)<\text{rank}(V),$ then $T$ has total support.
\end{remark}

\begin{examples}\label{examples1} Let $T: M_k\rightarrow M_m$  be a positive map, $S:M_k\rightarrow M_m$ a doubly stochastic map, $\{v_1,\ldots,v_k\}$ any orthonormal basis of $\mathbb{C}^{k}$ and $\{w_1,\ldots,w_m\}$ any orthonormal basis of $\mathbb{C}^{m}$ .
\begin{enumerate}
\item $S$ has total support, since  $(tr(S(v_i\overline{v}_i^t)w_j\overline{w}_j^t))_{k\times m}\otimes 1_{m\times k}$ is a doubly stochastic matrix  scaled by $\sqrt{km}$ and every doubly stochastic matrix has total support $($See \cite{Sinkhorn}$)$.
\item If $k$ and $m$ are coprime then $T: M_k\rightarrow M_m$ has total support iff $T$ has support, by item $(4)$ of lemma \ref{lemmasupport}.
\end{enumerate}
\end{examples}

Let us describe an adaptation of Sinkhorn and Knopp algorithm for positive maps. 

\begin{algorithm}\label{algorithm}$($The Scaling Algorithm$)$
Let $T: M_k\rightarrow M_m$ be a positive map such that $T(Id)$ and $T^*(Id)$ are positive definite Hermitian matrices. 
Define 
\begin{itemize}
\item[] $X_0=Id\in M_k$, $Y_0=(\frac{Id}{\sqrt{m}})^{\frac{1}{2}}T(X_0\frac{Id}{\sqrt{k}}X_0^*)^{-\frac{1}{2}}$,
\item[] $A_0=X_0^* T^*(Y_0^*\frac{Id}{\sqrt{m}}Y_0)X_0$,
\item[] $X_1=X_0A_0^{-\frac{1}{2}}(\frac{Id}{\sqrt{k}})^{\frac{1}{2}}$
\item[] $B_0=Y_0T(X_1\frac{Id}{\sqrt{k}} X_1^*)Y_0^*$
\end{itemize}

 Notice that $A_0$ is a positive definite Hermitian matrix, since $Y_0^*Y_0$ and $T^*(Id)$ are positive definite Hermitian matrices. Analogously, $B_0$ is a positive definite Hermitian matrix. Notice also that $X_0,Y_0,X_1$ are invertible matrices.

Assume that  $X_n,Y_n,A_n,X_{n+1},B_n$ are defined such that $A_n$, $B_n$ are positive definite Hermitian matrices and 
$X_n,Y_n,X_{n+1}$ are invertible matrices. Define 

\begin{itemize}
\item[]  $Y_{n+1}=(\frac{Id}{\sqrt{m}})^{\frac{1}{2}}B_n^{-\frac{1}{2}}Y_n$,
\item[] $A_{n+1}=X_{n+1}^*T^*(Y_{n+1}^*\frac{Id}{\sqrt{m}}Y_{n+1})X_{n+1}$, 
\item[]  $X_{n+2}=X_{n+1}A_{n+1}^{-\frac{1}{2}}(\frac{Id}{\sqrt{k}})^{\frac{1}{2}}$,
\item[]  $B_{n+1}=Y_{n+1}T(X_{n+2}\frac{Id}{\sqrt{k}}X_{n+2}^*)Y_{n+1}^*$.
\end{itemize} 
 
  Notice that $A_{n+1}$ is a positive definite Hermitian matrix, since $Y_{n+1}^*Y_{n+1}$ and $T^*(Id)$ are positive definite Hermitian matrices. Analogously, $B_{n+1}$ is a positive definite Hermitian matrix. Notice also that $X_{n+1},Y_{n+1},X_{n+2}$ are invertible matrices.\\

\end{algorithm}

\begin{lemma}\label{propertiesalgorithm} Let  $T: M_k\rightarrow M_m$ be a positive map such that $T(Id)$, $T^*(Id)$ are positive definite Hermitian matrices. Let $X_n,A_n,Y_n,B_n$ be as defined in algorithm \ref{algorithm}. Then, 

\begin{enumerate}
\item $Y_nT(X_n\frac{Id}{\sqrt{k}} X_n^*)Y_n^*=\frac{Id}{\sqrt{m}}$, $X_{n+1}^*T^*(Y_n^*\frac{Id}{\sqrt{m}} Y_n)X_{n+1}=\frac{Id}{\sqrt{k}} $, for every $n> 0$

\item $tr(A_n)=\sqrt{k}$, $tr(B_n)=\sqrt{m}$,  for every $n > 0$
\item $0<\det(X_n\otimes Y_n)\leq\det(X_{n+1}\otimes Y_{n+1})$
\item If $(\det(X_n\otimes Y_n))_{n\in\mathbb{N}}$ is bounded then every limit point of $(Y_nT(X_n(\cdot)X_n^*)Y_n^*)_{n\in\mathbb{N}}$ is a doubly  stochastic map.
\end{enumerate}
\end{lemma}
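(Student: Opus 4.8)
For item (1), I would argue by direct substitution of the recursive definitions. For the first identity, the definition gives $B_{n-1}=Y_{n-1}T(X_n\frac{Id}{\sqrt{k}}X_n^*)Y_{n-1}^*$ and $Y_n=(\frac{Id}{\sqrt{m}})^{1/2}B_{n-1}^{-1/2}Y_{n-1}$; since $B_{n-1}$ and $\frac{Id}{\sqrt{m}}$ are Hermitian, substituting $Y_n$ into $Y_nT(X_n\frac{Id}{\sqrt{k}}X_n^*)Y_n^*$ makes the central block equal to $B_{n-1}$, which cancels against the two copies of $B_{n-1}^{-1/2}$, leaving $\frac{Id}{\sqrt{m}}$. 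The second identity is entirely parallel, using $A_n=X_n^*T^*(Y_n^*\frac{Id}{\sqrt{m}}Y_n)X_n$ and $X_{n+1}=X_nA_n^{-1/2}(\frac{Id}{\sqrt{k}})^{1/2}$; the central block becomes $A_n$ and cancels. These are purely formal and hold for every $n\geq 1$ as stated, the case $n=0$ being a direct check from the definitions of $X_0,Y_0$.

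For item (2), I would combine item (1) with the defining property of the adjoint, $tr(T^*(C)D)=tr(CT(D))$ for Hermitian $C,D$ (using that $T$ preserves Hermitian matrices), together with cyclicity of the trace. Starting from $tr(A_n)=tr\big(X_n^*T^*(Y_n^*\frac{Id}{\sqrt{m}}Y_n)X_n\big)$, I would move $X_nX_n^*$ across the adjoint and use $T(X_nX_n^*)=\sqrt{k}\,T(X_n\frac{Id}{\sqrt{k}}X_n^*)$; item (1) then collapses the expression to $\sqrt{k}\,tr\big(\frac{Id}{\sqrt{m}}\frac{Id}{\sqrt{m}}\big)=\sqrt{k}$. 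The computation of $tr(B_n)$ is symmetric, now using $Y_n^*Y_n=\sqrt{m}\,Y_n^*\frac{Id}{\sqrt{m}}Y_n$ and the second identity of item (1), yielding $\sqrt{m}$.

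For item (3), I would first establish positivity of the determinants: by induction $\det(X_n)$ and $\det(Y_n)$ are products of the positive starting values with determinants of positive definite Hermitian matrices (the inverse square roots $A_n^{-1/2},B_n^{-1/2}$ and the scalar matrices $(\frac{Id}{\sqrt{k}})^{1/2},(\frac{Id}{\sqrt{m}})^{1/2}$), hence strictly positive reals, so $\det(X_n\otimes Y_n)=\det(X_n)^m\det(Y_n)^k>0$. For monotonicity I would use the Kronecker determinant formula with $\det(X_{n+1})/\det(X_n)=\det(A_n)^{-1/2}k^{-k/4}$ and $\det(Y_{n+1})/\det(Y_n)=m^{-m/4}\det(B_n)^{-1/2}$ to get $\det(X_{n+1}\otimes Y_{n+1})/\det(X_n\otimes Y_n)=(km)^{-mk/4}\det(A_n)^{-m/2}\det(B_n)^{-k/2}$. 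The AM-GM inequality for the positive definite $A_n,B_n$ gives $\det(A_n)\leq(tr(A_n)/k)^k=k^{-k/2}$ and $\det(B_n)\leq(tr(B_n)/m)^m=m^{-m/2}$ by item (2), and these bounds force the ratio to be $\geq 1$.

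For item (4), the increasing (item (3)) and bounded sequence $\det(X_n\otimes Y_n)$ converges, so consecutive ratios tend to $1$. Factoring the ratio as $\big(k^{-k/2}/\det(A_n)\big)^{m/2}\big(m^{-m/2}/\det(B_n)\big)^{k/2}$, both factors are $\geq 1$ with product tending to $1$, forcing $\det(A_n)\to k^{-k/2}$ and $\det(B_n)\to m^{-m/2}$. Since $A_n$ is positive definite with fixed trace $\sqrt{k}$ and its determinant approaches the unique maximizer of $\prod\lambda_i$ subject to $\sum\lambda_i=\sqrt{k}$, a compactness argument on the eigenvalues forces every eigenvalue to $1/\sqrt{k}$, whence $A_n\to\frac{Id}{\sqrt{k}}$ in spectral norm; similarly $B_n\to\frac{Id}{\sqrt{m}}$. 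Writing $S_n=Y_nT(X_n(\cdot)X_n^*)Y_n^*$, item (1) gives $S_n(\frac{Id}{\sqrt{k}})=\frac{Id}{\sqrt{m}}$ for every $n$, and a short adjoint computation gives $S_n^*(W)=X_n^*T^*(Y_n^*WY_n)X_n$, so $S_n^*(\frac{Id}{\sqrt{m}})=A_n$. Along any subsequence with $S_{n_j}\to S$, continuity of the map and of its adjoint yields $S(\frac{Id}{\sqrt{k}})=\frac{Id}{\sqrt{m}}$ and $S^*(\frac{Id}{\sqrt{m}})=\lim A_{n_j}=\frac{Id}{\sqrt{k}}$; as a limit of positive maps $S$ is positive, hence doubly stochastic. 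The main obstacle is exactly this last step: upgrading the scalar convergence $\det(A_n)\to k^{-k/2}$ to the matrix convergence $A_n\to\frac{Id}{\sqrt{k}}$ through the equality case of AM-GM at fixed trace, since this is what links the monotone determinant quantity to the doubly stochastic conclusion.
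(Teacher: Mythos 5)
Your proof is correct and takes essentially the same route as the paper: the same telescoping cancellation for item (1), the adjoint/trace pairing with item (1) for item (2), the determinant--trace AM--GM inequality for the monotonicity in item (3), and the equality case of AM--GM at fixed trace to upgrade $\det(A_n)\to k^{-k/2}$ to $A_n\to \frac{Id}{\sqrt{k}}$ in item (4). The only cosmetic difference is that you factor the determinant ratio into separate $A_n$- and $B_n$-factors, whereas the paper works with the single Kronecker product $(\sqrt{k}A_n)\otimes(\sqrt{m}B_n)$ and identifies its limit as $Id\otimes Id$ before tracing out; this changes nothing essential.
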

\begin{proof}
$(1)$ Notice that 

\begin{itemize}
\item[] $\begin{array}{c}
Y_{n+1}T(X_{n+1}\frac{Id}{\sqrt{k}}X_{n+1}^*)Y_{n+1}^*=
(\frac{Id}{\sqrt{m}})^{\frac{1}{2}}B_n^{-\frac{1}{2}}Y_nT(X_{n+1}\frac{Id}{\sqrt{k}}X_{n+1}^*)Y_n^*B_n^{-\frac{1}{2}}(\frac{Id}{\sqrt{m}})^{\frac{1}{2}}=\\ (\frac{Id}{\sqrt{m}})^{\frac{1}{2}}B_n^{-\frac{1}{2}}B_nB_n^{-\frac{1}{2}}(\frac{Id}{\sqrt{m}})^{\frac{1}{2}}=\frac{Id}{\sqrt{m}}\ ,
\end{array}\\\\
$
\item[] $
\begin{array}{c}
X_{n+1}^*T^*(Y_{n}^*\frac{Id}{\sqrt{m}}Y_{n})X_{n+1}= (\frac{Id}{\sqrt{k}})^{\frac{1}{2}}A_{n}^{-\frac{1}{2}}X_{n}^*T^*(Y_{n}^*\frac{Id}{\sqrt{m}}Y_{n})X_{n}A_{n}^{-\frac{1}{2}}(\frac{Id}{\sqrt{k}})^{\frac{1}{2}}=\\ (\frac{Id}{\sqrt{k}})^{\frac{1}{2}}A_{n}^{-\frac{1}{2}}A_{n}A_{n}^{-\frac{1}{2}}(\frac{Id}{\sqrt{k}})^{\frac{1}{2}}=\frac{Id}{\sqrt{k}}\ .\\\\
\end{array}
$ 
\end{itemize}
 $(2)$ Notice that 
\begin{itemize}
\item[] $\begin{array}{c}
tr(A_{n})=\sqrt{k}\langle X_{n}^*T^*(Y_{n}^*\frac{Id}{\sqrt{m}}Y_{n})X_{n},\frac
 {Id}{\sqrt{k}}\rangle =\sqrt{k}\langle \frac{Id}{\sqrt{m}}, Y_{n}T(X_{n}\frac{Id}{\sqrt{k}}X_{n}^*)Y_{n}^*\rangle=\\ \sqrt{k}\langle \frac{Id}{\sqrt{m}},\frac{Id}{\sqrt{m}}\rangle=\sqrt{k}\ ,\end{array}\\\\
$
\item[] $
\begin{array}{c}
tr(B_{n})=\sqrt{m}\langle Y_{n}T(X_{n+1}\frac{Id}{\sqrt{k}}X_{n+1}^*)Y_{n} , \frac{Id}{\sqrt{m}}\rangle
=\sqrt{m} \langle \frac{Id}{\sqrt{k}}, X_{n+1}^*T^*(Y_{n}^*\frac{Id}{\sqrt{m}}Y_{n})X_{n+1}\rangle=\\ \sqrt{m} \langle\frac{Id}{\sqrt{k}},\frac{Id}{\sqrt{k}}\rangle=\sqrt{m}\ .\\\\
\end{array}
$ 
\end{itemize}
$(3)$ Since $(\sqrt{k}A_n)\otimes (\sqrt{m}B_n)$ is a positive definite Hermitian matrix then $$\det((\sqrt{k}A_n)\otimes (\sqrt{m}B_n))\leq \left(\frac{tr((\sqrt{k}A_n)\otimes (\sqrt{m}B_n))}{km}\right)^{km}= 1.$$  

Thus, 
$$\begin{array}{r}
\dfrac{\det(X_n\otimes Y_n)}{\det(X_{n+1}\otimes Y_{n+1})}=\det(X_{n+1}^{-1}X_n\otimes Y_n Y_{n+1}^{-1})=\det((\sqrt{k}A_n)^{\frac{1}{2}}\otimes (\sqrt{m}B_n)^{\frac{1}{2}})=\\ \det((\sqrt{k}A_n)\otimes (\sqrt{m}B_n))^{\frac{1}{2}}\leq 1.\\\\
\end{array}
$$

Therefore, $0<\det(X_1\otimes Y_1)\leq \det(X_n\otimes Y_n)\leq\det(X_{n+1}\otimes Y_{n+1})$.
\\\\
$(4)$ Since $(\det(X_n\otimes Y_n))_{n\in\mathbb{N}}$ is bounded then, by item $(3)$,  ${\displaystyle \lim_{n\rightarrow \infty}}\det(X_n\otimes Y_n)=L>0$. Thus,

 $$1={\displaystyle\lim_{n\rightarrow\infty}}\left(\frac{\det(X_n\otimes Y_n)}{\det(X_{n+1}\otimes Y_{n+1})}\right)^2={\displaystyle\lim_{n\rightarrow\infty}}\det(X_{n+1}^{-1}X_n\otimes Y_n Y_{n+1}^{-1})^2={\displaystyle\lim_{n\rightarrow\infty}}\det((\sqrt{k}A_n)\otimes (\sqrt{m}B_n)).$$

Let $C$ be a limit point of the sequence $((\sqrt{k}A_n)\otimes (\sqrt{m}B_n))_{n\in\mathbb{N}}$  (there are limit points since $tr((\sqrt{k}A_n)\otimes (\sqrt{m}B_n))=km$, by item 2, and $A_n,B_n$ are positive definite). Hence, 
 $\det(C)=1$ and $tr(C)=km$. Since $C$ is also positive semidefinite then $C=Id\otimes Id$.
 
  Hence, ${\displaystyle \lim_{n\rightarrow \infty}}(\sqrt{k}A_n)\otimes (\sqrt{m}B_n)=Id\otimes Id$ and $${\displaystyle \lim_{n\rightarrow \infty}}(\sqrt{k}A_n)tr(\sqrt{m}B_n)={\displaystyle \lim_{n\rightarrow \infty}}(\sqrt{k}A_n)m=mId.$$

 So, ${\displaystyle \lim_{n\rightarrow \infty}}A_n=\frac{Id}{\sqrt{k}}$.

 Since the operator norm of a positive map induced by the spectral norm is attained at the identity (\cite[corollary 2.3.8]{Bhatia1}) then $\|Y_nT(X_n(\cdot)X_n^*)Y_n^*\|=\| \sqrt{k}Y_nT(X_n\frac{Id}{\sqrt{k}}X_n^*)Y_n^*\|_2=\|\frac{\sqrt{k} Id}{\sqrt{m}}\|_2=\frac{\sqrt{k}}{\sqrt{m}}$. Thus, there are limit points of the sequence of positive maps $(Y_nT(X_n(\cdot)X_n^*)Y_n^*)_{n\in\mathbb{N}}$. 
 
 Since $Y_nT(X_n\frac{Id}{\sqrt{k}}X_n^*)Y_n^*=\frac{Id}{\sqrt{m}}$ and $X_{n}^*T^*(Y_{n}^*\frac{Id}{\sqrt{m}}Y_{n})X_{n}=A_{n}\stackrel{n\rightarrow \infty}{\longrightarrow}\frac{Id}{\sqrt{k}}$ then these limit points are doubly stochastic. 
\end{proof}
\hspace{0.5 cm}
\begin{lemma}\label{lemmacommute} Let  $T: M_k\rightarrow M_m$ be a positive map such that $T(Id)$, $T^*(Id)$ are positive definite Hermitian matrices. Let $X_{n}\in M_k,Y_{n}\in M_m$ be the matrices defined in algorithm \ref{algorithm}. Assume that there are orthogonal projections $V_i\in M_k$, $W_i\in M_m$, $1\leq i\leq s$, such that
\begin{enumerate}
\item $V_iV_j=0$, $W_iW_j=0$, for $i\neq j$, 
\item $Id_{k\times k}=\sum_{i=1}^sV_i$, $Id_{m\times m}=\sum_{i=1}^s W_i$,
\item $T(V_iM_kV_i)\subset W_iM_mW_i$, for $1\leq i\leq s$.
\end{enumerate}  Then $X_nV_i=V_iX_n$ and $Y_nW_i=W_iY_n$, for every $i,n$.
\end{lemma}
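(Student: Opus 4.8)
The plan is to induct on $n$, proving simultaneously that every $X_n$ lies in the commutant of $\{V_1,\dots,V_s\}$ and every $Y_n$ lies in the commutant of $\{W_1,\dots,W_s\}$ — equivalently, that $X_n$ and $Y_n$ are block diagonal for the orthogonal decompositions $\mathbb{C}^k=\bigoplus_i\Im(V_i)$ and $\mathbb{C}^m=\bigoplus_i\Im(W_i)$. The commutant of a complete family of orthogonal projections is a unital $*$-algebra closed under adjoints, inverses, and the positive square root (by functional calculus), so once $X_n,Y_n$ are block diagonal the same holds automatically for $X_n^*,Y_n^{-1}$, for $(\tfrac{Id}{\sqrt{k}})^{1/2}$, for the factors $A_n^{-1/2},B_n^{-1/2}$ appearing in the algorithm, and for all their products. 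Thus the only genuine content is that $T$ and $T^*$ transport block-diagonal matrices to block-diagonal matrices.

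Concretely, I would first record two transfer properties: (a) if $P\in M_k$ commutes with every $V_i$ then $T(P)$ commutes with every $W_i$; (b) if $Q\in M_m$ commutes with every $W_i$ then $T^*(Q)$ commutes with every $V_i$. Property (a) is immediate from hypothesis $(3)$: writing $P=\sum_iV_iPV_i$ and using $T(V_iPV_i)\in W_iM_mW_i$ gives $T(P)=\sum_iT(V_iPV_i)\in\bigoplus_iW_iM_mW_i$. Property (b) is the crux. By the adjoint identity $\langle T(X),W_iYW_i\rangle=\langle X,T^*(W_iYW_i)\rangle$ it is equivalent to $W_iT(X)W_i=T(V_iXV_i)$ for every $X$ and every $i$; after expanding $X=\sum_{p,q}V_pXV_q$ and using $(3)$ for the diagonal blocks, this reduces to the single claim $W_iT(V_pXV_q)W_i=0$ whenever $(p,q)\ne(i,i)$. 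The case $p=q\ne i$ is trivial from $(3)$, so everything comes down to the off-diagonal source blocks $p\ne q$.

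For those I would use positivity of $T$ (recall that a positive map preserves adjoints). Fix $j\ne l$, $v\in\Im(V_j)$, $w\in\Im(V_l)$; since $V_pXV_q$ is a linear combination of such rank-one $vw^*$, it suffices to show $W_iT(vw^*)W_i=0$ for every $i$. For each $z\in\mathbb{C}$ the matrix $T\big((v+zw)(v+zw)^*\big)$ is positive semidefinite, hence so is its compression by $W_i$; expanding and using $T(vv^*)\in W_jM_mW_j$, $T(ww^*)\in W_lM_mW_l$ collapses this compression to the form $G+\bar{z}M+zM^*\ge 0$ with $G\ge 0$ and $M=W_iT(vw^*)W_i$, where one chooses the parametrization $u=v+zw$ when $i\ne l$ and $u=w+zv$ when $i=l$ precisely so that the $|z|^2$-term always drops out. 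The elementary fact that $G+\bar{z}M+zM^*\ge0$ for all $z$ forces $M=0$: dividing by $|z|$ and letting $|z|\to\infty$ gives $e^{-i\phi}M+e^{i\phi}M^*\ge0$ for every phase $\phi$, and replacing $\phi$ by $\phi+\pi$ then yields $M=0$. This proves (b).

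With (a) and (b) in hand the induction is routine bookkeeping. For the base case, $X_0=Id$ is block diagonal; $T(Id)=\sum_iT(V_i)$ is block diagonal by (a), so $Y_0=(\tfrac{Id}{\sqrt{m}})^{1/2}T(\tfrac{1}{\sqrt{k}}Id)^{-1/2}$ is too; then $A_0=T^*(\tfrac{1}{\sqrt{m}}Y_0^*Y_0)$ is block diagonal by (b), $X_1=A_0^{-1/2}(\tfrac{Id}{\sqrt{k}})^{1/2}$ by closure, and $B_0=\tfrac{1}{\sqrt{k}}Y_0T(X_1X_1^*)Y_0^*$ by (a). In the inductive step the same four observations, applied to the recursion formulas for $Y_{n+1},A_{n+1},X_{n+2},B_{n+1}$, show each matrix stays block diagonal, which is exactly the assertion $X_nV_i=V_iX_n$ and $Y_nW_i=W_iY_n$ for all $i,n$. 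The main obstacle is property (b): showing that the block-structure hypothesis $(3)$ on $T$ forces the adjoint $T^*$ to respect the dual block structure. This is where positivity is indispensable, and it is extracted from the one-parameter family of rank-one perturbations $(v+zw)(v+zw)^*$ rather than from any $2$-positivity assumption.
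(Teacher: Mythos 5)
Your proof is correct, and its induction scaffolding is the same as the paper's: block diagonality is preserved under adjoints, products, inverses, and functional-calculus square roots (the paper phrases this as $A_n^{-1/2}$, $B_n^{-1/2}$, $Y_0$ being polynomials in positive definite matrices), so everything reduces to your two transfer properties, of which (a) is immediate from hypothesis $(3)$ in both treatments. Where you genuinely diverge is the crux, property (b). The paper obtains $T^*(W_jM_mW_j)\subset V_jM_kV_j$ by a one-line trace-orthogonality argument: $\langle T^*(W_j),\sum_{i\neq j}V_i\rangle=\langle W_j,\sum_{i\neq j}T(V_i)\rangle=0$ since $T(V_i)\in W_iM_mW_i$ and $W_jW_i=0$ for $i\neq j$; as $T^*(W_j)\succeq 0$ and $\sum_{i\neq j}V_i$ is the projection onto $\Im(V_j)^{\perp}$, this forces $\Im(T^*(W_j))\subset\Im(V_j)$, and order-preservation (any $0\le Q\le \|Q\|_2 W_j$ satisfies $\Im(T^*(Q))\subset\Im(T^*(W_j))$) then transfers the containment to the whole corner algebra. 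You instead prove the pointwise compression identity $W_iT(X)W_i=T(V_iXV_i)$ via the rank-one perturbations $(v+zw)(v+zw)^*$ and the elementary fact that $G+\bar{z}M+zM^*\succeq 0$ for all $z\in\mathbb{C}$ with $G\succeq 0$ forces $M=0$; your reduction of (b) to this identity through the adjoint pairing is valid, since hypothesis $(3)$ gives $W_iT(V_iXV_i)W_i=T(V_iXV_i)$, and your case analysis on the parametrization ($i\notin\{j,l\}$, $i=j$, $i=l$) correctly kills the constant and $|z|^2$ terms in each case. Both routes use only positivity of $T$ (no $2$-positivity), the paper needing it through $T^*$ applied to the single projections $W_j$; your polarization argument is longer but buys the stronger structural statement that every off-diagonal source block $V_pXV_q$, $p\neq q$, is annihilated by every target compression $W_i(\cdot)W_i$, which is the dual form of what the paper proves.
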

\begin{proof} Let $[C,D]=CD-DC$.
Notice that $\Im(T(V_i))=\Im(W_i)$ for every $i$, since $T(Id)$ is positive definite and  $T(V_iM_kV_i)\subset W_iM_mW_i$ for every $i$.

Next, $\langle T^*(W_j),\sum_{i\neq j}V_i\rangle=\langle W_j,\sum_{i\neq j} T(V_i)\rangle=0$. Thus,  $\Im(T^*(W_j))=\Im(V_j)$ and  $T^*(W_jM_mW_j)\subset V_jM_kV_j$, for every $j$.

Since $X_0=Id$ then $[V_i,X_0]=0$ for every $i$. Since $T(\frac{Id}{\sqrt{k}})=\sum_{i=1}^s\frac{1}{\sqrt{k}}T(V_i)$ and $T(V_i)\in W_iM_mW_i$  then $[T(\frac{Id}{\sqrt{k}}),W_i]=0$, for every $i$. 

Now, $Y_0=(\frac{Id}{\sqrt{m}})^{\frac{1}{2}}T(\frac{Id}{\sqrt{k}})^{-\frac{1}{2}}$ is a polynomial of $T(\frac{Id}{\sqrt{k}})$, since $T(\frac{Id}{\sqrt{k}})$ is a positive definite Hermitian matrix. Therefore,  $[Y_0,W_i]=0$, for every $i$.

By induction on $n$, assume that $[X_n,V_i]=0$ and $[Y_n,W_i]=0$, for every $i$.
Thus, $[X_n^*,V_i]=0$ and $[Y_n^*,W_i]=0$, for every $i$. 

Therefore, $Y_n^*Y_n\in \bigoplus_{i=1}^s W_iM_mW_i$ and   $$\displaystyle T^*(Y_{n}^*\frac{Id}{\sqrt{m}}Y_{n})\in \bigoplus_{i=1}^s V_iM_kV_i .$$

Next, since $[X_n,V_i]=[X_n^*,V_i]=0$ and  $T^*(Y_{n}^*\frac{Id}{\sqrt{m}}Y_{n})\in \bigoplus_{i=1}^s V_iM_kV_i $ then $[A_n,V_i]=0$, for every $i$, where $A_n=X_{n}^*T^*(Y_{n}^*\frac{Id}{\sqrt{m}}Y_{n})X_{n}$.

Notice that $A_n^{-\frac{1}{2}}$ is a polynomial of $A_n$,  since $A_n$ is a positive definite Hermitian matrix. Hence, $[X_{n+1},V_i]=[X_{n+1}^*,V_i]=0$, for every $i$, where $X_{n+1}=X_{n}A_{n}^{-\frac{1}{2}}(\frac{Id}{\sqrt{k}})^{\frac{1}{2}}$.

Therefore, $X_{n+1}X_{n+1}^*\in \bigoplus_{i=1}^s V_iM_kV_i$ and $$T(X_{n+1}\frac{Id}{\sqrt{k}} X_{n+1}^*) \in \bigoplus_{i=1}^s W_iM_mW_i.$$

Next, since $[Y_n,W_i]=[Y_n^*,W_i]=0$, for every $i$,  and $T(X_{n+1}X_{n+1}^*) \in \bigoplus_{i=1}^s W_iM_mW_i$ then $[B_n,W_i]=0$, for every $i$, where $B_{n}=Y_{n}T(X_{n+1}\frac{Id}{\sqrt{k}}X_{n+1}^*)Y_{n}^*$.

 Finally,  $B_n$ is a positive definite Hermitian matrix, hence $B_n^{-\frac{1}{2}}$ is a polynomial of $B_n$. Therefore, $[Y_{n+1},W_i]=0$, for every $i$, where $Y_{n+1}=(\frac{Id}{\sqrt{m}})^{\frac{1}{2}}B_n^{-\frac{1}{2}}Y_n$. The induction is complete.
\end{proof}

\vspace{0.1 cm}

\section{Main Results}

\vspace{0.3 cm}

In this section, we show that if  $T:M_k\rightarrow M_m$ has total support then there are invertible matrices $X'\in M_k$, $Y'\in M_m$  such that $Y'T(X'XX'^*)Y'^*$ is doubly stochastic (lemma \ref{keylemma}).  Differently from the classical Sinkhorn-Knopp theorem, the condition of total support is not necessary for the equivalence  of a positive map with a doubly stochastic one (See remark \ref{remark3}).

As a  consequence of  lemma \ref{keylemma}, we obtain a necessary and sufficient condition for the equivalence of a positive map with a doubly stochastic map (theorems \ref{maintheorem1} and \ref{maintheorem}).

\vspace{0.3 cm}

\begin{lemma}\label{keylemma}
Let  $T: M_k\rightarrow M_m$ be a positive map such that $T(Id)$, $T^*(Id)$ are positive definite Hermitian matrices. Let $T_1: M_k\rightarrow M_m$ be any limit point of the sequence of positive maps $(Y_{n}T(X_{n}(\cdot)X_{n}^*)Y_{n}^*)_{n\in\mathbb{N}}$, where $X_{n}\in M_k,Y_{n}\in M_m$ are defined in algorithm \ref{algorithm}. Let $V_i\in M_k$, $W_i\in M_m$, $1\leq i\leq s$, be orthogonal projections such that $V_iV_j=0$, $W_iW_j=0$, for $i\neq j$, and $Id=\sum_{i=1}^sV_i$, $Id=\sum_{i=1}^s W_i$ and $T(V_iM_kV_i)\subset W_iM_mW_i$, for $1\leq i\leq s$. Let $\text{rank}(W_i)=w_i$ and $\text{rank}(V_i)=v_i$.

\begin{enumerate}
\item If $T:V_iM_kV_i\rightarrow W_iM_mW_i$ has support and $\frac{w_i}{v_i}=\frac{m}{k}$, for every $i$, then $T_1$ is doubly stochastic.
\item If $T:V_iM_kV_i\rightarrow W_iM_mW_i$ has total support and $\frac{w_i}{v_i}=\frac{m}{k}$, for every $i$,  then there are invertible matrices $X'\in M_k$, $Y'\in M_m$  such that $T_1(X)=Y'T(X'XX'^*)Y'^*$.
\end{enumerate}
\end{lemma}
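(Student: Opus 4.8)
The plan is to collapse the problem to the case $s=1$ and then run the two halves of the classical Sinkhorn--Knopp dichotomy, using the monotone quantity $\det(X_n\otimes Y_n)$ from item $(3)$ of lemma \ref{propertiesalgorithm} as the potential. First I would apply lemma \ref{lemmacommute}: its hypotheses hold, so $X_n=\bigoplus_{i=1}^sX_n^{(i)}$ and $Y_n=\bigoplus_{i=1}^sY_n^{(i)}$ are block diagonal for $\mathbb{C}^k=\bigoplus\Im(V_i)$, $\mathbb{C}^m=\bigoplus\Im(W_i)$, and $\Im(T(V_i))=\Im(W_i)$ makes each restricted map $T:V_iM_kV_i\to W_iM_mW_i$ admissible for the algorithm. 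Because $\frac{w_i}{v_i}=\frac{m}{k}$, the restrictions $\frac{Id}{\sqrt{k}}|_{\Im(V_i)}=\frac{V_i}{\sqrt{k}}$ and $\frac{Id}{\sqrt{m}}|_{\Im(W_i)}=\frac{W_i}{\sqrt{m}}$ are the \emph{same} scalar $\sqrt{v_i/k}=\sqrt{w_i/m}$ times the intrinsic normalizations $\frac{V_i}{\sqrt{v_i}}$, $\frac{W_i}{\sqrt{w_i}}$; an induction shows this common scalar cancels at every step, so $X_n^{(i)},Y_n^{(i)}$ are exactly the iterates of algorithm \ref{algorithm} for $T:V_iM_kV_i\to W_iM_mW_i$. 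Since the global conditions $T_1(\frac{Id}{\sqrt k})=\frac{Id}{\sqrt m}$, $T_1^*(\frac{Id}{\sqrt m})=\frac{Id}{\sqrt k}$ are the sum over $i$ of the block conditions, and similarly for the equivalence, I may assume $s=1$ and $V=W=Id$, so that $T:M_k\to M_m$ itself has support (resp.\ total support).

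For part $(1)$, by item $(4)$ of lemma \ref{propertiesalgorithm} it suffices to bound $\det(X_n\otimes Y_n)$. Writing $P_n=X_nX_n^*$, item $(1)$ gives $Y_nT(P_n)Y_n^*=\sqrt{\tfrac{k}{m}}\,Id$; taking determinants yields $|\det Y_n|^2\det(T(P_n))=(k/m)^{m/2}$, and with $|\det X_n|^2=\det(P_n)$ I obtain the identity $\det(X_n\otimes Y_n)=(k/m)^{mk/4}\big(\det(P_n)^m/\det(T(P_n))^k\big)^{1/2}$. Hence boundedness is equivalent to positivity of the scale invariant capacity $\text{cap}(T)=\inf\{\det(T(P))^k: P\succ0,\ \det(P)=1\}$. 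I would prove support $\Rightarrow\text{cap}(T)>0$ by contradiction: if $\det(P_j)=1$ with $\det(T(P_j))\to0$, split the spectrum of $P_j$ into a dominant part, supported in the limit on the range of a projection $V$, and a vanishing part; a direct asymptotic estimate gives $\det(T(P_j))\sim\|P_j\|^{(\text{rank}(T(V))k-\text{rank}(V)m)/(k-\text{rank}(V))}$, whose exponent is nonnegative precisely because support forces $\text{rank}(V)m\le\text{rank}(T(V))k$ (lemma \ref{lemmaequivalentsupport}$(1)$), contradicting $\det(T(P_j))\to0$. Thus $\det(X_n\otimes Y_n)$ is bounded and $T_1$ is doubly stochastic.

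For part $(2)$, total support implies support, so part $(1)$ already gives boundedness, $A_n\to\frac{Id}{\sqrt k}$ and $B_n\to\frac{Id}{\sqrt m}$. Cancelling factors via invertibility of $X_n,Y_n$, the positive definite matrices $P_n=X_nX_n^*$ and $R_n=Y_n^*Y_n$ satisfy the closed recursion $P_{n+1}=\sqrt{\tfrac{m}{k}}\,T^*(R_n)^{-1}$, $R_{n+1}=\sqrt{\tfrac{k}{m}}\,T(P_{n+1})^{-1}$. I would reduce the whole claim to showing that the eigenvalues of $P_n,R_n$ stay bounded away from $0$ and $\infty$: granting this, along the subsequence defining $T_1$ I extract, by compactness, limits $P_n\to P\succ0$, $R_n\to R\succ0$ and polar unitaries $U_n\to U$, $V_n\to V$ (writing $X_n=P_n^{1/2}U_n$, $Y_n=V_nR_n^{1/2}$), whence $T_1(X)=VR^{1/2}T\big(P^{1/2}UX(P^{1/2}U)^*\big)R^{1/2}V^*$, i.e.\ the desired $T_1(X)=Y'T(X'XX'^*)Y'^*$ with the invertible $X'=P^{1/2}U$, $Y'=VR^{1/2}$ (the fixed point equations $T(P)=\sqrt{k/m}R^{-1}$, $T^*(R)=\sqrt{m/k}P^{-1}$ confirm the resulting map is doubly stochastic, consistent with part $(1)$).

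The eigenvalue bound is where the work concentrates, and it is the step I expect to be hardest. When the rank inequality of total support is strict for every proper projection — for instance when $k,m$ are coprime, by lemma \ref{lemmasupport}$(4)$ — the exponent in the estimate above is strictly positive, so $\det(T(P))\to\infty$ on the boundary of $\{\det(P)=1\}$, the sublevel sets are compact, and the bound is immediate. In the remaining flat directions, namely the equality case $\text{rank}(A)m=\text{rank}(T(A))k$ together with the image condition of lemma \ref{lemmaequivalentsupport}$(2)$, I would invoke lemma \ref{lemmasinkhorn}: after passing to the $T_1$-subsequence and fixing the stabilized eigenbases of the limiting scaled map, the eigenvalues of $P_n,R_n$ provide positive vectors $v_n\in\mathbb{R}^k$, $w_n\in\mathbb{R}^m$ whose products $v_{i,n}w_{j,n}$ converge to positive limits exactly on the support pattern of $T$ (this being the content of $A_n\to\frac{Id}{\sqrt k}$), and total support of the associated matrix lets lemma \ref{lemmasinkhorn} regauge them into convergent — hence bounded and bounded below — sequences. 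The genuine obstacle is the non-commutativity: extracting honest scalar scaling vectors requires controlling the rotation of the eigenbases of $P_n,R_n$, and it is here that the block reduction and the doubly stochastic structure of $T_1$ must be combined to bring lemma \ref{lemmasinkhorn} into force.
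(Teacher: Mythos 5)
Your block reduction is sound (the scalar $\bigl(\frac{w_i k}{m v_i}\bigr)^{1/4}=1$ cancellation does check out, and it is a cleaner way to exploit lemma \ref{lemmacommute} than the paper's, which keeps general $s$ and instead feeds the direct-sum structure of $(tr(T(l_i\overline{l}_i^t)r_j\overline{r}_j^t))$ into item 8 of lemma \ref{lemmasupport}). Your part $(1)$ also takes a genuinely different route: you convert boundedness of $\det(X_n\otimes Y_n)$ into positivity of a capacity, which is correct in spirit (cf.\ remark \ref{remarkpolynomialtime}), but your "direct asymptotic estimate" is only justified for a two-scale spectral split with a fixed limiting projection $V$; a general degenerating sequence $\det(P_j)=1$, $\det(T(P_j))\to 0$ can split at several scales, and handling that requires a multi-flag argument à la Gurvits that you do not supply. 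The paper gets the same bound elementarily and completely: every entry of $C_n$ is at most $\sqrt{k}/\sqrt{m}$, so $\sigma(A_n)\sigma(B_n)=\sigma(C_n)\le(\sqrt{k}/\sqrt{m})^{mk}$, while support gives $\sigma(B_n)\to\sigma(B)>0$ for some permutation $\sigma$, and $\sigma(A_n)=\det(X_n\otimes Y_n)^2$ by lemma \ref{propsigma}.

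The genuine gap is in part $(2)$, and you name it yourself. You reduce the claim to the assertion that the eigenvalues of $P_n=X_nX_n^*$ and $R_n=Y_n^*Y_n$ stay bounded away from $0$ and $\infty$, and then concede you do not know how to prove it in the flat (equality) directions. That reduction aims at a strictly stronger statement than the lemma requires, and the paper's proof shows it can be bypassed entirely: take SVDs $X_n=L_nD_nM_n^*$, $Y_n=S_n\tilde{D}_nR_n^*$, pass to a subsequence on which the unitary factors converge (compactness of the unitary group is all the control of "rotating eigenbases" one ever needs), and prove convergence of the \emph{products} $x_{i,n}y_{j,n}$ rather than of the factors. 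That convergence is not, as you assert, "the content of $A_n\to\frac{Id}{\sqrt{k}}$" --- that limit only yields double stochasticity of limit points. The actual mechanism is permanental: since $\sigma(A_n)\to L^2$ and $\sigma(C_n)=\sigma(A_n)\sigma(B_n)$, total support routes a permutation $\sigma$ with $\sigma(B)>0$ through each nonzero entry of $B$, forcing $\sigma(C)=L^2\sigma(B)\neq 0$, hence each $x_{i,n}y_{j,n}$ converges to a positive limit on the support pattern of $B$. Lemma \ref{lemmasinkhorn} then regauges the singular values into convergent sequences with $x_{i,n}'y_{j,n}'=x_{i,n}y_{j,n}$ for \emph{all} $(i,j)$, which forces $x_{i,n}x_{p,n}y_{j,n}y_{q,n}=x'_{i,n}x'_{p,n}y'_{j,n}y'_{q,n}$ for all $i,p,j,q$, so the regauged $X_n',Y_n'$ implement the \emph{identical} map $Y_nT(X_nXX_n^*)Y_n^*=Y_n'T(X_n'XX_n'^*)Y_n'^*$ and converge to the required invertible $X',Y'$. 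In particular $X_n,Y_n$ themselves are never shown (and need not be shown) to be bounded; the non-commutativity you flag as the obstacle simply never enters. Without this regauging step --- or an independent proof of your eigenvalue bound, which would essentially amount to redoing the theorem --- your argument for part $(2)$ does not close.
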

\begin{proof} 
Let $X_{n}=L_{n}D_{n}M_{n}^*$, $Y_{n}=S_{n}\tilde{D}_{n}R_{n}^*$, where $L_{n},M_{n}\in M_k$, $S_{n}, R_{n}\in M_m$ are unitary matrices, and $D_{n}=\text{diag}(x_{1,n},\ldots, x_{k,n})$, $\tilde{D}_{n}=\text{diag}(y_{1,n},\ldots, y_{m,n})$ are positive diagonal matrices (i.e., SVD decompositions of $X_{n}$ and $Y_{n}$).

By lemma \ref{lemmacommute}, $X_nV_a=V_aX_n$ and $Y_nW_a=W_aY_n$ for $1\leq a\leq s$. Hence,  we can assume without loss of generality that the columns 
 $v_1+\ldots+v_{a-1}+1,\ldots,v_1+\ldots+v_{a}$ of $L_n$ and the columns $w_1+\ldots+w_{a-1}+1,\ldots, w_1+\ldots+w_{a}$ of $R_n$ form orthonormal bases of $\Im(V_a)$ and $\Im(W_a)$, respectively, for $1\leq a\leq s$ $($when $a=1$, let $v_0=w_0=0)$. 
 

Since the set of unitary matrices is compact,  we can pass to a subsequence to ensure the convergence of $L_{n},M_{n},S_{n},R_{n}$ to unitary matrices $L,M,S,R$, respectively. In order to simplify our notation, we shall write \begin{center}
$\displaystyle\lim_{n\rightarrow\infty} L_{n}=L, \displaystyle\lim_{n\rightarrow\infty} M_{n}=M$, $\displaystyle\lim_{n\rightarrow\infty} S_{n}=S$, $\displaystyle\lim_{n\rightarrow\infty} R_{n}=R$, $\displaystyle\lim_{n\rightarrow\infty} Y_nT(X_nXX_n^*)Y_n^*=T_1(X)$. 
\end{center}

Next, since $\displaystyle\lim_{n\rightarrow\infty} L_{n}=L$ and $\displaystyle\lim_{n\rightarrow\infty} R_{n}=R$ then  the columns $v_1+\ldots+v_{a-1}+1,\ldots,v_1+\ldots+v_{a}$ of $L$ form an orthonormal basis of $\Im(V_a)$ and the columns $w_1+\ldots+w_{a-1}+1,\ldots, w_1+\ldots+w_{a}$ of $R$ form an orthonormal basis of $\Im(W_a)$ for $1\leq a\leq s$.\\

Let $l_{i,n},m_{i,n},s_{i,n},r_{i,n},l_{i},m_{i},s_{i},r_{i}$ be the columns $i$ of $L_n,M_n,S_n,R_n,L,M,S,R$,  respectively.\\

Consider the following matrices of order $mk$ with non-negative entries:$$\begin{array}{l}
\vspace{0.2 cm}  C_n=(tr(Y_nT(X_n m_{i,n}\overline{m}_{i,n}^tX_n^*)Y_n^* s_{j,n}\overline{s}_{j,n}^t))_{k\times m}\otimes 1_{m\times k},\\ \vspace{0.2 cm}  B_n=(tr(T(l_{i,n}\overline{l}_{i,n}^t)r_{j,n}\overline{r}_{j,n}^t))_{k\times m}\otimes 1_{m\times k},\\ A_n=(x_{i,n}^2y_{j,n}^2)_{k\times m}\otimes 1_{m\times k}.
\end{array}$$

Notice that  $C_n=A_n\odot B_n$, since $$x_{i,n}^2y_{j,n}^2tr(T(l_{i,n}\overline{l}_{i,n}^t)r_{j,n}\overline{r}_{j,n}^t)= tr(Y_nT(X_n m_{i,n}\overline{m}_{i,n}^tX_n^*)Y_n^* s_{j,n}\overline{s}_{j,n}^t).$$ 

Moreover, ${\displaystyle\lim_{n\rightarrow \infty}} C_n=C$, ${\displaystyle\lim_{n\rightarrow \infty}} B_n=B$, where $C=(tr(T_1(m_{i}\overline{m}_{i}^t) s_{j}\overline{s}_{j}^t))_{k\times m}\otimes 1_{m\times k}$ and $B=(tr(T(l_i\overline{l}_i^t)r_j\overline{r}_j^t))_{k\times m}\otimes 1_{m\times k}$.

Now, since the columns of $L$ and $R$ have the properties described above and  $T(V_aM_kV_a)\subset W_aM_mW_a$ $(1\leq a\leq s)$ and $W_aW_b=0$ $(a\neq b)$ then $$(tr(T(l_i\overline{l}_i^t)r_j\overline{r}_j^t))_{k\times m}=\bigoplus_{a=1}^s (tr(T(l_i\overline{l}_i^t)r_j\overline{r}_j^t))_{v_{a}\times w_{a}}, $$ i.e., $(tr(T(l_i\overline{l}_i^t)r_j\overline{r}_j^t))_{k\times m}$ is a direct sum of the $s$ matrices  
$(tr(T(l_i\overline{l}_i^t)r_j\overline{r}_j^t))_{v_{a}\times w_{a}}$, where $v_1+\ldots+ v_{a-1}+1\leq i\leq v_1+\ldots+ v_{a}$ and $w_1+\ldots+ w_{a-1}+1\leq j\leq w_1+\ldots+ w_{a}$
 (see the definition of the direct sum in the end of the introduction). 

Notice that if $T:V_aM_kV_a\rightarrow W_aM_mW_a$ has support (total support), for every $a$, then  the matrices  $(tr(T(l_i\overline{l}_i^t))r_j\overline{r}_j^t))_{v_{a}\times w_{a}}$, where $v_1+\ldots+ v_{a-1}+1\leq i\leq v_1+\ldots+ v_{a}$ and $w_1+\ldots+ w_{a-1}+1\leq j\leq w_1+\ldots+ w_{a}$ have support (total support), by definition \ref{definitionsupportformaps}. Since 
$\frac{w_a}{v_a}=\frac{m}{k}$, for every $a$, then the matrix $(tr(T(l_i\overline{l}_i^t))r_j\overline{r}_j^t))_{k\times m}$ has support (total support),  by item 8 of lemma \ref{lemmasupport}. Therefore $B$ has support (total support) by definition \ref{definitionsupportnonsquare}.\\\\
\textbf{Proof of }$(1):$ By item 1 of lemma \ref{propertiesalgorithm}, we have $Y_{n}T(X_nX_n^*)Y_n^*=\frac{\sqrt{k}}{\sqrt{m}}Id$. Thus, $$\frac{\sqrt{k}}{\sqrt{m}}=tr((\frac{\sqrt{k}}{\sqrt{m}}Id)s_{j,n}\overline{s}_{j,n}^t)= \sum_{i=1}^k tr(Y_nT(X_n m_{i,n}\overline{m}_{i,n}^tX_n^*)Y_n^* s_{j,n}\overline{s}_{j,n}^t).$$ 

 So every entry of $C_n$ is smaller or equal to $\frac{\sqrt{k}}{\sqrt{m}}$. Hence, for every permutation  $\sigma'\in S_{mk}$, we have  $$\sigma'(C_n)\leq\left(\frac{\sqrt{k}}{\sqrt{m}}\right)^{mk}.$$
 
Thus, by item 1 of lemma \ref{propsigma}, $\sigma'(A_n)\sigma'(B_n)=\sigma'(C_n)\leq\left(\frac{\sqrt{k}}{\sqrt{m}}\right)^{mk}$ for every  $\sigma'\in S_{mk}$.

Next, since $T:V_aM_kV_a\rightarrow W_aM_mW_a$ has support and $\frac{w_a}{v_a}=\frac{m}{k}$, for every $a$, then $B$ has support. Hence, there is $\sigma\in S_{mk}$, such that $\sigma(B)> 0$, by definition \ref{definitionsupportsquare}. So there is $N>0$, such that if $n>N$ then $\sigma(B_n)>\frac{\sigma(B)}{2}$.
Hence, for $n>N$, $$\sigma(A_n)\leq 2\sigma(B)^{-1}\left(\frac{\sqrt{k}}{\sqrt{m}}\right)^{mk}.$$ 

Now, by item $3$ of lemma \ref{propsigma}, $$ \sigma(A_n)=(\prod_{i=1}^kx_{i,n}^2)^m(\prod_{j=1}^my_{j,n}^2)^{k}=\det(X_nX_n^*)^m\det(Y_nY_n^*)^k
=\det(X_nX_n^*\otimes Y_nY_n^*).$$

 Since $\det(X_n\otimes Y_n)>0$, by item $3$ of lemma \ref{propertiesalgorithm},  then  $\det(X_nX_n^*\otimes Y_nY_n^*)=\det(X_n\otimes Y_n)^2$.

Thus, $\det(X_n\otimes Y_n)^2\leq  2\sigma(B)^{-1}\left(\frac{\sqrt{k}}{\sqrt{m}}\right)^{mk}$ for $n>N$. 

Recall that we simplified the notation in the beginning of this lemma. Thus, we have just proved that there is a bounded subsequence of $(\det(X_{n}\otimes Y_{n}))_{n\in \mathbb{N}}$. Since the entire sequence is increasing, by item 3 of lemma \ref{propertiesalgorithm}, then the entire sequence is bounded. So, by item 4 of lemma \ref{propertiesalgorithm}, $T_1:M_k\rightarrow M_m$ is doubly stochastic.\\\\
\textbf{Proof of }$(2):$ We have just seen that $(\det(X_{n}\otimes Y_{n}))_{n\in \mathbb{N}}$ is bounded and increasing, therefore \begin{center}
${\displaystyle\lim_{n\rightarrow\infty}}\det(X_{n}\otimes Y_{n})=L>0$ and ${\displaystyle\lim_{n\rightarrow\infty}}\sigma(A_n)=L^2$.
\end{center}

Let $tr(T(l_i\overline{l}_i^t)r_j\overline{r}_j^t)$ be any non-null entry of $B$. Since $T:V_aM_kV_a\rightarrow W_aM_mW_a$ has total support and $\frac{w_a}{v_a}=\frac{m}{k}$, for every $a$, then $B$ has total support. There is a permutation $\sigma\in S_{mk}$ such that $\sigma(B)> 0$ and $tr(T(l_i\overline{l}_i^t)r_j\overline{r}_j^t)$ is one of the factors of $\sigma(B)$, by definition \ref{definitionsupportsquare}.

Notice that $tr(T_1(m_i\overline{m}_i^t)s_j\overline{s}_j^t)$ is a factor of $\sigma(C)$, since it occupies the same position of that  $tr(T(l_i\overline{l}_i^t)r_j\overline{r}_j^t)$  in $B$. 

Since $0\neq L^2\sigma(B)={\displaystyle\lim_{n\rightarrow \infty}}\sigma(A_n)\sigma (B_n)=\sigma(C)$ and $tr(T_1(m_i\overline{m}_i^t)s_j\overline{s}_j^t)$ is a factor of $\sigma(C)$ then $tr(T_1(m_i\overline{m}_i^t)s_j\overline{s}_j^t)\neq 0$. Therefore, 
\noindent
\begin{center}
${\displaystyle\lim_{n\rightarrow \infty}}x_{i,n}y_{j,n}= {\displaystyle\lim_{n\rightarrow \infty}}tr(Y_nT(X_n m_{i,n}\overline{m}_{i,n}^tX_n^*)Y_n^* s_{j,n}\overline{s}_{j,n}^t)^{\frac{1}{2}}tr(T(l_{i,n}\overline{l}_{i,n}^t)r_{j,n}\overline{r}_{j,n}^t)^{-\frac{1}{2}}=
tr(T_1(m_i\overline{m}_i^t)s_j\overline{s}_j^t)^{\frac{1}{2}}tr(T(l_i\overline{l}_i^t)r_j\overline{r}_j^t)^{-\frac{1}{2}}\neq 0$.
\end{center}

So $(x_{i,n}y_{j,n})_{k\times m}$  is a rank 1 matrix whose entries are positive and converge to  positive limits, whenever the corresponding entries of the matrix $(tr(T(l_i\overline{l}_i^t)r_j\overline{r}_j^t))_{k\times m}$ are not zero. Since $(tr(T(l_i\overline{l}_i^t)r_j\overline{r}_j^t))_{k\times m}$ has total support then there are sequences of positive numbers $(x_{i,n}')_{n\in\mathbb{N}}$ $(y_{j,n}')_{n\in\mathbb{N}}$ converging to positive limits $x_i'>0$, $y_j'>0$ such that $x_{i,n}'y_{j,n}'=x_{i,n}y_{j,n}$ for every $i,j,n$, by lemma \ref{lemmasinkhorn}.

Define $X'_n=L_n(\text{diag}(x_{1,n}',\ldots,x_{k,n}'))M_n^*$ and $Y'_m=S_n(\text{diag}(y_{1,n}',\ldots,y_{m,n}'))R_n^*$. 

Notice that $\displaystyle\lim_{n\rightarrow \infty}X_n'=X'$ and  $\displaystyle\lim_{n\rightarrow \infty}Y_n'=Y'$, where \begin{center}
$X'=L(\text{diag}(x_{1}',\ldots,x_{k}'))M^*$ and $Y'=S(\text{diag}(y_{1}',\ldots,y_{m}'))R^*$.
\end{center} 

Furthermore $X',Y'$ are invertible matrices.

Finally, for every $\{i,p\}\subset\{1,\ldots,k\}$ and every $\{j,q\}\subset \{1,\ldots,m\}$, we have \vspace{0.1 cm}
$$\vspace{0.2 cm}\begin{array}{cc}
\vspace{0.2 cm} tr(Y_nT(X_n m_{i,n}\overline{m}_{p,n}^tX_n^*)Y_n^*s_{j,n}\overline{s}_{q,n}^t)=& \\  
\vspace{0.2 cm}  x_{i,n}x_{p,n}y_{j,n}y_{q,n}tr(T(l_{i,n}\overline{l}_{p,n}^t)r_{j,n}\overline{r}_{q,n}^t)=& \vspace{0.2 cm} x_{i,n}'y'_{j,n}x_{p,n}'y'_{q,n}tr(T(l_{i,n}\overline{l}_{p,n}^t)r_{j,n}\overline{r}_{q,n}^t)= \\
  &tr(Y_n'T(X_n' m_{i,n}\overline{m}_{p,n}^tX_n'^*)Y_n'^*s_{j,n}\overline{s}_{q,n}^t)\ . 
\end{array} $$


Therefore, $Y_nT(X_n X X_n^*)Y_n^*=Y_n'T(X_n' XX_n'^*)Y_n'^*$,
 for every $X\in M_k$, and $$T_1(X)=\displaystyle\lim_{n\rightarrow \infty}Y_nT(X_n X X_n^*)Y_n^*=\displaystyle\lim_{n\rightarrow \infty}Y_n'T(X_n' X X_n'^*)Y_n'^*=Y'T(X' XX'^*)Y'^*.$$
\end{proof}

The next corollary provides a way to determine whether a positive map $T: M_k\rightarrow M_m$ has support or not.  This is a necessary condition for the equivalence of a positive map with doubly stochastic one (see remark \ref{remark3}). Moreover, it is also sufficient if $k$ and $m$ are coprime (see corollary \ref{corollarycoprimetotalsupport}). 

\begin{corollary}\label{corollaryequivalencesupport}  Let  $T: M_k\rightarrow M_m$ be a positive map such that $T(Id)$, $T^*(Id)$ are positive definite Hermitian matrices.  Let $X_{n},A_n\in M_k,Y_{n}\in M_m$ be the matrices defined in algorithm \ref{algorithm}. Then the following conditions are equivalent:
\begin{enumerate}
\item $T$ has support,
\item \vspace{0.2 cm} $(\det(X_n\otimes Y_n))_{n\in\mathbb{N}}$ is a bounded sequence,
\item $\displaystyle\lim_{n\rightarrow \infty} A_n=\frac{Id}{\sqrt{k}}$.
\end{enumerate}
\end{corollary}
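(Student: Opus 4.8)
\textbf{Proof strategy for Corollary~\ref{corollaryequivalencesupport}.}
The plan is to prove the cycle of implications $(3)\Rightarrow(2)\Rightarrow(1)\Rightarrow(2)\Rightarrow(3)$, though more economically I would aim for $(2)\Leftrightarrow(3)$ directly and then close the loop with $(1)\Leftrightarrow(2)$, leaning heavily on \lemref{propertiesalgorithm} and on the core counting argument already carried out in the proof of \lemref{keylemma}. The sequence $(\det(X_n\otimes Y_n))_{n\in\mathbb{N}}$ is increasing by item~$(3)$ of \lemref{propertiesalgorithm}, so its boundedness is equivalent to convergence to a finite positive limit $L$; this observation is the backbone of the whole argument.

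\textbf{The equivalence $(2)\Leftrightarrow(3)$.}
First I would note that item~$(4)$ of \lemref{propertiesalgorithm} already gives $(2)\Rightarrow(3)$: if the determinant sequence is bounded, then $\lim_n (\sqrt{k}A_n)\otimes(\sqrt{m}B_n)=Id\otimes Id$, whence $\lim_n A_n=\frac{Id}{\sqrt{k}}$ exactly as shown there. For the converse $(3)\Rightarrow(2)$, I would exploit the telescoping identity from the proof of item~$(3)$ of \lemref{propertiesalgorithm}, namely
$$
\frac{\det(X_n\otimes Y_n)}{\det(X_{n+1}\otimes Y_{n+1})}=\det\bigl((\sqrt{k}A_n)\otimes(\sqrt{m}B_n)\bigr)^{\frac12}.
$$
If $A_n\to\frac{Id}{\sqrt{k}}$, then $\sqrt{k}A_n\to Id$, and since $tr(\sqrt{m}B_n)=m$ with $B_n$ positive definite, the constraint $tr((\sqrt{k}A_n)\otimes(\sqrt{m}B_n))=km$ forces $\sqrt{m}B_n\to Id$ as well; hence the determinant ratios tend to $1$. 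The point is then that $\lim_n A_n=\frac{Id}{\sqrt{k}}$ forces these ratios to approach $1$ fast enough, so the increasing sequence $\det(X_n\otimes Y_n)$ cannot diverge. I would make this quantitative by showing $\sum_n(1-\det((\sqrt{k}A_n)\otimes(\sqrt{m}B_n))^{\frac12})<\infty$, or more simply by arguing that convergence of $A_n$ to a positive definite limit bounds the growth of the product defining $\det(X_n\otimes Y_n)$.

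\textbf{The equivalence $(1)\Leftrightarrow(2)$.}
The implication $(1)\Rightarrow(2)$ is essentially the content of the Proof of~$(1)$ inside \lemref{keylemma}, applied with the trivial projection decomposition $s=1$, $V_1=Id_{k\times k}$, $W_1=Id_{m\times m}$: if $T$ has support, then the matrix $B$ built from limiting singular vectors has support, so some permutation $\sigma$ gives $\sigma(B)>0$, and the bound $\sigma(A_n)\sigma(B_n)=\sigma(C_n)\le(\sqrt{k}/\sqrt{m})^{mk}$ combined with $\sigma(A_n)=\det(X_n\otimes Y_n)^2$ yields a bounded subsequence, hence a bounded (increasing) sequence. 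For $(2)\Rightarrow(1)$, I would argue contrapositively: if $T$ has no support, then by \lemref{lemmaequivalentsupport}$(1)$ there is an orthogonal projection $V\in M_k$ with $\text{rank}(V)m>\text{rank}(T(V))k$, and I expect the main obstacle here to be showing that this rank deficiency forces $\det(X_n\otimes Y_n)\to\infty$. The natural route is to track how the scaling inflates the singular values of $X_n$ along $\Im(V)$: since item~$(1)$ of \lemref{propertiesalgorithm} normalizes $Y_nT(X_n\frac{Id}{\sqrt{k}}X_n^*)Y_n^*=\frac{Id}{\sqrt{m}}$ while the image $T(V\cdots V)$ occupies too few dimensions in $M_m$, the diagonal entries $x_{i,n}$ indexed by $\Im(V)$ must grow without bound to compensate, driving the determinant to infinity. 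I would formalize this by choosing orthonormal bases adapted to $V$ and to the eigenspaces of $T(V)$, producing an identically zero submatrix $B[\alpha|\beta]$ with $|\alpha|m+|\beta|k>km$, and then showing via the Hadamard factorization $C_n=A_n\odot B_n$ that the entries of $A_n$ on the complementary index set cannot stay bounded while $C_n$ does. This divergence argument is the one genuinely new piece; the remaining implications are reorganizations of the machinery already developed for \lemref{keylemma} and \lemref{propertiesalgorithm}.
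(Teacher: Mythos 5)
Your implications $(1)\Rightarrow(2)$ (via the proof of item $(1)$ of \lemref{keylemma} with $s=1$, $V_1=Id$, $W_1=Id$) and $(2)\Rightarrow(3)$ (via item $(4)$ of \lemref{propertiesalgorithm}) are exactly the paper's, and they are fine. The gap is in how you close the loop. Your $(3)\Rightarrow(2)$ step fails as written: the assertion that the constraint $tr((\sqrt{k}A_n)\otimes(\sqrt{m}B_n))=km$ ``forces $\sqrt{m}B_n\to Id$'' is a non sequitur, because that trace identity holds for \emph{every} $n$ unconditionally (item $(2)$ of \lemref{propertiesalgorithm}) and so carries no asymptotic information; $A_n\to\frac{Id}{\sqrt{k}}$ alone does not control $B_n$, which could have determinant bounded away from $1$ while keeping trace $\sqrt{m}$. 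Moreover, even if the ratios $\det(X_n\otimes Y_n)/\det(X_{n+1}\otimes Y_{n+1})=\det((\sqrt{k}A_n)\otimes(\sqrt{m}B_n))^{\frac12}$ did tend to $1$, an increasing sequence whose consecutive multiplicative increments tend to $1$ can still diverge (compare $\prod_n(1+\frac{1}{n})$), and your proposed fix $\sum_n(1-\det((\sqrt{k}A_n)\otimes(\sqrt{m}B_n))^{\frac12})<\infty$ requires a rate of convergence that hypothesis $(3)$ simply does not supply. Your $(2)\Rightarrow(1)$ sketch has a related structural problem: lack of support gives an identically zero submatrix of $(tr(T(v_i\overline{v}_i^t)w_j\overline{w}_j^t))$ in \emph{some} orthonormal bases, but the matrices $A_n,B_n,C_n$ in the Hadamard factorization are tied to the SVD bases of $X_n,Y_n$, over which you have no control; nothing in your outline connects the projection $V$ witnessing the rank deficiency to those bases, so the claimed divergence of $\det(X_n\otimes Y_n)$ is not established.

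The missing idea — and the paper's actual route — is to prove $(3)\Rightarrow(1)$ directly by a compactness and rank-semicontinuity argument, avoiding any quantitative control of the determinants. If $A_n\to\frac{Id}{\sqrt{k}}$, then by item $(1)$ of \lemref{propertiesalgorithm} and the end of the proof of its item $(4)$, every limit point of $(Y_nT(X_n(\cdot)X_n^*)Y_n^*)_{n\in\mathbb{N}}$ is doubly stochastic; so one gets maps $T_i(X)=Y_{n_i}T(X_{n_i}XX_{n_i}^*)Y_{n_i}^*$, each equivalent to $T$, converging to a doubly stochastic $S$, which has support by \ref{examples1}. If no $T_i$ had support, \lemref{lemmaequivalentsupport} would produce orthogonal projections $U_i$ with $\text{rank}(U_i)m>\text{rank}(T_i(U_i))k$; passing to a subsequence with $U_j\to U$, $\text{rank}(U_j)=u$ and $\text{rank}(T_j(U_j))=t$ constant, one gets $\text{rank}(U)=\lim_j tr(U_j)=u$ while closedness of $\{$matrices of rank $\le t\}$ gives $\text{rank}(S(U))\le t$, whence $\text{rank}(U)m=um>tk\geq\text{rank}(S(U))k$, contradicting that $S$ has support. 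Hence some $T_i$ has support, and since the rank characterization of support in \lemref{lemmaequivalentsupport} is invariant under congruence by the invertible matrices $X_{n_i},Y_{n_i}$, $T$ itself has support. With this replacing your two problematic implications, the cycle $(1)\Rightarrow(2)\Rightarrow(3)\Rightarrow(1)$ closes with no new quantitative estimates needed.
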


\begin{proof}
In the proof of item $1$ of lemma \ref{keylemma}, we saw that $(1)$ implies $(2)$ (Choose $V_1=Id$ and $W_1=Id$). In the proof of item $4$ of lemma \ref{propertiesalgorithm}, we saw that $(2)$ implies $(3)$. 

Now, let us prove that $(3)$ implies $(1)$.
We also saw in the proof of item $4$ of lemma \ref{propertiesalgorithm} that if $\displaystyle\lim_{n\rightarrow \infty} A_n=\frac{Id}{\sqrt{k}}$ then the limit points of the sequence of positive maps $(Y_{n}T(X_{n}(\cdot)X_{n}^*)Y_{n}^*)_{n\in\mathbb{N}}$ are doubly stochastic.

 Thus,  there is a sequence of positive maps $T_i:M_k\rightarrow M_m$ equivalent to $T$ (i.e., $T_i(X)=Y_{n_i}T(X_{n_i}XX_{n_i}^*)Y_{n_i}^*$) converging to a doubly stochastic map $S:M_k\rightarrow M_m$, which has support (See \ref{examples1}). Notice that if any $T_i$ has support then $T$ should also have support, by lemma \ref{lemmaequivalentsupport}.
 
Let us assume by contradiction that every $T_i$ does not have support. 
Recall that if $A\in P_k$ and $U\in P_k$ is the orthogonal projection onto $\Im(A)$ then  $\Im(T_i(A))=\Im(T_i(U))$, since $T_i: M_k\rightarrow M_m$  is a positive map.  So, for each $T_i$, there is an orthogonal projection $U_i\in M_k$ such that $\text{rank}(U_i)m>\text{rank}(T_i(U_i))k$, by lemma \ref{lemmaequivalentsupport}.

Next, there is a subsequence $(U_j)_j$ of $(U_i)_{i}$ such that $\displaystyle\lim_{j}U_j=U$,
 $\text{rank}(U_j)=u$ and $\text{rank}(T_j(U_j))=t$, for every $j$. Thus, $\displaystyle S(U)=\lim_jT_j(U_j)$ and $\displaystyle\text{rank}(U)=tr(U)=\lim_j tr(U_j)=u$.
 
Since the set of matrices with rank smaller or equal to $t$ is closed then $\text{rank}(S(U))\leq t$. 
Therefore, $\text{rank}(U)m=um>tk\geq \text{rank}(S(U))k$. This is a contradiction, since $S$ has support.
Thus, there is $T_i$ with support and $T$ has also support.
\end{proof}

\begin{remark}\label{remarkpolynomialtime} In \cite{garg2}, a polynomial time algorithm was provided to check whether a completely positive map $T: M_k\rightarrow M_m$ has a positive capacity or not. It turns out that $T: M_k\rightarrow M_m$ has a positive capacity if and only if it has support $($check our lemma \ref{lemmaequivalentsupport} and \cite{garg2}$)$. So for completely positive maps there is an easy way to check whether the map has support or not.
\end{remark}

\begin{theorem}\label{maintheorem1}  Let  $T: M_k\rightarrow M_m$ be a positive map such that $T(Id)$, $T^*(Id)$ are positive definite Hermitian matrices. Then $T:M_k\rightarrow M_m$ is equivalent to a doubly stochastic map if and only if there are orthogonal projections $V_i\in M_k$, $W_i\in M_m$, $1\leq i\leq s$, satisfying
\begin{enumerate}
\item $\mathbb{C}^k=\bigoplus_{i=1}^s\Im(V_i)$, $\mathbb{C}^m=\bigoplus_{i=1}^s\Im(W_i)$,
\item $T(V_iM_kV_i)\subset W_iM_mW_i$, 
\item $\text{rank}(X)\text{rank}(W_i)<\text{rank}(T(X))\text{rank}(V_i)$, if $X\in P_k \cap V_iM_kV_i$ and $0<rank(X)<\text{rank}(V_i)$, 
\item $\frac{\text{rank}(W_i)}{\text{rank}(V_i)}=\frac{m}{k}$, for every $i$.
\end{enumerate}
\end{theorem}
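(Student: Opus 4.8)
The plan is to prove both directions of \thmref{maintheorem1}, using \lemref{keylemma} as the main engine for the sufficiency direction and the characterization of total support from \lemref{lemmaequivalentsupport} and \remref{remark2} as the bridge.

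\medskip

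For the $(\Leftarrow)$ direction I would start by observing that conditions $(1)$--$(4)$ are exactly the hypotheses needed to invoke item $(2)$ of \lemref{keylemma}. Specifically, by \remref{remark2}, condition $(3)$ (the strict rank inequality for $X \in P_k \cap V_iM_kV_i$ with $0 < \text{rank}(X) < \text{rank}(V_i)$) together with the surjectivity $\Im(T(V_i)) = \Im(W_i)$ guarantees that each restricted map $T \colon V_iM_kV_i \to W_iM_mW_i$ has total support. I would need to verify that $\Im(T(V_i)) = \Im(W_i)$ holds: this follows from the containment $T(V_iM_kV_i) \subset W_iM_mW_i$ in $(2)$ combined with condition $(3)$ applied at the boundary, or more directly from the fact that $T(V_i) \in W_iM_mW_i \cap P_m$ must have full rank $w_i$ in that block (otherwise taking $X = V_i$ would violate the rank bounds). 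Once total support of each block is established, condition $(4)$ gives $\frac{w_i}{v_i} = \frac{m}{k}$, so item $(2)$ of \lemref{keylemma} yields invertible $X' \in M_k$, $Y' \in M_m$ with $T_1(X) = Y'T(X'XX'^*)Y'^*$ where $T_1$ is a limit point of the scaling sequence. Since \lemref{keylemma}$(1)$ simultaneously shows $T_1$ is doubly stochastic, $T$ is equivalent to the doubly stochastic map $T_1$.

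\medskip

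For the $(\Rightarrow)$ direction, suppose $T$ is equivalent to a doubly stochastic map $S$, so $S(X) = Y'T(X'XX'^*)Y'^*$ for some invertible $X', Y'$. The idea is to extract the projections $V_i, W_i$ from the block structure of $S$. Since $S$ is doubly stochastic it has total support by \examref{examples1}$(1)$, and I would decompose $\mathbb{C}^k$ and $\mathbb{C}^m$ into the finest simultaneous invariant decomposition compatible with $S$, giving projections $V_i', W_i'$ for which the restricted maps $S \colon V_i'M_kV_i' \to W_i'M_mW_i'$ are "indecomposable" with total support and the rank ratios match. Conditions $(1)$--$(4)$ for $S$ then transfer to $T$ by conjugating the projections: set $V_i = $ the orthogonal projection onto $X'^{-1}\Im(V_i')$ (after a polar/Gram-Schmidt adjustment to keep them genuine orthogonal projections) and similarly $W_i$ from $Y'$, using that rank and the strict rank inequalities in $(3)$ are invariant under the equivalence since $X', Y'$ are invertible.

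\medskip

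The main obstacle I expect is the $(\Rightarrow)$ direction, specifically producing the projection decomposition for a doubly stochastic map and verifying the \emph{strict} inequality in condition $(3)$ within each block. Total support of $S$ only gives the non-strict inequality $\text{rank}(X)m \le \text{rank}(S(X))k$ globally (via \lemref{lemmaequivalentsupport}), so I would need to argue that refining into indecomposable blocks is precisely what promotes non-strict to strict: if strictness failed for some $X$ inside a block with $0 < \text{rank}(X) < \text{rank}(V_i)$, then equality $\text{rank}(X)w_i = \text{rank}(T(X))v_i$ together with the image condition $\Im(T(X)) = \Im(T(X^\perp)^\perp)$ from \lemref{lemmaequivalentsupport}$(2)$ would exhibit a further invariant splitting of that block, contradicting indecomposability. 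A second delicate point is ensuring the transported projections are honestly orthogonal projections rather than merely idempotents after conjugation by the non-unitary $X', Y'$; I would handle this by taking orthogonal projections onto the transported images and checking that the containment $(2)$ and rank conditions are preserved, which is where the invertibility of $X', Y'$ is essential.
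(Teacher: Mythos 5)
Your overall architecture coincides with the paper's: sufficiency via remark \ref{remark2} feeding lemma \ref{keylemma}, and necessity by recursively splitting the doubly stochastic map at rank-equality cases using item $(2)$ of lemma \ref{lemmaequivalentsupport}, then transporting the orthogonal decomposition back through the non-unitary equivalence (your ``take orthogonal projections onto the transported images'' is exactly the paper's final step). But your sufficiency direction has a concrete gap: conditions $(1)$--$(4)$ are \emph{not} ``exactly the hypotheses'' of lemma \ref{keylemma}. That lemma requires $V_iV_j=0$, $W_iW_j=0$ for $i\neq j$ and $\sum_i V_i=Id$, $\sum_i W_i=Id$, whereas condition $(1)$ only gives direct sums $\mathbb{C}^k=\bigoplus_i\Im(V_i)$, $\mathbb{C}^m=\bigoplus_i\Im(W_i)$, which need not be orthogonal (this is precisely why the theorem is stated with direct sums: pulling an orthogonal decomposition back through invertible non-unitary matrices destroys orthogonality, as you yourself note in the converse direction). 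The paper's proof therefore begins by choosing invertible $X'\in M_k$, $Y'\in M_m$ with $\Im(X'^{-1}V_i)\perp\Im(X'^{-1}V_j)$ and $\Im(Y'W_i)\perp\Im(Y'W_j)$ for $i\neq j$, passing to $\widetilde{T}(X)=Y'T(X'XX'^*)Y'^*$ and the orthogonalized projections $\widetilde{V_i},\widetilde{W_i}$, and only then invoking remark \ref{remark2} and lemma \ref{keylemma}. Applied to the $V_i$, $W_i$ as given, lemma \ref{keylemma} simply does not apply.

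A second flawed step is your justification of $\Im(T(V_i))=\Im(W_i)$: condition $(3)$ constrains only $X$ with $0<\text{rank}(X)<\text{rank}(V_i)$, so ``taking $X=V_i$ would violate the rank bounds'' is not available, and there is no boundary instance of $(3)$ to apply. The correct source is the standing hypothesis that $T(Id)$ is positive definite: after orthogonalization, $\widetilde{T}(Id)=\sum_i\widetilde{T}(\widetilde{V_i})$ is positive definite with $\widetilde{T}(\widetilde{V_i})\in\widetilde{W_i}M_m\widetilde{W_i}$ and the $\widetilde{W_i}$ mutually orthogonal summing to $Id$, which forces each $\widetilde{T}(\widetilde{V_i})$ to have full rank in its block; this is exactly the opening step of lemma \ref{lemmacommute}. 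Finally, in the necessity direction your minimality-versus-splitting argument is the paper's induction in different clothing, but you assume the fact that makes it run: that $S$ restricted to each block is again doubly stochastic, which is what gives total support inside a block (so that the block version of lemma \ref{lemmaequivalentsupport}$(2)$ can be applied there) and simultaneously yields condition $(4)$. The paper proves this explicitly: when $\text{rank}(X)m=\text{rank}(S(X))k$, one gets $S(VM_kV)\subset WM_mW$ and $S(V^{\perp}M_kV^{\perp})\subset W^{\perp}M_mW^{\perp}$, and then $S(\frac{V}{\sqrt{k}})=\frac{W}{\sqrt{m}}$, $S^*(\frac{W}{\sqrt{m}})=\frac{V}{\sqrt{k}}$ (and likewise for the complements) follow from the doubly stochastic identities, whence the rescaled restrictions $S(\frac{V}{\sqrt{\text{rank}(V)}})=\frac{W}{\sqrt{\text{rank}(W)}}$ are doubly stochastic and $\text{rank}(V)m=\text{rank}(W)k$. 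These are fixable omissions rather than wrong turns, but as written the sufficiency argument fails on non-orthogonal decompositions and the image equality is unjustified.
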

\begin{proof}
$(\Rightarrow)$  Let us assume that there are orthogonal projections $V_i\in M_k,W_i\in M_m$, $1\leq i\leq s$,  satisfying these four conditions. Let $X'\in M_k$ and $Y'\in M_m$ be invertible matrices such that $\Im(X'^{-1}V_i)\perp\Im(X'^{-1}V_j)$ and  $\Im(Y'W_i)\perp\Im(Y'W_j)$, $i\neq j$. Let $\widetilde{V_i}, \widetilde{W_i}$ be the orthogonal projections onto $\Im(X'^{-1}V_i),\Im(Y'W_i)$, respectively.  

Define $\widetilde{T}(X)=Y'T(X'XX'^*)Y'^*$.
Notice that $\widetilde{T}(\widetilde{V_i}M_k\widetilde{V_i})\subset\widetilde{W_i}M_m\widetilde{W_i}$. 
Moreover, if $X\in P_k \cap \widetilde{V_i}M_k\widetilde{V_i}$ and $0<rank(X)<\text{rank}(\widetilde{V_i})$ then $\text{rank}(X)\text{rank}(\widetilde{W_i})<\text{rank}(\widetilde{T}(X))\text{rank}(\widetilde{V_i})$. 
By remark \ref{remark2}, $\widetilde{T}:\widetilde{V_i}M_k\widetilde{V_i}\rightarrow\widetilde{W_i}M_k\widetilde{W_i}$ has total support  for every $i$. Now, since $\frac{\text{rank}(\widetilde{W}_i)}{\text{rank}(\widetilde{V}_i)}=\frac{m}{k}$, for every $i$, then $\widetilde{T}$ is equivalent to a doubly stochastic map, by lemma \ref{keylemma}. Thus, $T$ is equivalent to a doubly stochastic map. \\\\
$(\Leftarrow)$  Let  $S: M_k\rightarrow M_m$ be a doubly stochastic map. We saw in \ref{examples1} that $S$ has total support. Thus, if there is $X\in P_k$ such that  \begin{center}
$\text{rank}(X)m=\text{rank}(S(X))k$ and $0<\text{rank}(X)<k$
\end{center} then  $\Im(S(X^{\perp}))=\Im(S(X)^{\perp})$, by lemma \ref{lemmaequivalentsupport}.

 Therefore, $$\text{rank}(X^{\perp})m=\text{rank}( S(X)^{\perp})k=\text{rank}( S(X^{\perp}))k.$$

Let $V$ be the orthogonal projection onto $\Im(X)$ and $W$ the orthogonal projection onto $\Im(S(X))$.

 Since $S$ is a positive map, $\Im(S(V))=\Im(W)$ and $\Im(S(V^{\perp}))=\Im(S(V)^{\perp})=\Im(W^{\perp})$ then\begin{center}
 $S(VM_kV)\subset WM_mW$ and $S(V^{\perp}M_kV^{\perp})\subset W^{\perp}M_mW^{\perp}$. 
\end{center}

Thus, $S^*(WM_mW)\subset VM_kV$ and $S^*(W^{\perp}M_mW^{\perp})\subset V^{\perp}M_kV^{\perp}$.
 
By the definition of doubly stochastic map, we have \begin{center}
$S(\frac{1}{\sqrt{k}}(V+V^{\perp}))=\frac{1}{\sqrt{m}}(W+W^{\perp})$ and $S^*(\frac{1}{\sqrt{m}}(W+W^{\perp}))=\frac{1}{\sqrt{k}}(V+V^{\perp})$.
\end{center} 

Hence, $S(\frac{V}{\sqrt{k}})=\frac{W}{\sqrt{m}}$, $S(\frac{V^{\perp}}{\sqrt{k}})=\frac{W^{\perp}}{\sqrt{m}}$, $S^{*}(\frac{W}{\sqrt{m}})=\frac{V}{\sqrt{k}}$ and
 $S^{*}(\frac{W^{\perp}}{\sqrt{m}})=\frac{V^{\perp}}{\sqrt{k}}$.\\

Since  $\sqrt{\text{rank}(V)}\sqrt{m}=\sqrt{\text{rank}(W)}\sqrt{k}$ and $\sqrt{\text{rank}(V^{\perp})}\sqrt{m}=\sqrt{\text{rank}(W^{\perp})}\sqrt{k}$ then

$$\begin{array}{lr}
S(\frac{V}{\sqrt{\text{rank}(V)}})=\frac{W}{\sqrt{\text{rank}(W)}}, & S(\frac{V^{\perp}}{\sqrt{\text{rank}(V^{\perp})}})=\frac{W^{\perp}}{\sqrt{\text{rank}(W^{\perp})}},\\
S^{*}(\frac{W}{\sqrt{\text{rank}(W)}})=\frac{V}{\sqrt{\text{rank}(V)}}, &
 S^{*}(\frac{W^{\perp}}{\sqrt{\text{rank}(W^{\perp})}})=\frac{V^{\perp}}{\sqrt{\text{rank}(V^{\perp})}}.
\end{array}$$
 \\ 

Therefore, $S:VM_kV\rightarrow WM_mW$ and $S:V^{\perp}M_kV^{\perp}\rightarrow W^{\perp}M_mW^{\perp}$ are doubly stochastic maps. Now, we can use induction on the rank of $V$ and $V^{\perp}$ in order to find the subalgebras satisfying the conditions of this theorem.

Finally, if $T$ is a positive map equivalent to $S$ then we can easily find the required subalgebras satisfying the required conditions.
\end{proof}

\begin{corollary}\label{corollarysnonquare}
Let $\{e_1,\ldots, e_m\}$ be the canonical basis of $\mathbb{C}^m$. Let  $T: M_k\rightarrow M_m$ be a positive map.  Define $\widetilde{T}: M_m\otimes M_k\rightarrow M_m\otimes M_k$ as $$\widetilde{T}(\sum_{i,j=1}^me_ie_j^t\otimes B_{ij})=T(\sum_{i=1}^mB_{ii})\otimes Id_{k\times k},$$ where $B_{ij}\in M_k$. Then $T: M_k\rightarrow M_m$ is equivalent to a doubly stochastic map if and only if $\widetilde{T}: M_m\otimes M_k\rightarrow M_m\otimes M_k$  is equivalent to a doubly stochastic map .
\end{corollary}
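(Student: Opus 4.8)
The plan is to run both implications through the two basic identities
$$\widetilde T(Z)=T(\mathrm{tr}_1 Z)\otimes Id_k,\qquad \widetilde T^*(Y)=Id_m\otimes T^*(\mathrm{tr}_2 Y),$$
where $\mathrm{tr}_1:M_m\otimes M_k\to M_k$ and $\mathrm{tr}_2:M_m\otimes M_k\to M_m$ are the partial traces over the first and second factor (so $\mathrm{tr}_1(\sum e_ie_j^t\otimes B_{ij})=\sum_i B_{ii}$). The first identity is the definition of $\widetilde T$; the second I would read off from $\langle A\otimes Id_k,Y\rangle=\langle A,\mathrm{tr}_2 Y\rangle$ together with $\langle \mathrm{tr}_1 Z,B\rangle=\langle Z,Id_m\otimes B\rangle$.

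For the forward direction I first note that the tilde operation respects equivalence: if $T_1(X)=Y'T(X'XX'^*)Y'^*$, then, using $\mathrm{tr}_1\big((Id_m\otimes X')Z(Id_m\otimes X')^*\big)=X'(\mathrm{tr}_1 Z)X'^*$, one gets
$$\widetilde{T_1}(Z)=(Y'\otimes Id_k)\,\widetilde T\big((Id_m\otimes X')Z(Id_m\otimes X')^*\big)\,(Y'\otimes Id_k)^*,$$
so $\widetilde{T_1}$ is equivalent to $\widetilde T$. Hence it suffices to treat a doubly stochastic $S:M_k\to M_m$. There $S(Id_k)=\tfrac{\sqrt k}{\sqrt m}Id_m$ and $S^*(Id_m)=\tfrac{\sqrt m}{\sqrt k}Id_k$, whence $\widetilde S(Id_{mk})=\sqrt{mk}\,Id_{mk}=\widetilde S^*(Id_{mk})$. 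Thus $(mk)^{-1/2}\widetilde S$ is doubly stochastic and, being a positive scalar multiple of $\widetilde S$, is equivalent to it. Combining, $T\sim S$ doubly stochastic yields $\widetilde T\sim\widetilde S\sim$ doubly stochastic.

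For the backward direction I would apply Theorem \ref{maintheorem1} to the square map $\widetilde T$ to obtain orthogonal projections $\widetilde V_i,\widetilde W_i\in M_{mk}$ satisfying its four conditions, and then show they may be taken in the product forms $\widetilde W_i=W_i\otimes Id_k$ and $\widetilde V_i=Id_m\otimes V_i$ with $W_i\in M_m$, $V_i\in M_k$ orthogonal projections. Since $\widetilde T$ and $\widetilde T^*$ are equivalent to doubly stochastic maps they have support, so by Lemma \ref{lemmaequivalentsupport} and condition $(4)$ the inclusion $\Im(\widetilde T(\widetilde V_i))\subseteq\Im(\widetilde W_i)$ is forced to be an equality of equal ranks; as $\widetilde T(\widetilde V_i)=T(\mathrm{tr}_1\widetilde V_i)\otimes Id_k$ has image of the form $U_i\otimes\mathbb C^k$, this gives $\widetilde W_i=W_i\otimes Id_k$ with $\Im(W_i)=U_i$, and symmetrically, using $\widetilde T^*(\widetilde W_i)=k\,Id_m\otimes T^*(W_i)$, it gives $\widetilde V_i=Id_m\otimes V_i$. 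Then condition $(1)$ for $\widetilde T$ yields $\mathbb C^m=\oplus\Im(W_i)$ and $\mathbb C^k=\oplus\Im(V_i)$; the image equalities give $\Im(T(V_i))=\Im(W_i)$, hence $T(V_iM_kV_i)\subseteq W_iM_mW_i$; the identity $\mathrm{rank}(\widetilde V_i)=\mathrm{rank}(\widetilde W_i)$ reads $m\,\mathrm{rank}(V_i)=k\,\mathrm{rank}(W_i)$, i.e.\ $\mathrm{rank}(W_i)/\mathrm{rank}(V_i)=m/k$; and testing condition $(3)$ for $\widetilde T$ on $\widetilde X=Id_m\otimes X$ (for which $\mathrm{rank}(\widetilde X)=m\,\mathrm{rank}(X)$ and $\widetilde T(\widetilde X)=m\,T(X)\otimes Id_k$) produces exactly $m\,\mathrm{rank}(X)<k\,\mathrm{rank}(T(X))$, which is condition $(3)$ for $T$. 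Thus $V_i,W_i$ verify the four conditions of Theorem \ref{maintheorem1} and $T$ is equivalent to a doubly stochastic map.

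The delicate point is the product-form reduction of the projections. Establishing $\widetilde W_i=W_i\otimes Id_k$ is clean, since it uses only condition $(2)$, support, and condition $(4)$; but the symmetric statement $\widetilde V_i=Id_m\otimes V_i$ requires the adjoint inclusion $\widetilde T^*(\widetilde W_iM_{mk}\widetilde W_i)\subseteq\widetilde V_iM_{mk}\widetilde V_i$, which is transparent only when $\{\widetilde W_i\}$ is orthogonal. I therefore expect the main obstacle to be arranging orthogonality of $\{\widetilde W_i\}$ (achievable by a product-form equivalence $Y'\otimes Id_k$) and then deducing, through a dimension count that invokes support of $\widetilde T^*$, that $\Im(\widetilde V_i)$ is itself of product form $\mathbb C^m\otimes(\bigcap_{l\neq i}\ker(T^*(W_l)))$; controlling this intersection-of-kernels bookkeeping so that the $V_i$ assemble into a direct sum decomposition of $\mathbb C^k$ is where the real work lies. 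Finally, I would drop the standing hypothesis that $T(Id),T^*(Id)$ are positive definite by using the general Theorem \ref{maintheorem} in place of \ref{maintheorem1}.
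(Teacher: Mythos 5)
Your forward direction is correct and is exactly the paper's (conjugate by $Id\otimes X'$ and $Y'\otimes Id$, rescale by $\frac{1}{\sqrt{km}}$). Your backward direction also follows the paper's strategy in outline — apply Theorem~\ref{maintheorem1} to $\widetilde{T}$, force the projections into product form, transfer the four conditions, and apply Theorem~\ref{maintheorem1} to $T$ — and your transfer of conditions $(1)$--$(4)$ once product form is known is right. But the step you yourself flag is a genuine gap, not just a delicate point: your derivation of $\widetilde{V}_i=Id_m\otimes V_i$ runs through $\widetilde{T}^*(\widetilde{W}_i)=k\,Id_m\otimes T^*(W_i)$, which needs the adjoint inclusion $\widetilde{T}^*(\widetilde{W}_iM_{mk}\widetilde{W}_i)\subseteq \widetilde{V}_iM_{mk}\widetilde{V}_i$, and that inclusion is only available when the families are mutually orthogonal. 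Theorem~\ref{maintheorem1} gives only direct sums $\mathbb{C}^{mk}=\bigoplus_i\Im(\widetilde{V}_i)$, and your proposed fix is circular: an input conjugation of product form $Id_m\otimes X'$ can orthogonalize the $\Im(\widetilde{V}_i)$ only if they are already of the form $\mathbb{C}^m\otimes(\cdot)$, which is what you are trying to prove, while a general conjugation by $X''\in M_{mk}$ destroys the relation between $\widetilde{T}$ and $T$. Even granting orthogonal $\widetilde{W}_i$'s, the pairing argument only places $\Im(\widetilde{T}^*(\widetilde{W}_j))$ inside $\bigcap_{i\neq j}\ker(\widetilde{V}_i)$, which for non-orthogonal $\widetilde{V}_i$ need not equal $\Im(\widetilde{V}_j)$ — hence the ``intersection-of-kernels bookkeeping'' you correctly identify as unresolved.

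The paper closes precisely this gap without ever invoking $\widetilde{T}^*$ or any orthogonality, and it gets the $V$'s first, not the $W$'s. Since $\widetilde{T}(Z)$ depends only on the partial trace $\sum_{i=1}^m B_{ii}$ of $Z=\sum_{i,j}e_ie_j^t\otimes B_{ij}$, one has $\widetilde{T}(\widetilde{V}_l)=\widetilde{T}\bigl(Id\otimes \frac{1}{m}\sum_{i=1}^m B^l_{ii}\bigr)$ automatically; support of $\widetilde{T}$ (Lemma~\ref{lemmaequivalentsupport}) together with $\text{rank}(\widetilde{T}(\widetilde{V}_l))=\text{rank}(\widetilde{W}_l)=\text{rank}(\widetilde{V}_l)$ forces $\text{rank}\bigl(Id\otimes\frac{1}{m}\sum_i B^l_{ii}\bigr)\leq\text{rank}(\widetilde{V}_l)$, while positive semidefiniteness of $\widetilde{V}_l$ gives the reverse containment $\Im(\widetilde{V}_l)\subseteq\Im\bigl(Id\otimes\frac{1}{m}\sum_i B^l_{ii}\bigr)$; the two images are therefore equal, so $\widetilde{V}_l=Id\otimes R_l$ outright, and then $\widetilde{W}_l=S_l\otimes Id$ falls out of $\Im(\widetilde{W}_l)=\Im(\widetilde{T}(\widetilde{V}_l))=\Im\bigl(T(\sum_i B^l_{ii})\bigr)\otimes\mathbb{C}^k$, which is the clean half you already had. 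Replacing your adjoint step by this partial-trace rank argument repairs the proof. Your closing suggestion to swap in Theorem~\ref{maintheorem} to drop positive definiteness of $T(Id),T^*(Id)$ is both unnecessary and unhelpful here: it would not produce the projection data the transfer needs, and the hypotheses of Theorem~\ref{maintheorem1} are automatic, since equivalence of $\widetilde{T}$ to a doubly stochastic map forces $\widetilde{T}(Id)$ and $\widetilde{T}^*(Id)$ — hence $T(Id)$ and $T^*(Id)$ — to be positive definite.
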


\begin{proof}
$(\Rightarrow)$ If there are invertible matrices $X'\in M_k,Y'\in M_m$ such that $Y'T(X'(\cdot)X'^*)Y'^*$ is doubly stochastic then $\frac{1}{\sqrt{km}}(Y'\otimes Id)\widetilde{T}((Id\otimes X')(\cdot)(Id\otimes X')^*)(Y'\otimes Id)^*$ is also doubly stochastic.\\

$(\Leftarrow)$ Let $\widetilde{T}: M_m\otimes M_k\rightarrow M_m\otimes M_k$ be equivalent to a doubly stochastic map. 

There are orthogonal projections $V_1,\ldots,V_s, W_1,\ldots,W_s$ in $M_{mk}\simeq M_m\otimes M_k$ such that 

\begin{enumerate}
\item $\mathbb{C}^{mk}=\bigoplus_{i=1}^s\Im(V_i)=\bigoplus_{i=1}^s\Im(W_i)$,
\item $\widetilde{T}(V_iM_{mk}V_i)\subset W_iM_{mk}W_i$, 
\item $\text{rank}(X)<\text{rank}(\widetilde{T}(X))$, if $X\in P_{mk} \cap V_iM_kV_i$ and $0<rank(X)<\text{rank}(V_i)$, 
\item $\text{rank}(W_i)=\text{rank}(V_i)$, for every $i$.\\
\end{enumerate}

By lemma \ref{lemmaequivalentsupport}, since $\widetilde{T}: M_m\otimes M_k\rightarrow M_m\otimes M_k$ has support  then  $\text{rank}(V_l)\leq\text{rank}(\widetilde{T}(V_l))$.

Now, since  $\Im(\widetilde{T}(V_l))\subset \Im(W_l)$ and $\text{rank}(W_l)=\text{rank}(V_l)$ then $\text{rank}(\widetilde{T}(V_l))=\text{rank}(W_l)$.\\

Let $V_l=\sum_{i,j=1}^m e_ie_j^t\otimes B^l_{ij}$, where $B^l_{ij} \in M_{k}$. 
Notice that $$\widetilde{T}(Id\otimes \frac{1}{m}(\sum_{i=1}^m B^l_{ii}))=T(\sum_{i=1}^m B^l_{ii})\otimes Id=\widetilde{T}(V_l).$$

Next,  $\text{rank}(Id\otimes \frac{1}{m}(\sum_{i=1}^m B^l_{ii}))\leq\text{rank}(V_l)$, otherwise  $$\text{rank}(Id\otimes \frac{1}{m}(\sum_{i=1}^m B^l_{ii}))>\text{rank}(V_l)=\text{rank}(\widetilde{T}(V_l))=\text{rank}(\widetilde{T}(Id\otimes \frac{1}{m}(\sum_{i=1}^m B^l_{ii}))),$$  contradicting lemma \ref{lemmaequivalentsupport}. 

On the other hand, since $V_l\in P_{mk}$ then $\Im(Id\otimes \frac{1}{m}(\sum_{i=1}^m B^l_{ii}) )\supset\Im (V_l)$.
Therefore, $$\Im(Id\otimes \frac{1}{m}(\sum_{i=1}^m B^l_{ii}) )=\Im (V_l).$$
 
Thus, for every $l\in\{1,\ldots,s\}$, there are orthogonal projections $R_l\in M_k$ and $S_l\in M_m$ such that \begin{center}
$V_l=Id\otimes R_l$ and $W_l=S_l\otimes Id$. 
\end{center}

Since $\mathbb{C}^{mk}=\bigoplus_{i=1}^s\Im(V_i)=\bigoplus_{i=1}^s\Im(W_i)$, by item $(1)$ above, then \begin{center}
$\mathbb{C}^k=\bigoplus_{i=1}^s\Im(R_i)$, $\mathbb{C}^m=\bigoplus_{i=1}^s\Im(S_i)$.
\end{center}

The definition of $\widetilde{T}$ and items $(2),(3)$  imply that, for every $l$, $T(R_lM_kR_l)\subset S_lM_mS_l$ and \begin{center}
$\text{rank}(Y)m<\text{rank}(T(Y))k$, if $Y\in P_k \cap R_lM_kR_l$ and $0<rank(Y)<\text{rank}(R_l)$.
\end{center}

Now, notice that $\frac{\text{rank}(S_l)}{\text{rank}(R_l)}=\frac{m}{k}$ (for every $l$), since \begin{center}
$ \text{rank}(R_l)m=\text{rank}(V_l)=\text{rank}(W_l)=\text{rank}(S_l)k.$ 
\end{center}

Therefore,  $\text{rank}(Y)\text{rank}(S_l)<\text{rank}(T(Y))\text{rank}(R_l)$, for every $Y\in P_k \cap R_lM_kR_l$ such that $0<\text{rank}(Y)<\text{rank}(R_l)$.\\

By theorem \ref{maintheorem1}, $T:M_k\rightarrow M_m$ is equivalent to a doubly stochastic map. 
\end{proof}
\begin{remark} Only at the end of the previous proof, our main theorem (\ref{maintheorem1}) is required: The existence of these subalgebras $R_iM_kR_i$ and $S_iM_mS_i$ implies the equivalence of $T:M_k\rightarrow M_m$ with a doubly stochastic map.  From now on, in order to determine whether a positive map is equivalent to a doubly stochastic map or not, we can use only square maps $(\widetilde{T}: M_m\otimes M_k\rightarrow M_m\otimes M_k)$.
\end{remark}

\begin{theorem}\label{maintheorem} A positive map $T:M_k\rightarrow M_m$ is equivalent to a doubly stochastic map if and only if there are invertible matrices  $X'\in M_k$, $Y'\in M_m$  such that $Y'T(X'(\cdot)X'^*)Y'^*$ has total support and $T(Id)$, $T^*(Id)$ are positive definite Hermitian matrices.
\end{theorem}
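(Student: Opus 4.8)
The plan is to derive both directions from Lemma \ref{keylemma}, applied with the trivial decomposition $s=1$, $V_1=Id_k$, $W_1=Id_m$, together with the fact recorded in Examples \ref{examples1} that every doubly stochastic map has total support. The only genuine bookkeeping is to transfer positive definiteness of $T(Id)$ and $T^*(Id)$ across the conjugations defining equivalence, for which I would first record the adjoint formula: if $\widehat{T}(X)=Y'T(X'XX'^*)Y'^*$ then a direct trace computation using $\langle A,B\rangle=tr(AB^*)$ gives $\widehat{T}^*(Z)=X'^*T^*(Y'^*ZY')X'$.

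For the direction $(\Leftarrow)$, suppose $\widehat{T}=Y'T(X'(\cdot)X'^*)Y'^*$ has total support and $T(Id),T^*(Id)$ are positive definite. First I would check that $\widehat{T}(Id)=Y'T(X'X'^*)Y'^*$ and $\widehat{T}^*(Id)=X'^*T^*(Y'^*Y')X'$ are positive definite: since $X'X'^*\geq\epsilon\,Id$ for some $\epsilon>0$ and $T$ is positive (so $A\geq B\geq0$ forces $T(A)\geq T(B)$), we get $T(X'X'^*)\geq\epsilon\,T(Id)>0$, and conjugation by the invertible $Y'$ preserves positive definiteness; the same argument applied to $T^*$ and $Y'^*Y'$ handles $\widehat{T}^*(Id)$. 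Thus $\widehat{T}$ satisfies the hypotheses of Lemma \ref{keylemma}. Taking $V_1=Id_k$, $W_1=Id_m$, the ratio condition $\frac{\text{rank}(W_1)}{\text{rank}(V_1)}=\frac{m}{k}$ holds automatically, and total support (which also implies support) lets me invoke both items of Lemma \ref{keylemma}: there is a limit point $T_1$ of the scaling sequence that is simultaneously doubly stochastic and of the form $T_1(X)=Y''\widehat{T}(X''XX''^*)Y''^*$. Hence $\widehat{T}$ is equivalent to the doubly stochastic $T_1$, and since equivalence is transitive and $\widehat{T}$ is equivalent to $T$, the map $T$ is equivalent to a doubly stochastic map.

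For the direction $(\Rightarrow)$, suppose $T$ is equivalent to a doubly stochastic map $S$, say $S=Y'T(X'(\cdot)X'^*)Y'^*$. Because $S$ is doubly stochastic, $S(Id)=\frac{\sqrt{k}}{\sqrt{m}}Id$ and $S^*(Id)=\frac{\sqrt{m}}{\sqrt{k}}Id$ are positive definite. From $S(Id)=Y'T(X'X'^*)Y'^*$ I obtain $T(X'X'^*)>0$, and then from $X'X'^*\leq M\,Id$ and positivity of $T$ I get $M\,T(Id)\geq T(X'X'^*)>0$, whence $T(Id)>0$; the adjoint formula $S^*(Id)=X'^*T^*(Y'^*Y')X'$ yields $T^*(Id)>0$ in the same way. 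Finally, $S$ is doubly stochastic, so by Examples \ref{examples1} it has total support, and $S$ is already of the required form $Y'T(X'(\cdot)X'^*)Y'^*$; thus the matrices $X',Y'$ witness the statement.

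The proof is short once Lemma \ref{keylemma} is available, so there is no single hard step; the only point requiring care is the positive-definiteness transfer, which rests on computing the adjoint of a conjugated map and on the operator-order inequalities $\epsilon\,Id\leq X'X'^*\leq M\,Id$ together with the monotonicity $A\geq B\geq0\Rightarrow T(A)\geq T(B)$ coming from positivity of $T$.
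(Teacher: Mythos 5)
Your proof is correct and takes essentially the same route as the paper: necessity of the total-support condition via the fact (Examples \ref{examples1}) that doubly stochastic maps have total support, and sufficiency by applying Lemma \ref{keylemma} with the trivial decomposition $V_1=Id$, $W_1=Id$ after transferring positive definiteness of $T(Id)$, $T^*(Id)$ to the conjugated map. The only difference is that you additionally spell out details the paper leaves implicit, namely the adjoint formula $\widehat{T}^*(Z)=X'^*T^*(Y'^*ZY')X'$ and the necessity of $T(Id),T^*(Id)>0$ via the order inequalities $\epsilon\,Id\leq X'X'^*\leq M\,Id$ and monotonicity of positive maps.
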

\begin{proof}  Since every doubly stochastic map has total support then the existence of $X'\in M_k$, $Y'\in M_m$ such that $Y'T(X'(\cdot)X'^*)Y'^*$ has total support is necessary. 

Now, since $T(Id), T^*(Id)$ are positive definite Hermitian matrices and $X',Y'$ are invertible  matrices then $Y'T(X'X'^*)Y'^*, X'^*T^*(Y'^*Y')X'$ are positive definite Hermitian matrices. Next, if $Y'T(X'(\cdot)X'^*)Y'^*$ has total support then there are invertible matrices $X''\in M_k$ and $Y''\in M_m$ such that $Y''Y'T(X'X''(\cdot)X''^*X'^*)Y'^*Y''^*$ is doubly stochastic, by lemma \ref{keylemma} (choose $V_1=Id$ and $W_1=Id$). 
\end{proof}

\vspace{0.3 cm}

\begin{corollary}\label{corollarycoprimetotalsupport} Let  $T: M_k\rightarrow M_m$ be a positive map with $k,m$  coprime. Then $T:M_k\rightarrow M_m$ is equivalent to a doubly stochastic map if and only if $T: M_k\rightarrow M_m$ has support and $T(Id)$, $T^*(Id)$ are positive definite Hermitian matrices.
\end{corollary}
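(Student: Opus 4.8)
The plan is to read off both implications from Theorem \ref{maintheorem}, using the coprimality to collapse the distinction between support and total support. Recall that for coprime $k$ and $m$, item $(2)$ of Examples \ref{examples1} (resting on item $(4)$ of Lemma \ref{lemmasupport}) tells us that a positive map $T:M_k\rightarrow M_m$ has support if and only if it has total support. This is the only feature special to the coprime case, and it is exactly what lets support --- rather than the harder-to-verify ``equivalent to a map with total support'' condition of Theorem \ref{maintheorem} --- serve as the criterion.

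For the $(\Leftarrow)$ implication, assume $T$ has support and that $T(Id),T^*(Id)$ are positive definite Hermitian matrices. By Examples \ref{examples1}$(2)$, $T$ has total support. Taking $X'=Id\in M_k$ and $Y'=Id\in M_m$, the map $Y'T(X'(\cdot)X'^*)Y'^*=T$ has total support, so Theorem \ref{maintheorem} immediately yields that $T$ is equivalent to a doubly stochastic map.

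For the $(\Rightarrow)$ implication, assume $T$ is equivalent to a doubly stochastic map. Theorem \ref{maintheorem} then supplies the positive definiteness of $T(Id)$ and $T^*(Id)$ together with invertible matrices $X',Y'$ for which $T_1:=Y'T(X'(\cdot)X'^*)Y'^*$ has total support, hence support. It remains to push support back from $T_1$ to $T$. I would do this through the rank characterization of item $(1)$ of Lemma \ref{lemmaequivalentsupport}: for $A\in P_k$ write $B=X'AX'^*$; since $X'$ and $Y'$ are invertible, $A\mapsto B$ is a rank-preserving bijection of $P_k$ onto itself with $\text{rank}(T_1(A))=\text{rank}(T(B))$, so the inequality $\text{rank}(A)m\leq\text{rank}(T_1(A))k$ for all $A\in P_k$ is equivalent to $\text{rank}(B)m\leq\text{rank}(T(B))k$ for all $B\in P_k$. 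By Lemma \ref{lemmaequivalentsupport}$(1)$ this means $T_1$ has support if and only if $T$ does, and therefore $T$ has support.

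The argument is almost entirely bookkeeping; the one point demanding attention is the invariance of support under equivalence in the $(\Rightarrow)$ direction, which I would isolate and settle exactly as above via Lemma \ref{lemmaequivalentsupport}$(1)$. Note that coprimality enters only in the $(\Leftarrow)$ direction, to identify support with total support so that Theorem \ref{maintheorem} applies; the forward direction, and in particular the transfer of support across an equivalence, holds for arbitrary $k$ and $m$.
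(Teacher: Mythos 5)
Your proposal is correct and follows essentially the same route as the paper: both directions are read off from Theorem \ref{maintheorem}, with coprimality invoked through Examples \ref{examples1}$(2)$ (item $(4)$ of Lemma \ref{lemmasupport}) to identify support with total support, and Lemma \ref{lemmaequivalentsupport}$(1)$ used to transfer support between $T$ and $Y'T(X'(\cdot)X'^*)Y'^*$. The only difference is cosmetic: you write out the rank-preservation argument behind that transfer, which the paper states in one line as an immediate consequence of Lemma \ref{lemmaequivalentsupport}.
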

\begin{proof}
By theorem \ref{maintheorem}, $T:M_k\rightarrow M_m$ is equivalent to a doubly stochastic map if and only if there are invertible matrices  $X'\in M_k$, $Y'\in M_m$  such that $Y'T(X'(\cdot)X'^*)Y'^*$ has total support and $T(Id)$, $T^*(Id)$ are positive definite Hermitian matrices. Since $k$ and $m$ are coprime then support is equivalent to total support (see \ref{examples1}). By lemma \ref{lemmaequivalentsupport}, $Y'T(X'(\cdot)X'^*)Y'^*$ has support if and only if $T:M_k\rightarrow M_m$ has support.
\end{proof}

\vspace{0.3 cm}

\begin{remark}\label{remark3}  Since every doubly stochastic map has support by \ref{examples1} then every positive map equivalent to a doubly stochastic map has also support by lemma \ref{lemmaequivalentsupport}. Thus, the condition of support is necessary for the equivalence of a positive map with a doubly stochastic one. However, the condition of total support is not necessary. For example, let $T:M_2\rightarrow M_2$ be $T(X)=RXR$, where $R=\left(\begin{array}{cc}
0 & 1 \\ 
1 & 1
\end{array} \right)$. This map is clearly equivalent to $Id:M_2\rightarrow M_2$, however it does not have total support.  Notice that if $\{e_1,e_2\}$ is the canonical basis of $\mathbb{C}^2$ then the matrix $(tr(T(e_ie_i^t)e_je_j^t))_{2\times 2}$ is equal to $R$, which does not have total support.
\end{remark}

\vspace{0.3 cm}

\section{The Filter Normal Form}

\vspace{0.3 cm}

 In this section,  we discuss the filter normal form for states that are not positive definite and we provide easy sufficient conditions for the existence of this normal form.

Let $A=\sum_{i=1}^nA_i\otimes B_i\in M_k\otimes M_m\simeq M_{km}$ be a positive semidefinite Hermitian matrix. Define the maps $F_A(X)=\sum_{i=1}^nA_itr(B_iX)$ and $G_A(X)=\sum_{i=1}^nB_itr(A_iX)$.  

We may assume without loss of generality that $A_i,B_i$ are Hermitian for every $i$, since $A$ is Hermitian. Notice that if $A_i,B_i$ are Hermitian for every $i$ then $G_A^*=F_A$ with respect to the trace inner product. Since $A$ is positive semidefinite then $F_A$ and $G_A$ are positive maps (Actually, $G_A(X^t)$ is completely positive by Choi theorem \cite{Choi}, since $A=\sum_{i,j=1}^ke_ie_j^t\otimes G_A((e_ie_j^t)^t)$, where $\{e_1,\ldots,e_k\}$ is the canonical basis of $\mathbb{C}^k)$.

\vspace{0.3 cm}

\begin{theorem} \label{thmequivfilter} Let $A=\sum_{i=1}^nA_i\otimes B_i\in M_k\otimes M_m\simeq M_{km}$ be a positive semidefinite Hermitian matrix such that $G_A(Id)\in M_m$ and $F_A(Id)\in M_k$ are positive definite Hermitian matrices. There are invertible matrices $X'\in M_k$, $Y'\in M_m$ such that $(X'\otimes Y')A(X'\otimes Y')^*=\sum_{i=1}^n C_i\otimes D_i$,   $C_1=\frac{Id}{\sqrt{k}}$, $D_1=\frac{Id}{\sqrt{m}}$ and $tr(C_iC_j)=tr(D_iD_j)=0$, for every $i\neq j$, if and only if there are invertible matrices $X''\in M_k$, $Y''\in M_m$ such that $Y''G_A(X''(\cdot)X''^*)Y''^*$ has total support.
\end{theorem}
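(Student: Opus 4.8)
The plan is to connect the filter normal form for the state $A$ directly with the doubly stochastic equivalence of $G_A$, and then invoke \thmref{maintheorem}. The key observation I would establish first is the precise dictionary between the two pictures: if $(X'\otimes Y')A(X'\otimes Y')^*=\sum_{i=1}^n C_i\otimes D_i$ with $C_1=\frac{Id}{\sqrt{k}}$, $D_1=\frac{Id}{\sqrt{m}}$ and $tr(C_iC_j)=tr(D_iD_j)=0$ for $i\neq j$, then I want to compute $G_{(X'\otimes Y')A(X'\otimes Y')^*}$ in terms of $G_A$ and the matrices $X',Y'$. A direct computation using $A=\sum_i A_i\otimes B_i$ should give that conjugating $A$ by $X'\otimes Y'$ corresponds to replacing $G_A(X)=\sum_i B_i\,tr(A_iX)$ by the map $X\mapsto Y'G_A(X'^*XX')Y'^*$ (up to transposes/conjugates in $X'$, which I would track carefully). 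This is the translation between ``scaling the state'' and ``scaling the positive map.''

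Given that dictionary, I would argue the forward direction as follows. Suppose $A$ can be put in the filter normal form, so $(X'\otimes Y')A(X'\otimes Y')^*=\sum_{i=1}^n C_i\otimes D_i$ with the stated orthogonality relations and $C_1,D_1$ proportional to the identities. The orthogonality $tr(C_iC_j)=tr(D_iD_j)=0$ for $i\neq j$ means the $\{C_i\}$ and $\{D_i\}$ are orthogonal families with $C_1,D_1$ the scaled identities, and the key point is that the map $G$ of this normalized state sends $\frac{Id}{\sqrt{k}}$ to $\frac{Id}{\sqrt{m}}$ and its adjoint sends $\frac{Id}{\sqrt{m}}$ to $\frac{Id}{\sqrt{k}}$; that is, the normalized $G$ is \emph{doubly stochastic}. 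Concretely, $G(\tfrac{Id}{\sqrt k})=\sum_i D_i\,tr(C_i\tfrac{Id}{\sqrt k})$, and since $tr(C_iC_1)=0$ for $i\neq 1$ while $C_1=\frac{Id}{\sqrt k}$, only the $i=1$ term survives and yields $D_1=\frac{Id}{\sqrt m}$; the adjoint condition is symmetric. So the filter normal form for $A$ is \emph{equivalent} to $G_A$ being equivalent to a doubly stochastic map, via the dictionary above. Then \thmref{maintheorem} immediately gives the existence of $X'',Y''$ making $Y''G_A(X''(\cdot)X''^*)Y''^*$ have total support, since equivalence to doubly stochastic is the same as equivalence to a total-support map there (using that $G_A(Id)=G_A^*(Id)^{\text{-role}}$ are positive definite, which is exactly the hypothesis $G_A(Id),F_A(Id)$ positive definite).

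For the converse I would reverse this reasoning: if some $Y''G_A(X''(\cdot)X''^*)Y''^*$ has total support, then by \thmref{maintheorem} $G_A$ is equivalent to a doubly stochastic map $S$, say $S(X)=Y'G_A(X'^*XX')Y'^*$ is doubly stochastic for suitable invertible $X',Y'$. Transporting back through the dictionary, $S=G_{\widehat A}$ where $\widehat A=(X'\otimes Y')A(X'\otimes Y')^*$ (again being careful with transposes). Writing $\widehat A=\sum_i C_i\otimes D_i$ with $C_i,D_i$ Hermitian, the doubly stochastic conditions $S(\tfrac{Id}{\sqrt k})=\tfrac{Id}{\sqrt m}$ and $S^*(\tfrac{Id}{\sqrt m})=\tfrac{Id}{\sqrt k}$ translate into relations on the $\{C_i\},\{D_i\}$. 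After a further orthogonalization of the tensor decomposition (Gram--Schmidt on the $C_i$ in the trace inner product, correspondingly adjusting the $D_i$), I can arrange $C_1=\frac{Id}{\sqrt k}$, $D_1=\frac{Id}{\sqrt m}$ and $tr(C_iC_j)=tr(D_iD_j)=0$ for $i\neq j$, which is precisely the filter normal form.

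The main obstacle I anticipate is getting the dictionary exactly right, including the placement of transposes, complex conjugates, and adjoints: the map $G_A$ uses $tr(A_iX)$ rather than $tr(A_iX^*)$, and conjugation $A\mapsto (X'\otimes Y')A(X'\otimes Y')^*$ produces $X'A_iX'^*$ on the first factor, so verifying that this corresponds cleanly to $X\mapsto Y'G_A(X'^*XX')Y'^*$ (and not some transpose-twisted variant) requires care. A secondary technical point is the final orthogonalization step: I must check that making the $C_i$ orthogonal while preserving $C_1=\frac{Id}{\sqrt k}$ does not disturb the companion condition on the $D_i$, which works because the doubly stochastic relations force the ``cross terms'' to pair up consistently between the two tensor factors. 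Once the dictionary and this bookkeeping are pinned down, both implications are short applications of \thmref{maintheorem} together with \lemref{lemmaequivalentsupport} and the total-support characterization.
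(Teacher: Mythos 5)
Your proposal follows the paper's proof essentially verbatim: the paper establishes exactly your dictionary, namely $G_{(X'\otimes Y')A(X'\otimes Y')^*}(X)=Y'G_A(X'^*XX')Y'^*$, observes that the filter normal form for $A$ is equivalent to this map being doubly stochastic (using $G^*_{\widehat A}=F_{\widehat A}$ for the Hermitian scaled state $\widehat A$), and then concludes by a single application of theorem~\ref{maintheorem}. The one caveat is your final orthogonalization step: plain Gram--Schmidt on the $C_i$ will not keep the companion $D_i$ orthogonal; the clean fix is to take an orthonormal eigenbasis of $G^*_{\widehat A}G_{\widehat A}$ on Hermitian matrices containing $\frac{Id}{\sqrt{k}}$ (an eigenvector precisely because $G_{\widehat A}$ is doubly stochastic), i.e.\ the operator Schmidt decomposition --- a detail the paper likewise leaves implicit.
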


\begin{proof} The existence of these matrices $X',Y'$  is equivalent to \begin{center}
$G_{(X'\otimes Y')A(X'\otimes Y')^*}(\frac{Id}{\sqrt{k}})=\frac{Id}{\sqrt{m}}$ and $F_{(X'\otimes Y')A(X'\otimes Y')^*}(\frac{Id}{\sqrt{m}})=\frac{Id}{\sqrt{k}}$.
\end{center} 

Since  $(X'\otimes Y')A(X'\otimes Y')^*$ is Hermitian then $G^*_{(X'\otimes Y')A(X'\otimes Y')^*}=F_{(X'\otimes Y')A(X'\otimes Y')^*}$. 

Therefore the existence of $X',Y'$  is equivalent to $G_{(X'\otimes Y')A(X'\otimes Y')^*}(X)=Y'G_A(X'^*XX')Y'^*$ being a doubly stochastic map, which is equivalent to the existence of invertible matrices $X''\in M_k$, $Y''\in M_m$ such that $Y''G_A(X''(\cdot)X''^*)Y''^*$ has total support, by theorem \ref{maintheorem}.
\end{proof}

\vspace{0.3 cm}

\begin{corollary}\label{corollary0} Let $A=\sum_{i=1}^nA_i\otimes B_i\in M_k\otimes M_m\simeq M_{km}$ be a positive semidefinite Hermitian matrix such that $G_A(Id)\in M_m$ and $F_A(Id)\in M_k$ are positive definite Hermitian matrices. Let $k$ and $m$ be coprime. There are invertible matrices $X'\in M_k$, $Y'\in M_m$ such that $(X'\otimes Y')A(X'\otimes Y')^*=\sum_{i=1}^n C_i\otimes D_i$,   $C_1=\frac{Id}{\sqrt{k}}$, $D_1=\frac{Id}{\sqrt{m}}$ and $tr(C_iC_j)=tr(D_iD_j)=0$, for every $i\neq j$, if and only if $G_A: M_k\rightarrow M_m$ has support.
\end{corollary}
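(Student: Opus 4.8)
The plan is to reduce the statement directly to Theorem~\ref{thmequivfilter} and then exploit the coprimality of $k$ and $m$ to replace ``total support'' by ``support''. By Theorem~\ref{thmequivfilter}, under the standing hypotheses that $G_A(Id)$ and $F_A(Id)$ are positive definite Hermitian matrices, the desired invertible matrices $X'\in M_k$, $Y'\in M_m$ producing the filter normal form exist if and only if there are invertible matrices $X''\in M_k$, $Y''\in M_m$ such that $Y''G_A(X''(\cdot)X''^*)Y''^*$ has total support. Thus the entire corollary reduces to showing that, when $k$ and $m$ are coprime, this last condition is equivalent to $G_A$ itself having support.

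Next I would invoke item $(2)$ of Examples~\ref{examples1}: since $k$ and $m$ are coprime, a positive map $M_k\rightarrow M_m$ has total support if and only if it has support. Applying this to $Y''G_A(X''(\cdot)X''^*)Y''^*$ turns the condition above into the existence of invertible $X''$, $Y''$ with $Y''G_A(X''(\cdot)X''^*)Y''^*$ having support. I would then observe that support is invariant under equivalence, so the choice of $X''$, $Y''$ is immaterial. This is the crux of the argument, and it follows from the rank characterization in item $(1)$ of Lemma~\ref{lemmaequivalentsupport}: a positive map $T$ has support precisely when $\text{rank}(A)m\leq \text{rank}(T(A))k$ for every $A\in P_k$. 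Conjugation by the invertible matrices $X''$, $Y''$ preserves the relevant ranks, namely $\text{rank}(X''AX''^*)=\text{rank}(A)$ and $\text{rank}(Y''T(\cdot)Y''^*)=\text{rank}(T(\cdot))$, while $A\mapsto X''AX''^*$ is a bijection of $P_k$; hence the rank inequality holds for $Y''G_A(X''(\cdot)X''^*)Y''^*$ if and only if it holds for $G_A$. Chaining these equivalences together yields the corollary.

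I do not expect any serious obstacle here, since all the substance is already packaged in the earlier results; the only point requiring care is the equivalence-invariance of support, which is the same elementary observation used in the proof of Corollary~\ref{corollarycoprimetotalsupport}. The argument is therefore essentially a three-line bookkeeping of Theorem~\ref{thmequivfilter}, Examples~\ref{examples1}$(2)$, and Lemma~\ref{lemmaequivalentsupport}$(1)$.
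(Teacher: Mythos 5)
Your proposal is correct and follows essentially the same route as the paper's own proof: reduce via Theorem~\ref{thmequivfilter} to the total-support condition, use Examples~\ref{examples1}$(2)$ to trade total support for support under coprimality, and use Lemma~\ref{lemmaequivalentsupport} to see that support is invariant under equivalence. Your explicit justification of that last invariance via the rank characterization (conjugation by invertible $X''$, $Y''$ preserves the ranks and $A\mapsto X''AX''^*$ is a bijection of $P_k$) is exactly the content the paper leaves implicit in its citation of Lemma~\ref{lemmaequivalentsupport}.
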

\begin{proof} By theorem \ref{thmequivfilter}, there are such $X'$ and $Y'$ if and only if there are invertible matrices $X''\in M_k$, $Y''\in M_m$ such that $Y''G_A(X''(\cdot)X''^*)Y''^*$ has total support. Since $k$ and $m$ are coprime then support and total support are equivalent (see \ref{examples1}). By lemma \ref{lemmaequivalentsupport}, $Y''G_A(X''(\cdot)X''^*)Y''^*$ has support if and only if $G_A:M_k\rightarrow M_m$ has support.
\end{proof}

\vspace{0.3 cm}

\begin{theorem} \label{theoremlowkernel} Let $A=\sum_{i=1}^nA_i\otimes B_i\in M_k\otimes M_m\simeq M_{km}$ be a positive semidefinite Hermitian matrix. If $k\neq m$ and $\dim(\ker(A))<\min\{k,m\}$ or if $k= m$ and $\dim(\ker(A))<k-1$ then there are invertible matrices $X'\in M_k$, $Y'\in M_m$ such that $(X'\otimes Y')A(X'\otimes Y')^*=\sum_{i=1}^n C_i\otimes D_i$,   $C_1=\frac{Id}{\sqrt{k}}$, $D_1=\frac{Id}{\sqrt{m}}$ and $tr(C_iC_j)=tr(D_iD_j)=0$, for every $i\neq j$.
\end{theorem}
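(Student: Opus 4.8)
The plan is to reduce everything to Theorem \ref{thmequivfilter}: it suffices to exhibit invertible $X''\in M_k$, $Y''\in M_m$ with $Y''G_A(X''(\cdot)X''^*)Y''^*$ of total support, together with the standing hypothesis of that theorem that $G_A(Id)$ and $F_A(Id)$ are positive definite. I would in fact take $X''=Id$, $Y''=Id$ and prove the stronger statement that $G_A$ itself already has total support; the kernel hypothesis is exactly what makes this possible without any preliminary scaling.

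The key observation I would establish first is a dictionary between the zero entries of the defining matrices of $G_A$ and the kernel of $A$. Fix orthonormal bases $\{v_1,\dots,v_k\}$ of $\mathbb{C}^k$ and $\{w_1,\dots,w_m\}$ of $\mathbb{C}^m$ and form the matrix $M=(tr(G_A(v_i\overline{v}_i^t)w_j\overline{w}_j^t))\in M_{k\times m}$. Expanding $G_A(X)=\sum_l B_l\,tr(A_lX)$ one checks directly, with the conjugation conventions of the paper, that $M_{ij}=(v_i\otimes w_j)^*A(v_i\otimes w_j)$. Since $A$ is positive semidefinite, $M_{ij}=0$ if and only if the product vector $v_i\otimes w_j$ lies in $\ker(A)$. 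The vectors $\{v_i\otimes w_j\}_{i,j}$ form an orthonormal basis of $\mathbb{C}^{km}$, so those lying in $\ker(A)$ are linearly independent; consequently $M$ has at most $\dim(\ker(A))$ zero entries, and this bound is uniform over all choices of bases.

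With this bound in hand the result follows quickly. If $k\neq m$, then $\dim(\ker(A))<\min\{k,m\}$ forces $M$ to have fewer than $\min\{k,m\}$ zero entries, so $M$ has total support by item $(5)$ of Lemma \ref{lemmasupport}; if $k=m$, then $\dim(\ker(A))<k-1$ forces fewer than $k-1$ zero entries and item $(6)$ applies. As this holds for every pair of orthonormal bases, $G_A$ has total support by Definition \ref{definitionsupportformaps}. The same zero-count also disposes of the positive-definiteness hypothesis: a nontrivial null vector of $G_A(Id)$ (resp. of $F_A(Id)=G_A^*(Id)$) would produce, for a suitable choice of $w_j$ (resp. $v_i$), an identically zero column (resp. row) of $M$, i.e. $k$ (resp. $m$) zero entries, which exceeds the allowed count in both regimes; hence $G_A(Id)$ and $F_A(Id)$ are positive definite. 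Theorem \ref{thmequivfilter} (applied with $X''=Y''=Id$) then yields the desired invertible $X',Y'$.

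The individual steps are all short; the only points requiring care are the identity $M_{ij}=(v_i\otimes w_j)^*A(v_i\otimes w_j)$, which is a bookkeeping check, and the observation that the zero-entry bound must be verified uniformly over all orthonormal bases, since that uniformity is precisely what Definition \ref{definitionsupportformaps} demands. I do not anticipate a genuine obstacle here; the substance of the argument is the clean translation of ``$A$ has small kernel'' into ``$M$ has few zeros,'' after which Lemma \ref{lemmasupport} and Theorem \ref{thmequivfilter} do the work.
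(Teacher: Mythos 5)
Your proposal is correct and follows essentially the same route as the paper: you translate $\dim(\ker(A))$ into a bound on the number of zero entries of $(tr(G_A(v_i\overline{v}_i^t)w_j\overline{w}_j^t))\in M_{k\times m}$ via the identity $tr(G_A(v_i\overline{v}_i^t)w_j\overline{w}_j^t)=tr(A(v_i\overline{v}_i^t\otimes w_j\overline{w}_j^t))$ and the orthonormality of the product vectors, apply items $(5)$ and $(6)$ of Lemma \ref{lemmasupport} to get total support uniformly over bases, and conclude by Theorem \ref{thmequivfilter}. The only cosmetic difference is that you deduce positive definiteness of $G_A(Id)$ and $F_A(Id)$ from the same zero-count (a null vector would force an identically zero column or row, exceeding the bound), whereas the paper bounds $tr(A(Id\otimes v\overline{v}^t))$ below by the sum of the $k$ smallest eigenvalues of $A$; both verifications are valid.
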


\begin{proof} Let  $ w\in \mathbb{C}^k,\ v\in\mathbb{C}^m$ be unit vectors. Notice that $tr(G_A(Id)v\overline{v}^t)=tr(A(Id\otimes v\overline{v}^t))>0$ and $tr(w\overline{w}^tF_A(Id))=tr(A(w\overline{w}^t\otimes Id))>0$, since $tr(A(Id\otimes v\overline{v}^t))$ is bigger or equal to the sum of the $k$ smallest  eigenvalues of $A$ and $tr(A(w\overline{w}^t\otimes Id))$ is bigger or equal to the sum of the $m$ smallest eigenvalues of  $A$. Therefore $G_A(Id)$ and $F_A(Id)$ are positive definite Hermitian matrices.

Next, let $\{v_1,\ldots,v_k\}\subset \mathbb{C}^k$ and $\{w_1,\ldots, w_m\}\subset \mathbb{C}^m$ be any orthonormal bases. Consider the matrix $(tr(G_A(v_i\overline{v}_i^t)w_j\overline{w}_j^t))\in M_{k\times m}$.

Since $tr(A(v_i\overline{v}_i^t\otimes w_j\overline{w}_j^t))=tr(G_A(v_i\overline{v}_i^t)w_j\overline{w}_j^t)$ then the cardinality of $\{(i,j)|  \ tr(G_A(v_i\overline{v}_i^t)w_j\overline{w}_j^t)= 0\}$ is smaller than $\min\{k,m\}$, if $k\neq m$, or smaller than $k-1$, if $k=m$. By lemma \ref{lemmasupport}, the matrix  $(tr(G_A(v_i\overline{v}_i^t)w_j\overline{w}_j^t))\in M_{k\times m}$ has total support. Therefore $G_A: M_k\rightarrow M_m$ has total support. By theorem \ref{thmequivfilter}, the result follows.
\end{proof}

\vspace{0.3 cm}

\begin{theorem}
\label{theoremlowkernel2} Let $A=\sum_{i=1}^nA_i\otimes B_i\in M_k\otimes M_m\simeq M_{km}$ be a positive semidefinite Hermitian matrix such that $G_A(Id)$ and $F_A(Id)$ are positive definite Hermitian matrices. If  $\dim(\ker(A))<\frac{\max\{k,m\}}{\min\{k,m\}}$ then there are invertible matrices $X'\in M_k$, $Y'\in M_m$ such that $(X'\otimes Y')A(X'\otimes Y')^*=\sum_{i=1}^n C_i\otimes D_i$,   $C_1=\frac{Id}{\sqrt{k}}$, $D_1=\frac{Id}{\sqrt{m}}$ and $tr(C_iC_j)=tr(D_iD_j)=0$, for every $i\neq j$.
\end{theorem}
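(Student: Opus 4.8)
The plan is to prove that $G_A:M_k\rightarrow M_m$ has total support and then invoke theorem \ref{thmequivfilter}, following the same route as the proof of theorem \ref{theoremlowkernel} but replacing items $(5)$ and $(6)$ of lemma \ref{lemmasupport} by item $(7)$. First I would fix arbitrary orthonormal bases $\{v_1,\ldots,v_k\}\subset\mathbb{C}^k$ and $\{w_1,\ldots,w_m\}\subset\mathbb{C}^m$ and consider the matrix $N=(tr(G_A(v_i\overline{v}_i^t)w_j\overline{w}_j^t))\in M_{k\times m}$. The starting observation is the identity $tr(G_A(v_i\overline{v}_i^t)w_j\overline{w}_j^t)=tr(A(v_i\overline{v}_i^t\otimes w_j\overline{w}_j^t))=(v_i\otimes w_j)^*A(v_i\otimes w_j)$, so that every entry of $N$ is non-negative because $A\in P_{km}$.

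Next I would bound the number of zero entries. Since $A$ is positive semidefinite, $(v_i\otimes w_j)^*A(v_i\otimes w_j)=0$ holds if and only if $v_i\otimes w_j\in\ker(A)$. As $\{v_i\otimes w_j\}_{i,j}$ is an orthonormal basis of $\mathbb{C}^{km}$, those vectors among them lying in $\ker(A)$ form an orthonormal subset of $\ker(A)$, so there are at most $\dim(\ker(A))$ of them. Hence the cardinality of $\{(i,j)\mid N_{ij}=0\}$ is at most $\dim(\ker(A))<\frac{\max\{k,m\}}{\min\{k,m\}}$.

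The step that actually uses the standing hypothesis, and which I expect to be the only genuine point to check, is that $N$ has no identically zero row and no identically zero column. For a column $j_0$, summing over $i$ and using $\sum_i v_i\overline{v}_i^t=Id$ gives $\sum_i N_{ij_0}=tr(G_A(Id)w_{j_0}\overline{w}_{j_0}^t)=\overline{w}_{j_0}^tG_A(Id)w_{j_0}>0$, since $G_A(Id)$ is positive definite; as the entries $N_{ij_0}$ are non-negative, they cannot all vanish. For a row $i_0$, I would pass to the adjoint via $G_A^*=F_A$, writing $N_{ij}=tr(v_i\overline{v}_i^tF_A(w_j\overline{w}_j^t))$, and sum over $j$ using $\sum_j w_j\overline{w}_j^t=Id$ to obtain $\sum_j N_{i_0j}=\overline{v}_{i_0}^tF_A(Id)v_{i_0}>0$ from the positive definiteness of $F_A(Id)$; again non-negativity forbids a zero row. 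This is precisely where the assumptions on $G_A(Id)$ and $F_A(Id)$ enter, and is exactly what was not needed in theorem \ref{theoremlowkernel}, since there the stronger bounds on the zero count already forced total support through items $(5)$ and $(6)$.

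Having checked that $N$ has no identically zero row or column and strictly fewer than $\frac{\max\{k,m\}}{\min\{k,m\}}$ zero entries, item $(7)$ of lemma \ref{lemmasupport} shows that $N$ has total support. Since the orthonormal bases were arbitrary, $G_A:M_k\rightarrow M_m$ has total support by definition \ref{definitionsupportformaps}, and theorem \ref{thmequivfilter} then furnishes the invertible matrices $X'\in M_k$, $Y'\in M_m$ putting $A$ in the filter normal form.
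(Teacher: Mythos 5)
Your proposal is correct and follows essentially the same route as the paper's own proof: verifying that the matrix $(tr(G_A(v_i\overline{v}_i^t)w_j\overline{w}_j^t))_{k\times m}$ has no zero row or column (via positive definiteness of $G_A(Id)$ and $F_A(Id)$) and fewer than $\frac{\max\{k,m\}}{\min\{k,m\}}$ zero entries, then applying item $(7)$ of lemma \ref{lemmasupport} and theorem \ref{thmequivfilter}. Your explicit justification of the zero-entry bound via the orthonormal vectors $v_i\otimes w_j$ lying in $\ker(A)$ is a detail the paper leaves implicit, but the argument is the same.
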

\begin{proof} Let $\{v_1,\ldots,v_k\}\subset \mathbb{C}^k$ and $\{w_1,\ldots, w_m\}\subset \mathbb{C}^m$ be any orthonormal bases.
Notice that $tr(G_A(v_i\overline{v_i}^t)w_j\overline{w_j}^t)=tr(A(v_i\overline{v_i}^t\otimes w_j\overline{w_j}^t))=tr(v_i\overline{v_i}^tF_A(w_j\overline{w_j}^t))$.

Since $G_A(Id)$ and $F_A(Id)$ are positive definite then the matrix $(tr(G_A(v_i\overline{v_i}^t)w_j\overline{w_j}^t))_{k\times m}$ has no row or column identically zero. 

Next, since $\dim(\ker(A))<\frac{\max\{k,m\}}{\min\{k,m\}}$ then the cardinality of $\{(i,j)|\ tr(G_A(v_i\overline{v_i}^t)w_j\overline{w_j}^t) = 0 \} <\frac{\max\{k,m\}}{\min\{k,m\}}$. Thus, by item 7 of lemma \ref{lemmasupport}, the matrix $(tr(G_A(v_i\overline{v_i}^t)w_j\overline{w_j}^t))_{k\times m}$ has total support. Therefore, $G_A:M_k\rightarrow M_m$ has total support. By lemma \ref{thmequivfilter}, the result follows.
\end{proof}

\section*{Summary and Conclusion}

The search for canonical forms with applications in Quantum Information Theory  is certainly important. In \cite{Chen}, the authors found a canonical form for pure states. In \cite{leinaas, filternormalform}, the filter normal form was obtained for positive definite mixed states.
Here, we provided a necessary and sufficient condition for mixed states in $M_k\otimes M_m$ to be put in the filter normal form. In order to do so, we extended the Sinkhorn and Knopp ideas of support and total support to positive maps and we generalised their result for positive maps $T:M_k\rightarrow M_m$. We showed that a state $A=\sum_{i}^nA_i\otimes B_i\in M_k\otimes M_m$ can be put in the filter normal form if and only if the positive map $G_A: M_k\rightarrow M_m$, defined by $G_A(X)=\sum_{i=1}^nB_itr(A_iX)$, is equivalent to a positive map with total support. When $k$ and $m$ are coprime, the theorem is simpler: A state $A\in M_k\otimes M_m$ with $k$ and $m$ coprime, can be put in the filter normal form if and only if the positive map $G_A: M_k\rightarrow M_m$ has support. We can determine whether a positive map has support or not with a limit.
For the general case, some easy sufficient conditions were provided to guarantee that $A\in M_k\otimes M_m$ can be put in the filter normal form: If $k=m$ and $\dim(\ker(A))<k-1$, or if $k\neq m$ and $\dim(\ker(A))<\min\{k,m\}$, or  if $G_A(Id)$ and $G_A^*(Id)$ are invertible matrices and $\dim(\ker(A))<\frac{\max\{k,m\}}{\min\{k,m\}}$.
It is a surprise to see that the original ideas of Sinkhorn and Knopp  are still very useful in order to generalise their result to positive maps.


\begin{bibdiv}
\begin{biblist}

\bib{Bapat}{article}{
   author={Bapat, Ravindra},
   title={$D_1AD_2$ Theorems for Multidimensional Matrices},
   journal={Linear Algebra and its Applications},
   volume={48},
   year={1982},
   pages={437-442},
}

\bib{Bhatia1}{book}{
  title={Positive definite matrices},
  author={Bhatia, Rajendra},
  year={2009},
  publisher={Princeton university press}
}

\bib{Brualdi}{article}{
   author={Brualdi, Richard A.},
   title={The $DAD$ theorem for arbitrary row sums},
   journal={Proc. Amer. Math. Soc.},
   volume={45},
   year={1974},
   pages={189-194},
}

\bib{cariello1}{article}{
  title={Separability for weakly irreducible matrices},
  author={Cariello, D.},
  journal={Quantum Information \& Computation},
  volume={14},
  number={15-16},
  pages={1308--1337},
  year={2014}
}

\bib{Chen}{article}{
   author={Chen, L.},
   author={Djokovic, D. Z.},
   author={Grassl, M.},
   author={Zeng, B.},
   title={Canonical form of three-fermion pure-states with six single particle states},
   journal={Journal of Mathematical Physics},
   volume={55},
   year={2014},
   pages={082203},
}

\bib{Choi}{article}{
   author={Choi, M.D.},
   title={Completely Positive Linear Maps on Complex Matrices},
   journal={Linear Algebra and its Applications},
   volume={10},
   year={1975},
   pages={285-290},
}

\bib{garg2}{article}{
  title={Algorithmic and optimization aspects of Brascamp-Lieb inequalities, via Operator Scaling},
  author={Garg, Ankit},
  author={Gurvits, Leonid}, 
  author={Oliveira, Rafael}, 
  author={Wigderson, Avi}
  journal={arXiv:1607.06711v3},
}

\bib{Georgiou}{article}{
   author={Georgiou, T. T.},
   author={Pavon, M.}
   title={Positive contraction mappings for classical and quantum Schr\"odinger systems},
   journal={Journal of Mathematical Physics},
   volume={56(3)},
   year={2015},
   pages={033301},
}

\bib{Git}{article}{
   author={Gittsovich, O.},
   author={G\"uhne, O.}
   author={Hyllus, P.}
   author={Eisert, J.}
   title={Unifying several separability conditions using the covariance matrix criterion},
   journal={Phys. Rev. A},
   volume={78},
   year={2008},
   pages={052319},
}

\bib{gurvits2003}{article}{
  title={Classical deterministic complexity of Edmonds' Problem and quantum entanglement},
  author={Gurvits, Leonid},
  booktitle={Proceedings of the thirty-fifth annual ACM symposium on Theory of computing},
  pages={10--19},
  year={2003},
  organization={ACM}
}

\bib{gurvits2004}{article}{
  title={Classical complexity and quantum entanglement},
  author={Gurvits, Leonid},
  journal={Journal of Computer and System Sciences},
  volume={69},
  number={3},
  pages={448--484},
  year={2004},
  publisher={Elsevier}
}

\bib{Landau}{article}{
   author={Landau, L. J.},
   author={Streater, R. F.},
   title={On Birkhoff's Theorem for Doubly Stochastic Completely Positive Maps of Matrix Algebras},
   journal={Linear Algebra and its Applications},
   volume={193},
   year={1993},
   pages={107-127},
}

\bib{leinaas}{article}{
   author={Leinaas, J.M.},
   author={Myrheim, J.},
   author={Ovrum, E.},
   title={Geometrical aspects of entanglement},
   journal={Phys. Rev. A},
   volume={74},
   issue={3},
   year={2006},
   pages={012313},
}

\bib{Marcus}{book}{
  title={A Survey of matrix theory and matrix inequalities},
  author={Marcus, Marvin}
  author={Minc, Henryk},
  volume={14},
  year={1992},
  publisher={Courier Corporation}
}

\bib{Sinkhorn2}{article}{
  title={Diagonal equivalence to matrices with prescribed row and column sums. II}
  author={Sinkhorn, Richard}
  journal={Proc. Amer. Math. Soc.},
  volume={45},
  number={2},
  pages={195--198},
  year={1974},
}

\bib{Sinkhorn}{article}{
  title={Concerning nonnegative matrices and doubly stochastic matrices},
  author={Sinkhorn, Richard}
  author={Knopp, Paul},
  journal={Pacific Journal of Mathematics},
  volume={21},
  number={2},
  pages={343--348},
  year={1967},
  publisher={Oxford University Press}
}

\bib{filternormalform}{article}{
   author={Verstraete, F.}
   author={Dehaene, J.}
   author={De Moor, B.}
   title={Normal forms and entanglement measures for multipartite quantum states},
   journal={Phys. Rev. A },
   volume={68},
   date={2003},
   pages={012103},
}

\end{biblist}
\end{bibdiv}
\end{document}